\newtheorem{theorem}{Theorem}[section]
\newtheorem{lemma}[theorem]{Lemma}
\newtheorem{proposition}[theorem]{Proposition}
\newtheorem{corollary}[theorem]{Corollary}
\newtheoremstyle{case}{}{}{}{}{}{:}{ }{}
\theoremstyle{case}
\theoremstyle{definition}
\newtheorem{definition}[theorem]{Definition}
\newtheorem{example}[theorem]{Example}
\theoremstyle{remark}
\newtheorem{remark}[theorem]{Remark}
\def\Fq{{\mathbb F}_q}
\def\a{{\alpha}}
\def\FF{{\mathbb F}}
\def\Pml{{\mathbb P}^{{m\choose \ell}-1}}
\newcommand{\Ev}{\operatorname{Ev}}
\newcommand{\supp}{\mathrm{Supp}}
\def \Glm {G_{\ell, m}}
\def \glm {G_{2, m}}
\def \A {\mathbf{A}_\bullet}
\def \Ilm {\mathbb{I}(\ell, m)}
\def \Olm {\Omega_{\a}(\ell, m)}
\def \olm {\Omega_{\a}(2, m)}
\def \OA {\Omega(\A)}
\def \Liwpq {\mathcal L^{i}_{\mathcal W}(P, Q)}
\def \Kiwpq {\mathcal K^{i}_{\mathcal W}(P, Q)}
\def \C {C(\ell, m)}
\def \Calm {C_\a(\ell, m)}
\def \Cperp {C(\ell, m)^{\perp}}
\def \Calmp {C_\a(\ell, m)^{\perp}}
\def\calm {C_\a(2,m)}
\def \calmp {C_\a(2, m)^{\perp}}
\def \P {\overline{P}}
\def \Pclj {\overline{P}^{(j)}}
\def \Picl {\overline{P}^{(i)}}
\def \Piicl {\overline{P}^{(i-1)}}
\def \Ticl {\overline{T}^{(i)}}
\def \Ricl {\overline{R}^{(i)}}
\begin{document}
	
	\title[Decoding Certain Schubert Codes]{Majority Logic Decoding for Certain Schubert\\ Codes Using Lines in Schubert Varieties}
	\author{Prasant Singh}
\address{Department of Mathematics and Statistics,\newline \indent
	The Arctic Univerty of Norway(UiT),\newline \indent
	Hansine Hansens veg 18, 9019 Tromsø, Norway.}
\email{psinghprasant@gmail.com}

	\date{\today}
	
\begin{abstract}

In this article, we consider Schubert codes, linear codes associated to Schubert varieties, and discuss minimum weight codewords for dual Schubert codes. The notion of lines in Schubert varieties is looked closely at, and it has been proved that the supports of the minimum weight codewords of the dual Schubert codes lie on lines and any three points on a line in Schubert variety correspond to the support of some minimum weight parity check for the Schubert code. We use these lines in Schubert varieties to construct orthogonal parity checks for certain Schubert codes and use them for majority logic decoding. In some special cases, we can correct approximately up to $\lfloor (d-1)/2\rfloor$ many errors where $d$ is the minimum distance of the code.

\end{abstract}

\maketitle
\section{Introduction}
Let $q$ be a prime power and let $\Fq$ be a finite field with $q$ elements. Let $\ell$ and $m $ be two positive integers satisfying $1\le \ell\le m$ and let $\Glm$ be the  set of $\Fq$-rational points of the Grassmann variety of all $\ell$- planes in an $m$-dimensional space over the algebraically closed field $\overline{\mathbb{F}}_q$. It is well known that $\Glm$ can be identified with the set of all $\ell$ dimensional subspaces of $\mathbb{F}_q^m$ and it can be seen as an algebraic variety via the Pl\"ucker map. One can associate a linear code with every algebraic variety in a natural way \cite{TVN}. Therefore, it is natural to look at  codes associated to Grassmannians. Ryan \cite{Ryan, Ryan2} initiated the study of codes associated to Grassmannians over a binary field and Nogin \cite{Nogin} continued the study of these codes over any finite fields. The code associated to $\Glm$ is called a Grassmann code and is denoted by $\C$ and it has been proved that the Grassmann code $\C$ is an $[n, k, d]$ code where 
\begin{equation}
\label{eq: Grassmann}
n= {m\brack\ell},\quad k=\binom{m}{\ell},\quad\text{ and, }\quad d=q^{\ell(m-\ell)}.
\end{equation}
The Grassmann codes have been studied by several authors. For example, Nogin  determined the weight distribution of the code $C(2, m)$ in \cite{Nogin}, and that of $C(3, 6)$ in  \cite{Nogin1}. Furthermore, Kaipa, et.al \cite{KP} found the weight distribution of $C(3, 7)$, and the automorphism group of $\C$ was computed in \cite{GK}. Some of the generalized Hamming weights of these codes were studied in \cite{ GL, GPP, Nogin}. As for as  decoding of Grassmann codes is concerned, not much is known.  Recently \cite{BS} an attempt was made to find an error-correcting algorithm for $\C$. In this work, the majority logic decoder has been used and the lines in Grassmannians have been used to construct parity checks for the majority logic decoder. But unfortunately, we were able only to correct approximately up to $\lfloor(d-1)/2^{\ell+1}\rfloor$ many errors for $\C$ when $m$ is very large.\\

For every $\ell$-tuple $\a=(\a_1,\a_2\ldots,\a_\ell)$ satisfying $1\le\a_1<\a_2<\cdots<\a_\ell\le m$ there exists a subvariety of $\Glm$, known as the Schubert variety, and it is denoted by $\Olm$. Geometrically, The Schubert variety $\Olm$ in $\Glm$ is an algebraic subvariety given by certain Pl\"ucker coordinate hyperplanes. Since one can associate a linear code to every projective variety, it is natural to study the linear codes associated to Schubert varieties. The study of Schubert codes $\Calm$, codes associated to Schubert varieties $\Olm$, was initiated by Ghorpade-Lacahud in \cite{GL} and they conjectured that the minimum distance of $\Calm$ is $q^{\delta(\a)}$, where $\delta(\a)=\sum_{i=1}^\ell(\a_i-i)$. This conjecture is known as the MDC (the minimum distance conjecture). The conjecture was first proved for $\ell=2$ independently by Hana \cite{HC} and the Guerra-Vincenti \cite{GV}. The MDC was first proved, in generality, by Xiang \cite{X}. A different proof of the MDC was given, and an attempt to give a classification of the minimum weight codewords was made in \cite{GS}. In the case, $\ell=2$, the weight distribution of the Schubert code $\calm$ is known \cite{PS}. But not much is known about the decoding of Schubert codes. In \cite{BP}  it has been proved that the minimum weight codewords in $\C^\perp$ have their supports lying on lines in the Grassmannian $\Glm$. In this article, we give a classification of lines in Schubert varieties $\Olm$ and prove that the supports of the minimum weight codewords of the dual Schubert codes $\Calmp$ lie on  lines in $\Olm$ and if we choose any three points on a line in $\Olm$, there exists is a codeword of weight three in $\Calmp$. We also study the intersection of lines through a point on the boundary of discs in $\Glm$ centered at a point in $\Glm$. Further, we use lines in $\Olm$ to construct parity checks ``orthogonal" on each coordinates for the Schubert code $\Calm$ in the case, $\ell=2$ . Massey \cite{M} has used such  parity checks to perform majority logic decoding of a linear code. Therefore, one can use the set of parity checks obtained for $\calm$ in this article  and correct certain errors using   majority voting. Interestingly, in the case, $\a=(\a_1, m)$ with $\a_1=2, 3$ we are able to correct up to $\lfloor (d-1)/2\rfloor$ and $\lfloor (d-1)/2\rfloor-1$ many errors respectively, where $d$ is the minimum distances of the corresponding Schubert codes $\calm$.
\section{Preliminaries}
As in the introduction, let $q$ be a prime power and $\Fq$ be a finite field with $q$ elements. Let $\ell\le m$ be positive integers and let $V$ be an $m$ dimensional vector space over the field $\Fq$. The Grassmannian $G_\ell(V)$ of all $\ell$ planes in $V$ is defined  by 
$$
G_\ell(V):=\{ P\subset V:P\text{ is a linear subspace of dimension }\ell\}.
$$
It is easy to see that the cardinality $|G_\ell(V)|$ of the Grassmannian $G_\ell(V)$ is given by the Gaussian binomial coefficient ${m\brack \ell}_q$, where
\[
{m\brack\ell}_q:=\frac{(q^m-1)(q^{m}-q)\cdots (q^{m}-q^{\ell-1})}{(q^\ell-1)(q^{\ell}-q)\cdots (q^{\ell}-q^{\ell-1})}.
\]
The Pl\"ucker map embeds the Grassmannian $G_\ell(V)$  into the projective space $ \Pml$ as an algebraic variety. More precisely, let $\mathcal{B}$ be a fixed ordered basis of $V$. Define
\[
\mathbb{I}(\ell, m):=\{\a=(\a_1, \a_2,\ldots,\a_\ell)\in\mathbb{Z}^\ell:1\le \a_1<\a_2<\cdots<\a_\ell\le m)\}.
\]
Fix some linear order on $\Ilm$. Let $\Pml$ be the projective space of dimension ${m\choose \ell}-1$ over $\Fq$. For $P\in G_\ell(V)$, let $M_P$ denote the $\ell\times m$ matrix whose rows are coordinates of some basis of $P$ with respect to the basis $\mathcal{B}$. The Pl\"ucker map is defined by
\begin{equation}
\label{eq: pluckermap}
\mathrm{Pl}:G_\ell(V)\to \Pml \text{defined by }P\mapsto[P_\a]_{\a\in\Ilm}
\end{equation}
where $P_\a$ denote the $\ell\times \ell$ minor of $M_P$ corresponding to the columns of $M_P$ labeled by $\a$ and the $[P_\a]_{\a\in\Ilm}$ are taken in the same order as $\Ilm$. It is well known that the image of $G_\ell(V)$ is defined by the zero set of some irreducible quadratic polynomials, known as Pl\"ucker polynomials, and hence $G_\ell(V)$ is an algebraic variety. Further, the geometric structure of $G_\ell(V)$  depends only on $\ell$ and the dimension of $V$. To be precise, if $V$ and $V^\prime$ are two vector spaces of dimension $m$ over $\Fq$ then there exists an automorphism of $\Pml$ mapping $G_\ell(V)$ onto $G_\ell(V^\prime)$, and hence the varieties $G_\ell(V)$ and $G_\ell(V^\prime)$ are isomorphic. Therefore, we now write $\Glm$ to denote the Grassmann variety $G_\ell(V)$. Further, we think of $\Glm$ as a subset of a projective space $\Pml$ via the Pl\"ucker map. For more details on Grassmannians and Pl\"ucker polynomials, we refer to \cite{KL, Manivel}. We now define and describe lines in the Grassmannian $\Glm$. By a {\it line} in $\Glm$ we simply mean a line in $\Pml$ that is contained in $\Glm$. The lines in the Grassmannian can be parameterized by two subspaces $U$ and $W$ of $V$ of dimensions $\ell-1$ and $\ell+1$ respectively and satisfying $U\subset W$. Consider the following definition: 
\begin{definition}
	\label{def: lines}
Let $U \subset W$ be two subspaces of $V$ of dimensions $\ell-1$ and $\ell+1$ respectively. Define
\[
L(U, W):=\{P\in\Glm: U\subset P\subset W\}.
\]
\end{definition}

The sets $L(U, W)$ give a classification of lines on $\Glm$  \cite[Ch. 3.1]{MP}., i.e., every set $L(U, W)$ gives a line in $\Glm$ and every line in $\Glm$ is of the form $L(U,W)$ for some subspaces $U$ and $W$ as in the definition. The lines in $\Glm$ will play an important role in the decoding of Schubert codes. The Grassmannian has a natural notion of distance that is known as the injection distance. The injection distance between two points of Grassmannian is defined as \cite[Def. 2]{SK}: 

\begin{definition}
	Let $P,\;Q \in \Glm$ be given. The injection distance between $P$ and $Q$ is defined by $\mathrm{dist}(P,\;Q):=\ell-\dim(P \cap Q).$	
\end{definition}
Having the notion of distance in $\Glm$ we can talk about discs under injection distance of different radius around points of $\Glm$. 
\begin{definition}
	Let $P\in\Glm$ be a point and $0\le i\le \ell$ be a non-negative integer. The disc in $\Glm$ with center $P$  and radius $i$ is defined by
	\[
	\Picl:=\{Q\in\Glm: \mathrm{dist}(P,Q)\le i\}.
	\]
\end{definition}
Alternatively, the disc $\Picl$ can be defined by
\begin{eqnarray*}
	\Picl&=& \{Q\in\Glm: \dim (P \cap Q)\geq \ell-i\}\\
	&=& \{Q\in\Glm: \dim (P + Q)\leq \ell+i\}.
\end{eqnarray*}
Note that $\overline{P}^{(0)}=\{P\}$ and $\overline{P}^{(\ell)}=\Glm$. Further, $\overline{P}^{(1)}$ is the set of all points on $\Glm$ that lies on some line in $\Glm$ through $P$. For the sake of simplicity, we extend the definition of discs for every integer $i$ and set $\Picl=\emptyset$ for $i<0$ and $\Picl=\Glm$ for $i\geq\ell+1$. Note that these discs $\Picl$ are nested sets, i.e.
\[
\overline{P}^{(0)}\subset\overline{P}^{(1)}\subset\cdots\subset\overline{P}^{(\ell-1)}\subset \overline{P}^{(\ell)}.
\]
Geometrically, the set $\Picl\backslash\Piicl$ is the set of all points of $\Glm$ that lies exactly at distance $i$ with $P$. This gives a partition of the Grassmannian $\Glm$ as 
\begin{equation*}\label{eq:pic0}
\Glm= \bigsqcup\limits_{i=0}^{\ell}\left( \Picl \setminus \overline{P}^{(i-1)}\right).
\end{equation*}
A formula for the cardinality of the set $\Picl\backslash\Piicl$ is known \cite[Lemma 9.3.2]{BCN}. Later we will return to the discs in Grassmannians and study their intersections with lines through points from the boundaries of these discs. But for now, we will move to the main objects of this article, namely Schubert varieties and Schubert codes. First, we recall the definition of Schubert varieties in Grassmannians. 

Let $\a\in\Ilm$ be fixed. A partial flag $\mathbf{A}_\bullet: A_1\subset A_2\subset\cdots\subset A_\ell$ of subspaces of $V$ is said to be of dimension sequence $\a$ if $\dim A_i=\a_i$ for all $1\le i\le \ell$. Fix a partial flag  $\mathbf{A}_\bullet$ of dimension sequence $\a$. The Schubert variety of Grassmannian corresponding to the partial flag $\mathbf{A}_\bullet$ is defined and denoted by 
\[
\OA:=\{P\in\Glm:\;\dim(P\cap A_i)\geq i\text{ for }1\le i\le \ell\}.
\]
The restriction of the Pl\"ucker map to $\OA$ gives an embedding of $\OA$ into projective space. Further, $\OA$ is the zero set of all Pl\"ucker polynomials together with  certain coordinate hyperplanes and hence is an algebraic subvariety of $\Glm$. Note that, a priory, it seems that the definition of the Schubert variety $\OA$ depends on the partial flag $\mathbf{A}_\bullet$ but it is not true. More precisely, if $\mathbf{B}_\bullet$ is another partial flag of dimension sequence $\a$, then there exists an automorphism of the Grassmannian $\Glm$ that maps $\OA$ onto $\Omega(\mathbf{B}_\bullet)$. Therefore, we use the notation $\Olm$ to denote the  Schubert variety $\OA$ but the flag $\A$ is fixed. For a more detail study  of Schubert varieties, we again refer to\cite{KL, Manivel}.
\begin{example}
	\label{ex: schubert}
Let $P\in\Glm$ be fixed. For every $i$ satisfying $0\le i\le \ell$ the disc $\Picl$ is a Schubert variety $\Omega_\a(\ell, m)$, where $\a=(i+1,i+2,\ldots,\ell,m-i+1,m-i+2,\ldots, m)$ \cite[Lemma 3.2]{BS}.
\end{example}

Now we are ready to briefly define the Grassmann and Schubert codes. It is known \cite[Thm. 1.1.6]{TVN} that corresponding to every algebraic variety there is a naturally defined linear code (up to equivalence). The construction of the codes corresponding to Grassmann and Schubert varieties are as follows: Let $\mathbf{X}=(X_{ij})$ be an $\ell\times m$ matrix of indeterminates $X_{ij}$ over $\Fq$. For every $\a\in\Ilm$, let $X_a$ denotes the $\ell\times \ell$ minor of $\mathbf{X}$ corresponding to columns labeled by $\a$ and let $\Fq[X_\a]_{\a\in\Ilm}$ be the linear space spanned by all $X_\a$. Let $\Glm=\{P_1, P_2,\ldots, P_n\}$ be the set of  points on the Grassmannian in some fixed order, where $n={m\brack\ell}_q$ and let $M_{P_i}$ be the $\ell\times m$ matrix corresponding to $P_i$ as given in equation \eqref{eq: pluckermap}. Consider the evaluation map 
\begin{equation*}
\label{eq: defGrassCode}
\Ev: \Fq[X_\a]_{\a\in\Ilm}\to \Fq^n\text{ defined by }f(X_a)\mapsto c_f=(f(M_{P_1}),\ldots, f(M_{P_n})).
\end{equation*}
Since the Grassmannian is the zero set of Pl\"ucker polynomials and Pl\"ucker polynomials are quadratic irreducible polynomials, the evaluation map defined above is injective. The image of this map  is  called the Grassmann code and is denoted by $\C$. Therefore, for every codeword $c\in\C$, there exist a unique $f(X_a)\in \Fq[X_\a]_{\a\in\Ilm}$ such that $\Ev(F(X_\a))=c$ and the $i^{\it th}$ coordinate of $c$ is $f(M_{P_i})$ that we denote by $c_{P_i}$.   It is known \cite{Nogin, Ryan, Ryan2} that the Grassmann code $\C$ is an $[n, k, d]$ linear code where $n$, $k$, and $d$ are given in equation \eqref{eq: Grassmann}.

Schubert codes $\Calm$, the codes associated to the Schubert varieties $\Olm$, are codes obtained by puncturing the Grassmann code $\C$ on the complement of the Schubert variety $\Olm$ in the Grassmannian $\Glm$. In other words, there is a surjective projection map between Grassmann and Schubert codes defined by
\begin{equation}
\label{eq: defSchubcode}
\C\to \Calm\text{ defined by }\hat{c}=(c_{P_1},\cdots, c_{P_n})\mapsto c= (c_{P_i})_{P_i\in\Olm}
\end{equation}
The length and the dimension of the Schubert codes were determined in \cite{GT}. The minimum distance of $\Calm$ is known \cite{HC, GV} first in the case $\ell=2$, and then  \cite{GS, X} for general $\ell$. In particular, it has been proved that $\Calm$ is an $[n_\a, k_\a, d]$ code where
\begin{equation}
\label{eq: Schubparam}
n_{\a} = \sum_{\beta \le \a} q^{\delta(\beta)},\quad k_{\a} = \det_{1\le i,j\le \ell} \left( { {\a_j - j +1} \choose {i - j + 1}}\right) \text{ and }d=q^{\delta(\a)}.
\end{equation}
Here $\beta\le\a$ mean $\beta\in\Ilm$ and $\beta_i\le\a_i$ for $1\le i\le\ell$, and $\delta(\a)=\sum_{i=1}^\ell(\a_i-i)$. The Schubert variety corresponding to the dimension sequence $\a=(1, 2,\ldots, \ell)$ is a point and hence the corresponding Schubert code is a trivial code. Therefore, for this article we assume that $\a\in\Ilm$ is a general tuple but $\a\ne (1, 2,\ldots,\ell)$.
 Note that the coordinates of codewords in Grassmann codes (Schubert codes) are indexed by points of $\Glm$ (resp $\Olm$). Therefore, the support of any codewords of these codes (or its dual) can be thought of as a subset of the corresponding varieties. The next theorem from \cite[Thm. 24]{BP}  gives the geometric nature of the support of the minimum weight codewords of the dual Grassmann code.

\begin{theorem}
	\label{thm: BP}
	The minimum distance of the dual Grassmann code $\C^\perp$ is three. Further, the three points of $\Glm$ corresponding to the support of a minimum weight codeword of $\C^\perp$, lie on a line in the Grassmannian. Conversely, any three points on a line in $\Glm$, form the support of some minimum weight codeword in $\C^\perp$.
\end{theorem}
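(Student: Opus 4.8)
The plan is to establish the three assertions in turn: that $d(\Cperp) = 3$, that the support of a minimum weight codeword lies on a line, and the converse.

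First I would set up the parity-check formalism. A codeword of $\Cperp$ is a linear dependence among the columns of a generator matrix of $\C$; equivalently, since the rows of that generator matrix are indexed by the monomials $X_\a$ and the columns by the points $P\in\Glm$, a vector $(c_P)_{P\in\Glm}$ lies in $\Cperp$ if and only if $\sum_{P}c_P\,\mathrm{Pl}(P) = 0$ in $\Pml$ — that is, the weighted sum of the Plücker coordinate vectors vanishes. So a weight-$w$ codeword of $\Cperp$ is precisely a set of $w$ points of $\Glm$ whose Plücker vectors are linearly dependent, together with the (essentially unique, up to scalar) dependence coefficients. To rule out weights $1$ and $2$: a single Plücker vector is nonzero, and two distinct points of $\Glm$ have linearly independent Plücker vectors because the Plücker embedding is injective and no point maps to a scalar multiple of another (distinct points give distinct projective points). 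Hence $d(\Cperp)\ge 3$.

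Next, for the forward direction, suppose $\{P_1,P_2,P_3\}$ supports a weight-$3$ codeword, so $\mathrm{Pl}(P_1),\mathrm{Pl}(P_2),\mathrm{Pl}(P_3)$ are linearly dependent but pairwise independent; thus they span a projective line $L\subset\Pml$, and all three points lie on $L$. The key step is to show $L\subset\Glm$, so that $L$ is a line in the Grassmannian in the sense of Definition~\ref{def: lines}. For this I would invoke the structure of lines: by \cite[Ch.~3.1]{MP}, a line of $\Pml$ that meets $\Glm$ in at least three points must be entirely contained in $\Glm$ — equivalently, the Plücker quadrics restricted to $L$ either vanish identically or have at most two zeros, so three collinear points of $\Glm$ force $L\subset\Glm$. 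Then $L = L(U,W)$ for the subspaces $U = P_1\cap P_2\cap P_3$ (dimension $\ell-1$) and $W = P_1+P_2+P_3$ (dimension $\ell+1$), and $P_1,P_2,P_3$ are three of its points.

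Finally, for the converse, take any three distinct points $P_1,P_2,P_3$ on a line $L=L(U,W)$ in $\Glm$. Since $L$ is a line of $\Pml$, the three Plücker vectors lie in a $2$-dimensional linear subspace, hence satisfy a nontrivial linear relation $\lambda_1\mathrm{Pl}(P_1)+\lambda_2\mathrm{Pl}(P_2)+\lambda_3\mathrm{Pl}(P_3)=0$; and because the points are pairwise distinct their Plücker vectors are pairwise independent, so all $\lambda_i\ne 0$. Extending this relation by zeros on $\Glm\setminus\{P_1,P_2,P_3\}$ yields a codeword of $\Cperp$ of weight exactly $3$, which is a minimum weight codeword by the first part. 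The main obstacle is the containment $L\subset\Glm$ in the forward direction; everything else is the routine linear algebra of the Plücker embedding's injectivity, and I would lean on the cited classification of lines to dispatch it cleanly.
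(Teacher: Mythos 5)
The paper does not actually prove this statement --- Theorem~\ref{thm: BP} is imported verbatim from Beelen--Pi\~nero \cite[Thm.~24]{BP} and used as a black box --- so there is no internal proof to compare against. Judged on its own, your argument is correct and self-contained. The duality setup is right: since $\C$ is the evaluation code of the span of the $X_\a$, a vector $(c_P)$ lies in $\Cperp$ iff $\sum_P c_P\,\mathrm{Pl}(P)=0$, and weights $1,2$ are excluded by nonvanishing and injectivity of the Pl\"ucker map. The one step that carries real content is the containment $L\subset\Glm$, and your quadric argument for it is sound: a nonzero binary quadratic form has at most two zeros in $\PP^1(\Fq)$, so three collinear points of $\Glm$ force every Pl\"ucker quadric to vanish identically on $L$; one should just say explicitly that the $\Fq$-points of the common zero locus of the Pl\"ucker quadrics are exactly $\Glm$ (the Pl\"ucker relations characterize decomposable multivectors over any field), and then invoke the classification of lines quoted after Definition~\ref{def: lines} to write $L=L(U,W)$ with $U=P_1\cap P_2\cap P_3$ and $W=P_1+P_2+P_3$. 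The converse direction and the nonvanishing of all three coefficients are handled correctly, and existence of lines in $\Glm$ (for $1\le\ell\le m-1$, $m\ge 2$) gives that the minimum distance is exactly three rather than merely at least three. For what it is worth, the original proof in \cite{BP} reaches the same conclusion through a more detailed analysis of the dual code's structure; your route through the projective-system description and the fact that the Grassmannian is an intersection of quadrics is shorter and arguably more transparent for this particular statement.
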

The minimum distance of the dual Schubert code is known \cite[Cor. 53]{Pinero} but the geometric structure of the support of these codes has not been studied like dual Grassmann codes. In the next section, we will prove that the supports of the minimum weight codewords of the dual Grassmann code $\Calm$ also lie on a line in $\Olm$.

\section{Lines in Schubert Varieties}
This section is the foundation for the decoding algorithm for some Schubert codes.
We will study the lines in the Schubert varieties $\Olm$ and their relation with the support of the parity checks of the Schubert code $\Calm$. We prove a result similar to Theorem \ref{thm: BP} for the Schubert codes. The lines in the Grassmannian $\Glm$ are of the form $L(U, W)$ \ref{def: lines}. We will see what conditions $U$ and $W$ have to satisfy to classify lines in $\Olm$. we will study the intersection of lines in Grassmannians $\Glm$ with discs $\Picl$ in $\Glm$. First, we describe the lines in Schubert varieties $\Olm$. By a {\it line} in $\Olm$ we mean a line of projective space that is contained in $\Olm$. Note that here we are treating $\Olm$ and its image under Pl\"uckler map as the same sets. Since the lines in Schubert varieties $\Olm$ are also  lines in the Grassmannian $\Glm$, they must be of the form $L(U, W)$ for some $U$ and $W$ as in Definition \ref{def: lines}. In the next proposition, we determine conditions on subspaces $U$ and $W$  that classify lines in $\Olm$. Recall that $A_1\subset A_2\subset\cdots\subset A_\ell$ is a partial flag of dimension sequence $\a$ and $\Olm$ is the corresponding Schubert variety.

\begin{proposition}
\label{prop: minimumwt}
Let $\mathcal G$ be a line in the projective space $\Pml$. Then $\mathcal G\subset\Olm$ if and only if $\mathcal G= L(U,W)$ for some subspaces $U$ and $W$ satisfying $U\subset W\subset A_\ell$ of dimensions $\ell-1$ and $\ell+1$ respectively,  $\dim (U\cap A_i)\ge i-1$ and $\dim(W\cap A_i)\ge i$ for every $1\le i\le \ell$ and $U, W$ satisfies any one of the following two conditions
\begin{enumerate}
	\item $\dim(U\cap A_i)\geq i$ for $1\le i\le \ell-1$.
	\item If $\dim (U\cap A_i)= i-1$ for some $1\le i\le \ell$, then $\dim(W\cap A_i)=i+1$.
\end{enumerate}
\end{proposition}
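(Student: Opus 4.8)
The plan is to reduce the statement to a collection of linear-algebra conditions on the line $L(U,W)$, using the fact (already noted in the excerpt) that every line in $\Glm$ is of the form $L(U,W)$ with $U\subset W$, $\dim U=\ell-1$, $\dim W=\ell+1$, together with the defining incidence conditions of the Schubert variety $\Olm=\OA$. First I would record the elementary observation that a point $P\in L(U,W)$ is exactly an $\ell$-plane with $U\subset P\subset W$, and that as $P$ ranges over this line, $P\cap A_i$ takes one of the two values $\dim((U+(P\cap A_i))\cap\cdots)$; concretely, for the $2$-dimensional space $W/U$ and the images $\bar A_i:=(W\cap A_i+U)/U$, the points of the line correspond to the lines of $W/U$, and $\dim(P\cap A_i)$ equals $\dim(U\cap A_i)+1$ if the line $\bar P$ meets $\bar A_i$ nontrivially and $\dim(U\cap A_i)$ otherwise (one must be slightly careful: $\dim(P\cap A_i)=\dim(U\cap A_i)+1$ precisely when $P\cap A_i\not\subset U$, equivalently when $\bar P\cap \bar A_i\ne 0$, where $\bar P=P/U$). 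This dictionary turns ``$\mathcal G\subset\Olm$'' into ``for every line $\bar P$ of $W/U$ and every $i$, $\dim(P\cap A_i)\ge i$.''

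Next I would establish the necessity of $W\subset A_\ell$ and of the inequalities $\dim(U\cap A_i)\ge i-1$, $\dim(W\cap A_i)\ge i$. For $W\subset A_\ell$: every $P$ on the line satisfies $\dim(P\cap A_\ell)\ge\ell$, i.e. $P\subset A_\ell$; since $W$ is the span of any two distinct $P$'s on the line, $W\subset A_\ell$. For the two inequalities: $\dim(U\cap A_i)\ge i-1$ follows because $U\subset P$ for every $P$ on the line and picking $P$ with $\dim(P\cap A_i)=\dim(U\cap A_i)$ or $\dim(U\cap A_i)+1$ forces $\dim(U\cap A_i)\ge i-1$ once we know some $P$ has $\dim(P\cap A_i)\ge i$; and $\dim(W\cap A_i)\ge\dim(P\cap A_i)\ge i$ since $P\subset W$. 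Then for the dichotomy (1)/(2): fix $i$ and suppose $\dim(U\cap A_i)=i-1$ (the ``critical'' index, where $U$ alone does not yet guarantee the incidence). Then a point $P$ on the line lies in $\Olm$-incidence at level $i$ iff $\bar P$ meets $\bar A_i$, where now $\bar A_i=(W\cap A_i + U)/U$ is a subspace of the $2$-dimensional $W/U$. Every line of $W/U$ meets $\bar A_i$ iff $\dim\bar A_i\ge 1$, and it meets it for \emph{all} lines iff $\dim\bar A_i=2$, i.e. $W\cap A_i+U=W$, i.e. $\dim(W\cap A_i)=\ell+1-(\ell-1-\dim(U\cap A_i))=\dim(U\cap A_i)+2=i+1$. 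If instead $\dim\bar A_i=1$, there is exactly one line of $W/U$ missing $\bar A_i$, giving a point of the line violating the incidence. This is exactly condition (2), and its failure for all $i$ (i.e. whenever $\dim(U\cap A_i)\ge i$ for all $i\le\ell-1$, noting that for $i=\ell$, $U\subset W\subset A_\ell$ gives $\dim(U\cap A_\ell)=\ell-1$ automatically and the level-$\ell$ incidence is handled by $W\subset A_\ell$) is condition (1). For sufficiency I would simply run the dichotomy in reverse: under (1), $U\subset A_i$-incidence already yields $\dim(P\cap A_i)\ge\dim(U\cap A_i)\ge i$ for $i\le\ell-1$ and $\dim(P\cap A_\ell)=\ell$ from $W\subset A_\ell$; under (2), at any critical index $i$ we have $\dim\bar A_i=2$ so every $\bar P$ meets $\bar A_i$ and $\dim(P\cap A_i)\ge i$; at non-critical indices $\dim(U\cap A_i)\ge i$ does the job directly.

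The main obstacle I anticipate is bookkeeping the off-by-one dimension counts cleanly — in particular, distinguishing ``$P\cap A_i$ strictly larger than $U\cap A_i$'' from ``$\bar P\cap\bar A_i\ne 0$'' requires verifying $P\cap A_i\not\subset U\iff (P\cap A_i+U)/U\ne 0$, which uses $\dim(U\cap A_i)\ge i-1$ to ensure that the ``extra dimension'' in $P\cap A_i$ is genuinely new modulo $U$ and not absorbed, and handling the edge case $i=\ell$ separately (where $U$ can never meet $A_\ell$ in dimension $\ell$, so the incidence is forced entirely by $W\subset A_\ell$). Once the $W/U$ reduction is set up carefully, each of the two implications is a short case check on a $2$-dimensional quotient, so I would keep the exposition organized around that reduction rather than manipulating the ambient spaces directly.
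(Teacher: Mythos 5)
Your proof is correct, and it organizes the argument genuinely differently from the paper. The paper proves necessity of the dichotomy by picking two distinct points $Q_1,Q_2$ on the line and computing, at a critical index $i$ (one with $\dim(U\cap A_i)=i-1$), that $\dim(W\cap A_i)\ge\dim(Q_1\cap A_i+Q_2\cap A_i)\ge i+i-\dim(U\cap A_i)=i+1$; sufficiency is then a separate direct check on each $T\in L(U,W)$. You instead pass to the two-dimensional quotient $W/U$ and translate everything into whether each one-dimensional subspace $\bar P=P/U$ meets $\bar A_i=(W\cap A_i+U)/U$ nontrivially; since $\dim\bar A_i=\dim(W\cap A_i)-\dim(U\cap A_i)$, the condition $\dim(W\cap A_i)=i+1$ at a critical index becomes exactly $\bar A_i=W/U$, and both implications reduce to the single observation that a proper subspace of a two-dimensional space is avoided by some line while the full space is avoided by none. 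This buys a uniform treatment of the two directions and makes transparent why the threshold is $i+1$ rather than $i$, at the cost of the careful bookkeeping you flag (verifying $P\cap A_i\not\subseteq U\iff\bar P\cap\bar A_i\ne 0$); the paper's dimension count is shorter but handles the two directions by separate arguments. One small correction: your parenthetical ``there is exactly one line of $W/U$ missing $\bar A_i$'' should read ``at least one'' (over $\Fq$ there are $q$ such lines); this does not affect the argument, since only existence is used.
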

\begin{proof}
Let $\mathcal G\subset\Olm$ be a line in the Schubert variety.	Since $\Olm\subset\Glm$, $\mathcal G$ is a line $\Glm$ as well and hence there exist two subspaces $U\subset W$ of $V$ satisfying $\dim U=\ell-1$ and $\dim W=\ell+1$. If $T$ is a point on $L(U,W)$,  then $\dim(T\cap A_i)\geq i$ for $1\le i\le \ell$. Consequently, $\dim (U\cap A_i)\geq i-1$ and $\dim (W\cap A_i)\geq i$ for  $1\le i\le \ell$. Now if  item (1) of the Proposition is not true, then there exists some $1\le i\le \ell-1$ such that $\dim (U\cap A_i)=i-1$ for some $i$. Clearly, in this case, $\dim (W\cap A_i)\le i+1$. Let $Q_1,Q_2\in L(U, W)$ be two distinct points. Since $L(U, W)\subset \Olm,$  we have  $\dim(Q_j\cap A_i)\ge i$ for $j=1,2$. As $W=Q_1+Q_2$, $U=Q_1\cap Q_2$, and $(Q_1+Q_2)\cap A_i\supset Q_1\cap A-i + Q_2\cap A_i,$ we get
\begin{align*}
\dim (W\cap A_i) & =\dim ((Q_1+ Q_2)\cap A_i) \\
 & \ge\dim(Q_1\cap A_i +Q_2\cap A_i) \\
&\geq  i +i -\dim (U\cap A_i)\\
&= i + i -(i-1).
\end{align*}
This proves that $\dim(W\cap A_i)=i+1$.	 Conversely, let $L(U, W)$ be a line in $\Glm$   satisfying $U\subset W\subset A_\ell$, $\dim(U\cap A_i)\geq i-1$ and $\dim(W\cap A_i)\ge i$ for $1\le i\le \ell$. If $\dim(U\cap A_i)\geq i$ for every $i$,   or if $\dim(U\cap A_i)=i-1$ for some $i$, and $\dim(W\cap A_i)=i+1$ for some $i$, then  every $T\in L(U,W)$ satisfies $\dim(T\cap A_i)\geq \dim(U\cap A_i)\ge i$. Hence $L(U, W)\subset\Olm$. This completes the proof of the proposition.
\end{proof}	
 \begin{corollary}
 \label{cor: lineschubert}
 Let $L(U,W)$ be a line in $\Glm$. If $|L(U,W)\cap\Olm|\geq 2$ then $L(U,W)$ is a line $\Olm$.
 \end{corollary}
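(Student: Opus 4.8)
The plan is to show that if a line $L(U,W)$ in $\Glm$ meets $\Olm$ in at least two points, then the subspaces $U$ and $W$ automatically satisfy all the conditions of Proposition \ref{prop: minimumwt}, whence $L(U,W)\subset\Olm$. So suppose $Q_1,Q_2\in L(U,W)\cap\Olm$ are distinct. Recall that on the line $L(U,W)$ we have $U=Q_1\cap Q_2$ and $W=Q_1+Q_2$; this is the standard description of a line in the Grassmannian (every point $T$ of $L(U,W)$ satisfies $U\subset T\subset W$, and two distinct such $T$ already recover $U$ as their intersection and $W$ as their sum, for dimension reasons).

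First I would verify $W\subset A_\ell$. Since $Q_1,Q_2\in\Olm$, we have $\dim(Q_j\cap A_\ell)\ge\ell=\dim Q_j$, so $Q_j\subset A_\ell$ for $j=1,2$; hence $W=Q_1+Q_2\subset A_\ell$. Next, for each $1\le i\le\ell$, since $Q_1\in\Olm$ gives $\dim(Q_1\cap A_i)\ge i$, and $Q_1\cap A_i\subset W\cap A_i$ while $Q_1\cap A_i\supset U\cap A_i$ only after intersecting further — more carefully, $U\cap A_i=Q_1\cap Q_2\cap A_i\supset$ nothing free, so instead I argue: $W\cap A_i=(Q_1+Q_2)\cap A_i\supseteq Q_1\cap A_i$, so $\dim(W\cap A_i)\ge i$; and $U\cap A_i=(Q_1\cap Q_2)\cap A_i=(Q_1\cap A_i)\cap(Q_2\cap A_i)$, which has dimension at least $\dim(Q_1\cap A_i)+\dim(Q_2\cap A_i)-\dim(W\cap A_i)\ge i+i-(\ell+1)$; this crude bound is not yet $i-1$, so I refine using $\dim(W\cap A_i)\le\dim W=\ell+1$ together with the better containment $Q_1\cap A_i+Q_2\cap A_i\subseteq W\cap A_i$. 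The clean route is: if $\dim(W\cap A_i)=i$ then $W\cap A_i=Q_1\cap A_i=Q_2\cap A_i$ has dimension exactly $i$, forcing $\dim(U\cap A_i)\ge i\ge i-1$; if $\dim(W\cap A_i)=i+1$ then from $\dim(Q_1\cap A_i)+\dim(Q_2\cap A_i)-\dim(Q_1\cap A_i\cap Q_2\cap A_i)=\dim(Q_1\cap A_i+Q_2\cap A_i)\le i+1$ we get $\dim(U\cap A_i)\ge i+i-(i+1)=i-1$; and if $\dim(W\cap A_i)\ge i+2$ is impossible since $\dim W=\ell+1$ would force $i\le\ell-1$ and then $W\cap A_i$ has codimension $\le\ell-1-i+? $—I would instead note $\dim(W\cap A_i)\le\dim A_i\le\dim W$ only bounds it by $\min(\a_i,\ell+1)$, so I simply treat the remaining case $\dim(W\cap A_i)\ge i+2$ directly, where the bound $\dim(U\cap A_i)\ge 2i-\dim(W\cap A_i)$ may fail, but then one uses that $U\subset Q_1$ gives $\dim(U\cap A_i)\ge\dim(Q_1\cap A_i)-\dim(Q_1/U)=\dim(Q_1\cap A_i)-1\ge i-1$. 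This last observation, $\dim(U\cap A_i)\ge\dim(Q_1\cap A_i)-1$ because $U$ has codimension one in $Q_1$, is the uniform argument and I would use it throughout: it immediately gives $\dim(U\cap A_i)\ge i-1$ and $\dim(W\cap A_i)\ge\dim(Q_1\cap A_i)\ge i$ for all $i$.

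It remains to check that one of conditions (1) or (2) of Proposition \ref{prop: minimumwt} holds. Suppose (1) fails, so $\dim(U\cap A_i)=i-1$ for some $1\le i\le\ell-1$. By the codimension-one fact, $\dim(Q_1\cap A_i)\le\dim(U\cap A_i)+1=i$, hence $\dim(Q_1\cap A_i)=i$, and likewise $\dim(Q_2\cap A_i)=i$; moreover $U\cap A_i\subsetneq Q_1\cap A_i$ and $U\cap A_i\subsetneq Q_2\cap A_i$ are distinct hyperplanes-in-$i$-spaces inside $W\cap A_i$ (they are distinct because if $Q_1\cap A_i=Q_2\cap A_i$ then this common $i$-space lies in $U=Q_1\cap Q_2$, contradicting $\dim(U\cap A_i)=i-1$). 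Therefore $W\cap A_i\supseteq Q_1\cap A_i+Q_2\cap A_i$ has dimension at least $i+1$, and since $Q_1\cap A_i+Q_2\cap A_i\subseteq W$ with $\dim W=\ell+1$ this is consistent; combined with $\dim(W\cap A_i)\le 2\dim(Q_1\cap A_i)-\dim(U\cap A_i)=2i-(i-1)=i+1$ — which follows from $W\cap A_i=(Q_1+Q_2)\cap A_i\supseteq Q_1\cap A_i+Q_2\cap A_i$ and a dimension count once we know $W\cap A_i=Q_1\cap A_i+Q_2\cap A_i$, itself forced because any vector of $W\cap A_i$ is a combination $q_1+q_2$ with $q_j\in Q_j$, and arguing $q_1,q_2\in A_i$ after adjusting by $U\cap A_i$ — we conclude $\dim(W\cap A_i)=i+1$, which is exactly condition (2). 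Hence the hypotheses of Proposition \ref{prop: minimumwt} are met and $L(U,W)\subset\Olm$.

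The main obstacle I anticipate is the bookkeeping in the paragraph above showing that when $\dim(U\cap A_i)=i-1$ one actually gets $W\cap A_i=(Q_1\cap A_i)+(Q_2\cap A_i)$ and hence $\dim(W\cap A_i)=i+1$ exactly: one must rule out the a priori possibility that $W\cap A_i$ is strictly larger, using only $\dim W=\ell+1$, $Q_j\subset W$, $\dim(Q_j\cap A_i)\ge i$, and the codimension-one relations $U\subset Q_j\subset W$. I expect this to go through by a careful application of the dimension formula $\dim(Q_1\cap A_i)+\dim(Q_2\cap A_i)=\dim(Q_1\cap A_i+Q_2\cap A_i)+\dim(U\cap A_i)$, noting that $Q_1\cap A_i+Q_2\cap A_i\subseteq W\cap A_i$ and that every element of $W\cap A_i$ lies in $Q_1\cap A_i+Q_2\cap A_i$ because $W=Q_1+Q_2$ with the two summands meeting in $U$ — the only delicate point is justifying membership in $A_i$ of the components, which follows since the decomposition is unique modulo $U$ and $U\cap A_i$ is the kernel of the relevant projection.
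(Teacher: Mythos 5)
Your overall strategy is exactly the paper's: taking two distinct points $Q_1,Q_2\in L(U,W)\cap\Olm$, using $U=Q_1\cap Q_2$ and $W=Q_1+Q_2$, and verifying the hypotheses of Proposition \ref{prop: minimumwt} directly. The steps $W\subseteq A_\ell$, $\dim(U\cap A_i)\ge i-1$ and $\dim(W\cap A_i)\ge i$ (via the codimension-one containments $U\subset Q_j\subset W$), and the lower bound $\dim(W\cap A_i)\ge \dim(Q_1\cap A_i)+\dim(Q_2\cap A_i)-\dim(U\cap A_i)=i+1$ when $\dim(U\cap A_i)=i-1$ are all correct and coincide with the published proof.

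The one step you yourself flag as delicate is, however, justified by an invalid argument. To get the matching upper bound $\dim(W\cap A_i)\le i+1$ you assert the identity $W\cap A_i=(Q_1\cap A_i)+(Q_2\cap A_i)$, arguing that any $w\in W\cap A_i$ decomposes as $q_1+q_2$ with components that can be ``adjusted into $A_i$'' because the decomposition is unique modulo $U$. That principle is false in general: with $Q_1=\langle e_1,e_2\rangle$, $Q_2=\langle e_1,e_3\rangle$ (so $U=\langle e_1\rangle$, $W=\langle e_1,e_2,e_3\rangle$) and $A=\langle e_2+e_3\rangle$, one has $W\cap A=A$ of dimension one while $(Q_1\cap A)+(Q_2\cap A)=0$; uniqueness of the decomposition modulo $U$ gives no control over whether the components lie in $A$. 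In your situation the identity does hold, but only as a consequence of the two dimension bounds you are in the middle of proving, so invoking it to obtain the upper bound is circular. The correct one-line justification --- the content of the paper's ``clearly'' --- is that $U$ has codimension $2$ in $W$, so $(W\cap A_i)/(U\cap A_i)$ embeds into $W/U$ and hence $\dim(W\cap A_i)\le \dim(U\cap A_i)+2=i+1$. With that substitution (which is needed to get the exact equality demanded by condition (2) of Proposition \ref{prop: minimumwt}, and which then yields your identity a posteriori), your proof is complete and is essentially the paper's.
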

\begin{proof}
The proof is simple. Let $P,Q\in L(U,W)\cap\Olm$. As $U$ is a hyperplane in $P$ and $\dim(P\cap A_i)\geq  i$ for $1\le i\le \ell$, we get $\dim(U\cap A_i)\ge i-1$ and 	$\dim(W\cap A_i)\ge i$ for  $1\le i\le \ell$. Further, as $W=P+Q$,  and $P\subset A_\ell $, and $Q\subset A_\ell$, we get $W\subset A_\ell$.  Now let $\dim(U\cap A_i)=i-1$ for some $i$. Clearly, then $\dim(W\cap A_i)\le i+1$. Now as $W\cap A_i=(P + Q)\cap A_i\supset P\cap A_i +Q\cap A_i$, we get
\begin{align*}
\dim (W\cap A_i) & \geq \dim(P\cap A_i +Q\cap A_i) \\ 
&\geq  i +i -\dim (U\cap A_i)\\
&= i + i -(i-1)\\
&=i+1.
\end{align*}
This proves that $L(U,W)\subset\Olm$.
\end{proof}

The next theorem proves that the support of each minimum weight codeword of the dual Schubert codes $\Calmp$ lies on a line in $\Olm$, and vice-versa.

\begin{theorem}
	\label{thm: support}
	The minimum distance of $\Calmp$ is three. Further, the support of each minimum weight codewords of $\Calmp$ lie on a line in $\Olm$. Conversely, for any line $L(U, W)\subset\Olm$ and any three points $P,\; Q$ and $R$ on $L(U, W)$, there exists a codeword $\omega\in\Calmp$ such that 
	\[
	\supp(\omega)=\{P, Q, R\}
	\]
\end{theorem}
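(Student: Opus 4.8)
The plan is to deduce Theorem~\ref{thm: support} from the Grassmann-code statement in Theorem~\ref{thm: BP}, transporting information along the puncturing map \eqref{eq: defSchubcode} and its dual. First I would recall that $\Calm$ is obtained from $\C$ by puncturing on the complement $\Glm\setminus\Olm$, so that $\Calmp$ is the \emph{shortening} of $\C^{\perp}$ at those same coordinates; concretely, a word $\omega\in\Fq^{n_{\a}}$ indexed by the points of $\Olm$ lies in $\Calmp$ exactly when its zero-extension $\widehat{\omega}$ to all of $\Glm$ lies in $\C^{\perp}$. Thus every codeword of $\Calmp$ of weight $w$ gives a codeword of $\C^{\perp}$ of weight $w$ whose support sits inside $\Olm$, and conversely every codeword of $\C^{\perp}$ with support inside $\Olm$ restricts to a codeword of $\Calmp$.

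For the first two assertions: Theorem~\ref{thm: BP} says the minimum weight of $\C^{\perp}$ is three and every minimum-weight support is a triple of collinear points $\{P,Q,R\}$ on some line $L(U,W)\subset\Glm$. To prove $d(\Calmp)=3$ I would exhibit one such triple lying entirely inside $\Olm$: pick any line $L(U,W)$ contained in $\Olm$ (these exist by Proposition~\ref{prop: minimumwt}, e.g.\ by choosing $U\subset W\subset A_\ell$ appropriately—any flag with $\a\neq(1,\dots,\ell)$ leaves room for such a line), take three points on it, and apply the converse direction of Theorem~\ref{thm: BP} to get a weight-three word of $\C^{\perp}$ supported on that triple; since the triple lies in $\Olm$, this restricts to a weight-three word of $\Calmp$. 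That $\Calmp$ has no nonzero word of weight $\le 2$ follows because such a word would zero-extend to a word of $\C^{\perp}$ of weight $\le 2$, contradicting $d(\C^{\perp})=3$. Then, given any minimum-weight (weight-three) $\omega\in\Calmp$, its zero-extension is a minimum-weight word of $\C^{\perp}$, so by Theorem~\ref{thm: BP} its support $\{P,Q,R\}\subset\Olm$ is collinear, lying on some $L(U,W)\subset\Glm$; and since $L(U,W)$ meets $\Olm$ in at least two points, Corollary~\ref{cor: lineschubert} upgrades this to $L(U,W)\subset\Olm$, so the support lies on a line \emph{in} $\Olm$, as claimed.

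For the converse: given a line $L(U,W)\subset\Olm$ and three distinct points $P,Q,R$ on it, these three points are collinear in $\Glm$, so by the converse part of Theorem~\ref{thm: BP} there is a codeword $\widehat{\omega}\in\C^{\perp}$ with $\supp(\widehat{\omega})=\{P,Q,R\}$. Since all three points lie in $\Olm$, the support is contained in $\Olm$, and by the shortening description above the restriction $\omega$ of $\widehat{\omega}$ to the coordinates indexed by $\Olm$ lies in $\Calmp$ with $\supp(\omega)=\{P,Q,R\}$, which is exactly the assertion.

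The one point that needs genuine care—and which I expect to be the main obstacle—is justifying that $\Calmp$ really is the \emph{shortening} of $\C^{\perp}$ (equivalently, that the dual of the punctured code is the shortened dual). This is the standard duality between puncturing and shortening, but it must be pinned down against the explicit evaluation-map definitions of $\C$ and $\Calm$ in \eqref{eq: defSchubcode}: one checks that the projection $\C\to\Calm$ is coordinate restriction, so $\Calm=\{\,(c_{P})_{P\in\Olm}: c\in\C\,\}$, and then that $w\perp\Calm$ for all such restrictions iff the zero-extension $\widehat{w}$ is orthogonal to all of $\C$, i.e.\ $\widehat{w}\in\C^{\perp}$. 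Once this is in place the rest is the bookkeeping above; I would also remark that the resulting value $d(\Calmp)=3$ agrees with the known formula from \cite{Pinero}, which serves as a consistency check.
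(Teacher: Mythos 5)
Your proposal is correct and follows essentially the same route as the paper: it transfers Theorem \ref{thm: BP} along the puncturing map by zero-extension/restriction (the puncturing--shortening duality you flag is exactly the inner-product verification the paper carries out), and then uses Corollary \ref{cor: lineschubert} to upgrade the line in $\Glm$ to a line in $\Olm$. The only (harmless) difference is that you derive $d(\Calmp)=3$ directly from the existence of a line in $\Olm$, whereas the paper simply cites Pi\~nero's result for that value.
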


\begin{proof}
It is known  \cite[Cor. 53]{Pinero} that the minimum distance of the dual Schubert code $\Calmp$ is three. Recall that the Schubert code $\Calm$ is obtained by puncturing the Grassmann code $\C$ on the set $\Glm\setminus\Olm$.
To prove the theorem, we use equation \eqref{eq: defSchubcode} and Theorem \ref{thm: BP}. First, let $L(U, W)\subset\Olm$ be a line and $P,\;Q$ and $R$ be three points on $L(U,W)$. Then, by Theorem \ref{thm: BP}, there exists a minimum weight codeword $\hat{\omega}\in\Cperp$ such that $\supp(\hat{\omega})=\{P,Q,R\}$. Now let $\omega$ be a word that is obtained by puncturing $\hat{\omega}$ on all points in $\Glm\setminus\Olm$. We claim that, $\omega\in\Calmp$. Let $c\in\Calm$ be an arbitrary codeword  and let $\hat{c}\in\C$ be a codeword that is mapped to $c$ under the projection map in the equation \eqref{eq: defSchubcode}. Then
\[
\omega\cdot c=\sum_{P_i\in\Olm}\omega_{P_i}\cdot c_{P_i}=\sum_{P\in\Glm} \hat{\omega}_{P}\hat{c}_{P} =\hat{\omega}\cdot\hat{c}=0
\]
The second equality follows because of $\hat{\omega}_P=0$ for every $P\notin\Olm$. This proves that $\omega\in\Calmp$ and $\supp(\omega)=\{P, Q, R\}$. Consequently, every choices of three points on a line in $\Olm$ corresponds to the support of a minimum weight codeword of $\Calmp$. Conversely, let $\omega\in\Calmp$ be a weight three codeword. Let $\supp(\omega)=\{P,Q,R\}$ be the support of this codeword. We claim that these three points $P,\;Q$ and $R$ lie on a line in $\Olm$. Let $\hat{\omega}$ be a word which is an extension of the codeword $\omega$ as
\[
\hat{\omega}_P:=\begin{cases} \omega_P & \text{if } P\in\Olm,\\ 
0 & \text{if } P\in\Glm\setminus\Olm.
\end{cases}.
\]
Now, let $\hat{c}\in\C$ be an arbitrary point of the Grassmann code and let $c\in\Calm$ be the projection of $\hat{\omega}$. Note that
\[
\hat{\omega}\cdot\hat{c}=\sum_{P\in\Glm}\hat{\omega}_P\cdot\hat{c}_P=\sum_{P\in\Olm}{\omega}_P\cdot{c}_P=\omega\cdot c=0
\]
where the second equality follows from the fact that $\hat{\omega}_P=0$ for every $P\notin\Olm$ and the last equality follows from $\omega\in\Calmp$ and $c\in\Calmp$. This proves that $\hat{\omega}\in\Cperp$ and $\supp(\hat{\omega})=\{P, Q, R\}$. Now, from Theorem \ref{thm: BP}, it follows that these points $P,\;Q$ and $R$ lie on a line $L(U,W)$ in $\Glm$. Finally, as $\{P,Q,R\}\subset\Olm$ and $|L(U,W)\cap\Olm|\geq 3$, the result follows from Corollary \ref{cor: lineschubert}. This completes the proof of the theorem.
	
\end{proof}

\section{Disc and Lines in Grassmannians}
In this section, we study the intersection of certain lines with discs in $\Glm$ of different radius around some fixed points of $\Glm$. Since the support of a minimum weight parity checks of $\Calm$ lies on a line in the corresponding Schubert and hence Grassmann variety, the interaction of lines with these discs will be used extensively in the construction of orthogonal parity checks for certain Schubert codes. We use these parity checks for majority logic decoding for these codes. Throughout this section $P$ and $Q$ are two fixed points of the Grassmannian $\Glm$ satisfying  $\mathrm{dist(P,\; Q)}=i$, i.e., $Q\in\Picl\setminus\Piicl$. In the next theorems, we discuss the intersections  of lines passing through $Q$ with the disc of radius $i$ centered at $P$. 
\begin{theorem}
	\label{thm: line and closure}
	Let $P, Q\in \Glm$ be points with $Q\in\Picl\setminus\Piicl$ and let $L(U, W)$ be a line in $\Glm$ through $Q$. Then 
	\[
	|L(U, W)\cap \Picl|\geq 2 \iff  P\cap Q\subseteq U \text{ or }W\subseteq P+Q.
	\]
\end{theorem}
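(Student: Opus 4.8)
The plan is to reduce the whole statement to a short count in the two fixed integers $a:=\dim(P\cap U)$ and $b:=\dim(P\cap W)$. First I would record three inequalities. Since $Q$ lies on $L(U,W)$ we have $U\subseteq Q\subseteq W$, hence $P\cap U\subseteq P\cap Q\subseteq P\cap W$; and since $Q\in\Picl\setminus\Piicl$ we have $\dim(P\cap Q)=\ell-i$. Therefore $a\le\ell-i\le b$. Moreover $(P\cap W)\cap U=P\cap U$, so $(P\cap W)/(P\cap U)$ embeds into the two-dimensional space $W/U$, giving $b\le a+2$. These three facts will drive both directions.

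For the implication $\Leftarrow$ I would show that each of the two conditions forces the \emph{whole} line $L(U,W)$ into $\Picl$, which already gives $q+1\ge2$ points in the intersection. If $P\cap Q\subseteq U$, then for every $T\in L(U,W)$ we have $P\cap Q\subseteq U\subseteq T$ and $P\cap Q\subseteq P$, so $P\cap Q\subseteq P\cap T$ and hence $\dim(P\cap T)\ge\ell-i$. If instead $W\subseteq P+Q$, then every $T\in L(U,W)$ satisfies $T\subseteq W\subseteq P+Q$, so $P+T\subseteq P+Q$ and $\dim(P+T)\le\ell+i$. In either case $T\in\Picl$ for all $T$ on the line.

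For the implication $\Rightarrow$ I would use that $Q$ itself lies in $L(U,W)\cap\Picl$, so if this set has at least two elements there is a point $T\ne Q$ of the line with $\dim(P\cap T)\ge\ell-i$. For two distinct points of a line in $\Glm$ one has $T\cap Q=U$ and $T+Q=W$, so $(P\cap T)+(P\cap Q)\subseteq P\cap W$ while $(P\cap T)\cap(P\cap Q)=P\cap(T\cap Q)=P\cap U$. The Grassmann dimension identity then yields
\[
2(\ell-i)\;\le\;\dim(P\cap T)+\dim(P\cap Q)\;=\;\dim\bigl((P\cap T)+(P\cap Q)\bigr)+\dim(P\cap U)\;\le\;b+a .
\]
Combined with $a\le\ell-i$ and $b\le a+2$, this leaves exactly two possibilities. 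Either $a=\ell-i$, in which case $P\cap U\subseteq P\cap Q$ is an equality of dimensions, hence $P\cap Q=P\cap U\subseteq U$; or $a=\ell-i-1$, in which case the displayed inequality forces $b=a+2$, whence $\dim\bigl((P\cap W)+U\bigr)=b+(\ell-1)-a=\ell+1=\dim W$, so $W=(P\cap W)+U\subseteq P+Q$ because $P\cap W\subseteq P$ and $U\subseteq Q$. In both cases the right-hand side of the equivalence holds, completing the argument.

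The only genuine obstacle is the $\Rightarrow$ direction: one must notice that the existence of a \emph{single} extra point of $\Picl$ on the line, together with the rigid combinatorics $T\cap Q=U$, $T+Q=W$, pins down $(a,b)$ to one of exactly two configurations, each of which yields one of the two geometric conclusions. Everything else is routine linear algebra — Grassmann's identity and the embedding $(P\cap W)/(P\cap U)\hookrightarrow W/U$ — which I would not spell out in full.
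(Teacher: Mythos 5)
Your proof is correct and follows essentially the same route as the paper: the $\Leftarrow$ direction is identical (each condition forces the whole line into $\Picl$), and your $\Rightarrow$ direction rests on the same two ingredients — the rigidity $T\cap Q=U$, $T+Q=W$ for distinct points of the line, and the Grassmann dimension identity. The only cosmetic difference is that you organize the forward direction as a case split on $a=\dim(P\cap U)\in\{\ell-i-1,\ell-i\}$, while the paper assumes $P\cap Q\nsubseteq U$ (which is exactly your case $a=\ell-i-1$) and computes $\dim\bigl(Q+(P\cap T)\bigr)=\ell+1$ instead of $\dim\bigl((P\cap W)+U\bigr)=\ell+1$ to reach $W\subseteq P+Q$.
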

\begin{proof}
	Let $U, W$ and $Q$ be fixed such that $L(U, W)$ is a line through $Q$, i.e., let $U$ be an $\ell-1$-dimensional subspace of $Q$ and $W$ be an $\ell+1$-dimensional subspace of $V$ containing $Q$.
	
	Suppose $|L(U, W)\cap \Picl|\geq 2$. Let $T\in L(U, W)\cap \Picl$ be a point different from $Q$, i.e.,  $U=Q\cap T$ and $W= Q+ T$. Also, assume that $P\cap Q\nsubseteq U$. We claim that in this case $W\subseteq P + Q$. Since $P\cap Q\subseteq Q$, $\dim(P\cap Q)=\ell-i$ and $U$ is a hyperplane in $Q$ not containing $P\cap Q$, we obtain $\dim (P\cap Q\cap U)=\ell-i-1$. Since $P\cap Q\cap T= P\cap Q\cap U$ and $\dim (P\cap T)\geq \ell-i$, we can conclude
	\begin{align}\label{eq:dimineq} 
	\dim (Q+ (P\cap T)) & =\dim(Q) +\dim(P\cap T) - \dim(Q\cap P\cap T)\notag\\
	&\geq  \ell +(\ell - i)-(\ell - i - 1) \notag\\
	&= \ell +1.
	\end{align}	 
	Since $Q+ (P\cap T)\subseteq Q +T = W$, we see $W = Q + (P\cap T),$ by equation \eqref{eq:dimineq} and the fact that $\dim W=\ell+1$. Consequently, $W\subseteq Q + P$. This proves the first part of the theorem.\\
	Conversely, let $P\cap Q\subseteq U\subset Q\subset W$ and let $T\in L(U,W)$ be arbitrary. Then by definition $U\subset T$ and hence $P\cap Q\subset T$. Consequently $P\cap Q\subseteq P\cap T$. Hence $\dim (P\cap T)\geq\dim(P\cap Q)= \ell-i$, implying $T\in \Picl$. Finally, assume that $W\subseteq P + Q$ and let $ L(U, W)$ be a line through $Q$. Choose $T\in L(U, W)$ arbitrarily. Then $T\subset W\subseteq P + Q$ and hence $T + P\subseteq P + Q$. This proves that $\dim( T +P)\leq \ell + i$ and hence $T\in \Picl$.
\end{proof}	
Recall that $\overline{P}^{(1)}$ is the collection of all points in $\Glm$ that lies on some line through $P$.
\begin{corollary}
	\label{cor: i2case}
	Let $P,Q\in\Glm$ be two points and let $Q$ be on a line in $\Glm$ passing through $P$. Let $L(U,W)$ be a line through $Q$. Then $L(U, W)$ contains some point other than $Q$ that lies on a line through $P$ if and only if $U=P\cap Q$ or $W=P+Q$.
\end{corollary}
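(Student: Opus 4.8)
The plan is to deduce this corollary directly from Theorem~\ref{thm: line and closure} by specializing to the case $i=1$ and then upgrading the two inclusions it produces to equalities via an elementary dimension count. First I would recast everything in the language of discs. As recalled just above the corollary, $\overline{P}^{(1)}$ is precisely the set of points of $\Glm$ lying on some line through $P$; hence the hypothesis that $Q$ lies on a line through $P$ (with $Q\neq P$) says exactly that $Q\in\overline{P}^{(1)}\setminus\overline{P}^{(0)}$, i.e. $\mathrm{dist}(P,Q)=1$, and a point $T\in L(U,W)$ with $T\neq Q$ lies on a line through $P$ exactly when $T\in\overline{P}^{(1)}$. Since $Q$ itself already belongs to $L(U,W)\cap\overline{P}^{(1)}$, the assertion ``$L(U,W)$ contains a point other than $Q$ that lies on a line through $P$'' is equivalent to $|L(U,W)\cap\overline{P}^{(1)}|\geq 2$.

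Next I would invoke Theorem~\ref{thm: line and closure} with $i=1$ applied to the line $L(U,W)$ through $Q$: it gives $|L(U,W)\cap\overline{P}^{(1)}|\geq 2$ if and only if $P\cap Q\subseteq U$ or $W\subseteq P+Q$. To finish, I would observe that for $i=1$ each of these inclusions is automatically an equality. Indeed $\mathrm{dist}(P,Q)=1$ forces $\dim(P\cap Q)=\ell-1=\dim U$ and $\dim(P+Q)=\ell+1=\dim W$, so $P\cap Q\subseteq U$ is equivalent to $U=P\cap Q$ and $W\subseteq P+Q$ is equivalent to $W=P+Q$. Chaining the three equivalences yields exactly the statement of the corollary, in both directions.

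There is no genuine obstacle here: the substantive content is entirely carried by Theorem~\ref{thm: line and closure}, and what remains is bookkeeping. The only points deserving a little care are the identification of $\overline{P}^{(1)}$ with the union of lines through $P$ — needed so that the disc-theoretic conclusion of Theorem~\ref{thm: line and closure} matches the line-theoretic phrasing of the corollary — and the tacit assumption $Q\neq P$, which is what guarantees $\dim(P\cap Q)=\ell-1$ and $\dim(P+Q)=\ell+1$ and thus legitimizes the passage from inclusions to equalities.
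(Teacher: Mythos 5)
Your proposal is correct and follows exactly the paper's route: the paper's proof of this corollary is simply ``take $i=1$ in Theorem~\ref{thm: line and closure},'' and your write-up fills in the same specialization, together with the (correct) observation that the inclusions $P\cap Q\subseteq U$ and $W\subseteq P+Q$ become equalities because both sides have matching dimensions $\ell-1$ and $\ell+1$ when $\mathrm{dist}(P,Q)=1$. The only extra care you added --- identifying $\overline{P}^{(1)}$ with the union of lines through $P$ and noting the tacit hypothesis $Q\neq P$ --- is consistent with the paper's intent.
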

\begin{proof}
	The proof follows from the above theorem taking $i=1$.
\end{proof}	
\begin{corollary}
	\label{cor: line and closure}
	Let $Q\in \Glm$ be a point such that $Q\in\Picl\setminus\Piicl$ and let $L(U, W)$ be a line in $\Glm$ through $Q$. Then 
	\[
	|L(U, W)\cap \Picl|= 1 \iff  P\cap Q\nsubseteq U \text{ and }W\nsubseteq P+Q.
	\]
\end{corollary}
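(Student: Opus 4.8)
The plan is to derive this directly as the logical complement of Theorem \ref{thm: line and closure}. The key preliminary observation is that the intersection $L(U,W)\cap\Picl$ is never empty: by hypothesis $Q\in\Picl$, and $Q$ lies on the line $L(U,W)$ by assumption, so $Q\in L(U,W)\cap\Picl$ and hence $|L(U,W)\cap\Picl|\ge 1$ always. Consequently the assertion $|L(U,W)\cap\Picl|=1$ is equivalent to the failure of $|L(U,W)\cap\Picl|\ge 2$.

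Next I would simply invoke Theorem \ref{thm: line and closure}, which applies verbatim under the present hypotheses ($Q\in\Picl\setminus\Piicl$ and $L(U,W)$ a line through $Q$) and states that
\[
|L(U,W)\cap\Picl|\ge 2\iff P\cap Q\subseteq U\ \text{ or }\ W\subseteq P+Q.
\]
Negating both sides and applying De Morgan's law yields
\[
|L(U,W)\cap\Picl|=1\iff P\cap Q\nsubseteq U\ \text{ and }\ W\nsubseteq P+Q,
\]
which is exactly the statement of the corollary.

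I do not anticipate any genuine obstacle; the only point that needs a moment's care is the remark that the intersection has size at least one, so that ``not $\ge 2$'' really means ``$=1$'' rather than ``$=0$'', and this is immediate from $Q$ being a common element of the line and the disc. One could alternatively unwind the argument by reusing the two explicit constructions in the converse direction of Theorem \ref{thm: line and closure}, but the De Morgan route is the cleanest and I would present it that way.
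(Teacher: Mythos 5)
Your proof is correct and is essentially identical to the paper's: note that $Q\in L(U,W)\cap\Picl$ forces the intersection to have size at least one, then take the contrapositive of Theorem \ref{thm: line and closure}. Nothing further is needed.
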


\begin{proof}
	Note that, $Q\in L(U, W)\cap\Picl$.	Therefore $	|L(U, W)\cap \Picl|\geq 1$. Now the result follows from the theorem. 
\end{proof}	

\begin{corollary}
	\label{cor: line contained in Picl}
	Let $P\in \Glm$ be a point and $L(U, W)$ be an arbitrary line in $\Glm$. Then for any $1\leq i\leq \ell$
	\[
	\text{either }|L(U, W)\cap \Picl|\leq 1 \text{ or }L(U, W)\subset \Picl
	\]
\end{corollary}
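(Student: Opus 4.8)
The plan is to reduce this trichotomy statement to the dichotomy already established in Theorem~\ref{thm: line and closure}, by a careful case split on whether $Q := $ some reference point of $L(U,W) \cap \Picl$ lies in $\Picl \setminus \Piicl$ or in $\Piicl$. First I would dispose of the trivial case $|L(U,W) \cap \Picl| \le 1$, which is already one of the two alternatives. So assume $|L(U,W) \cap \Picl| \ge 2$; the goal is to show $L(U,W) \subset \Picl$, i.e.\ \emph{every} point of the line lies in $\Picl$.

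Pick two distinct points $Q_1, Q_2 \in L(U,W) \cap \Picl$, so that $U = Q_1 \cap Q_2$ and $W = Q_1 + Q_2$. Each $Q_j$ satisfies $\dim(P \cap Q_j) \ge \ell - i$. I would first argue that one of the hypotheses $P \cap Q_1 \subseteq U$ or $W \subseteq P + Q_1$ of Theorem~\ref{thm: line and closure} must hold, \emph{provided} $Q_1 \in \Picl \setminus \overline{P}^{(j-1)}$ with $j = \mathrm{dist}(P,Q_1)$ — but the cleanest route is to avoid having to know the exact distance of $Q_1$. The key observation is: if $P \cap Q_1 \subseteq U$, then since $U \subseteq T$ for every $T \in L(U,W)$, we get $P \cap Q_1 \subseteq P \cap T$, hence $\dim(P \cap T) \ge \ell - i$, so $T \in \Picl$; thus $L(U,W) \subset \Picl$. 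Similarly, if $W \subseteq P + Q_1$, then $T \subseteq W \subseteq P + Q_1$ gives $T + P \subseteq P + Q_1$, so $\dim(T+P) \le \ell + i$, whence $T \in \Picl$; again $L(U,W) \subset \Picl$. So it suffices to show that \emph{at least one} of $P \cap Q_1 \subseteq U$, $W \subseteq P + Q_1$ holds.

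To see this, suppose both fail: $P \cap Q_1 \nsubseteq U$ and $W \nsubseteq P + Q_1$. From $P \cap Q_1 \nsubseteq U$ and $U$ a hyperplane of $Q_1$ containing $Q_1 \cap Q_2$, one gets $\dim(P \cap Q_1 \cap Q_2) = \dim(P \cap Q_1) - 1$; combined with $\dim(P \cap Q_2) \ge \ell - i$ and $\dim(P\cap Q_1) \ge \ell - i$, a dimension count for $Q_1 + (P \cap Q_2) \subseteq Q_1 + Q_2 = W$ forces $W = Q_1 + (P \cap Q_2) \subseteq P + Q_1$, contradicting $W \nsubseteq P + Q_1$. (If instead $\dim(P \cap Q_1) > \ell - i$, the same count still yields $\dim(Q_1 + (P\cap Q_2)) \ge \ell + 1$ and hence the same conclusion; so the argument is uniform.) Hence one of the two containments holds, and by the previous paragraph $L(U,W) \subset \Picl$.

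The main obstacle I anticipate is bookkeeping in the dimension inequality when $\dim(P \cap Q_1)$ or $\dim(P \cap Q_2)$ strictly exceeds $\ell - i$: one must check that the inequality $\dim(Q_1 + (P \cap Q_2)) \ge \ell + 1$ still goes through, which it does because $\dim(Q_1 + (P\cap Q_2)) = \ell + \dim(P\cap Q_2) - \dim(Q_1 \cap P \cap Q_2) \ge \ell + \dim(P \cap Q_2) - \dim(P \cap Q_1 \cap Q_2)$ and the latter two differ by exactly $1$ once $P \cap Q_1 \nsubseteq U$ — but one should double-check that $P \cap Q_1 \cap Q_2 = P \cap U$ and that removing one dimension from $P \cap Q_1$ (not from $P \cap Q_2$) is the correct accounting. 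Once that is pinned down the corollary follows immediately, and in fact this also reproves Corollary~\ref{cor: line and closure} as the contrapositive of the $|L(U,W)\cap\Picl| \ge 2$ branch.
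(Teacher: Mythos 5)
Your reduction to the two containments of Theorem \ref{thm: line and closure}, and the first half of your argument (either containment forces $L(U,W)\subset \overline{P}^{(i)}$), are fine. The gap is the claim that for an \emph{arbitrary} choice of $Q_1\in L(U,W)\cap\overline{P}^{(i)}$ at least one of $P\cap Q_1\subseteq U$ or $W\subseteq P+Q_1$ must hold, and specifically the parenthetical assertion that the dimension count is ``uniform'' when $\dim(P\cap Q_1)>\ell-i$. The correct count (assuming $P\cap Q_1\nsubseteq U$) is
\[
\dim\bigl(Q_1+(P\cap Q_2)\bigr)=\ell+\dim(P\cap Q_2)-\dim(P\cap Q_1\cap Q_2)=\ell+1+\dim(P\cap Q_2)-\dim(P\cap Q_1),
\]
because it is $P\cap Q_1$, not $P\cap Q_2$, that drops by exactly one dimension upon intersecting with the hyperplane $U$ of $Q_1$ (this is precisely the accounting point you flagged but resolved the wrong way). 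This is $\geq\ell+1$ only when $\dim(P\cap Q_2)\geq\dim(P\cap Q_1)$, which is automatic in Theorem \ref{thm: line and closure} since there $\dim(P\cap Q_1)=\ell-i$ is minimal on $\overline{P}^{(i)}$, but not for arbitrary $Q_1\in\overline{P}^{(i)}$. A concrete failure: $\ell=2$, $i=1$, $V=\Fq^4$, $P=\langle e_1,e_2\rangle$, $U=\langle e_1\rangle$, $W=\langle e_1,e_2,e_3\rangle$, $Q_1=P$, $Q_2=\langle e_1,e_3\rangle$. Here $L(U,W)\subset\overline{P}^{(1)}$, yet $P\cap Q_1=P\nsubseteq U$ and $W\nsubseteq P+Q_1=P$: both containments fail, and your intended contradiction never materializes.

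The fix is small: relabel so that $\dim(P\cap Q_1)\leq\dim(P\cap Q_2)$ (i.e.\ take $Q_1$ to be the point of larger injection distance from $P$). Then the displayed count gives $\dim(Q_1+(P\cap Q_2))\geq\ell+1$, hence $W=Q_1+(P\cap Q_2)\subseteq P+Q_1$, and since $\dim(P+Q_1)=2\ell-\dim(P\cap Q_1)\leq\ell+i$, every $T\subset W$ lies in $\overline{P}^{(i)}$; the case $P\cap Q_1\subseteq U$ is handled as you wrote. With that one adjustment your argument is complete and is essentially the route the paper explicitly declines in favor of a shorter one: by Example \ref{ex: schubert} the disc $\overline{P}^{(i)}$ is itself a Schubert variety, so Corollary \ref{cor: lineschubert} (a line meeting a Schubert variety in at least two points is contained in it) gives the corollary in one line.
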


\begin{proof}
We can prove this using Theorem \ref{thm: line and closure}. But we give a shorter proof: It follows from Example \ref{ex: schubert} and Corollary \ref{cor: lineschubert}.
%
%
%
\end{proof}
In the previous theorem we studied the intersection of a line passing through a point $Q$ that is at distance $i$ from $P$, and the disc of radius $i$ around $P$. In the next theorem we study the intersection of such a line and a disc around $P$ of smaller radii.

\begin{theorem}
	\label{thm: line and closure2}
	Let $P$ and $Q$ be two points in $\Glm$ satisfying $Q\in\Picl\setminus\Piicl$ and let $L(U, W)$ be a line in $\Glm$ through $Q$. Then $|L(U, W)\cap \Pclj| =0$ for $j\leq i-2$ and $|L(U, W)\cap \Piicl|\leq 1$. Further,
	\[
	|L(U, W)\cap \Piicl|= 1 \iff  P\cap Q\subseteq U \text{ and }W\subseteq P+Q.
	\]
\end{theorem}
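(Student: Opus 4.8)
The plan is to reduce everything to controlling the single quantity $\dim(T\cap P)$ as $T$ runs over the line $L(U,W)$. Every point $T\in L(U,W)$ satisfies $U\subseteq T\subseteq W$, hence $U\cap P\subseteq T\cap P\subseteq W\cap P$, so it suffices to pin down $\dim(U\cap P)$ and $\dim(W\cap P)$. Both are governed by the hypothesis $\dim(P\cap Q)=\ell-i$. Since $U$ is a hyperplane of $Q$ and $U\cap P=U\cap(Q\cap P)$, one gets $\ell-i-1\le\dim(U\cap P)\le\ell-i$. Since $Q\subseteq W$ and $(W\cap P)\cap Q=P\cap Q$, the subspaces $W\cap P$ and $Q$ of the $(\ell+1)$-dimensional space $W$ satisfy $\dim(W\cap P)+\ell-(\ell-i)\le\ell+1$, i.e. $\ell-i\le\dim(W\cap P)\le\ell-i+1$. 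Therefore $\ell-i-1\le\dim(T\cap P)\le\ell-i+1$ for every $T\in L(U,W)$, so $\mathrm{dist}(P,T)\ge i-1$; since moreover $Q\notin\Pclj$ for $j\le i-2$, this already yields $|L(U,W)\cap\Pclj|=0$ for $j\le i-2$. Also $T\in\Piicl$ is equivalent to $\dim(T\cap P)=\ell-i+1$, and in particular $Q\notin\Piicl$ because $\dim(Q\cap P)=\ell-i$.

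Next I would establish $|L(U,W)\cap\Piicl|\le 1$. The short route uses that $\Piicl$ is a Schubert variety (Example \ref{ex: schubert}): by Corollary \ref{cor: line contained in Picl} the line either meets $\Piicl$ in at most one point or is contained in it, and the latter is excluded since $Q\in L(U,W)\setminus\Piicl$. (Directly: two distinct points $T_1,T_2\in L(U,W)\cap\Piicl$ are both $\ne Q$, so each has $\dim(T_j\cap P)=\ell-i+1=\dim(W\cap P)$, forcing $T_1\cap P=T_2\cap P=W\cap P$; this common space lies in $T_1\cap T_2=U\subseteq Q$, hence in $P\cap Q$, contradicting $\dim(P\cap Q)=\ell-i$.)

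For the final equivalence I would argue both directions. Assume $P\cap Q\subseteq U$ and $W\subseteq P+Q$; then $\dim(U\cap P)=\ell-i$ (from $P\cap Q\subseteq U\cap P\subseteq Q\cap P$) and $\dim(W\cap P)=\ell-i+1$ (since $W\subseteq P+Q$ and $Q\subseteq W$ force $W+P=P+Q$, of dimension $\ell+i$). Put $T:=U+(W\cap P)$. Using $U\cap(W\cap P)=U\cap P$ a rank count gives $\dim T=\ell$, and $U\subseteq T\subseteq W$, so $T\in L(U,W)$; and $W\cap P\subseteq T\cap P\subseteq W\cap P$ gives $\dim(T\cap P)=\ell-i+1$, so $T\in\Piicl$, making the count exactly $1$ by the previous step. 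Conversely, let $T$ be the unique point of $L(U,W)\cap\Piicl$; it is $\ne Q$, so $U=T\cap Q$ and $W=T+Q$, while $\dim(T\cap P)=\ell-i+1$ forces $W\cap P=T\cap P$. From $U+(T\cap P)\subseteq T$ and $\dim\bigl(U+(T\cap P)\bigr)=(\ell-1)+(\ell-i+1)-\dim(U\cap P)\le\ell$ we get $\dim(U\cap P)\ge\ell-i$, hence $\dim(U\cap P)=\ell-i$, i.e. $P\cap Q=U\cap P\subseteq U$; consequently $U+(T\cap P)=T$ and $W=T+Q=(T\cap P)+U+Q=(T\cap P)+Q\subseteq P+Q$.

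The main obstacle is simply keeping the linear-algebra bookkeeping honest, the crucial inequalities being $\dim(W\cap P)\le\ell-i+1$ (the geometric reason a line through $Q$ cannot dip two radii closer to $P$) and, in the converse, the observation that $\dim(U\cap P)=\ell-i-1$ is incompatible with $U+(T\cap P)\subseteq T$ because it would make the left-hand side $(\ell+1)$-dimensional. Once these are in place, everything else is routine rank identities together with the already-established facts that discs are Schubert varieties and that a line meeting a Schubert variety twice lies in it.
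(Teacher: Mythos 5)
Your proof is correct and follows essentially the same route as the paper: the same dimension counts on $U\cap P$ and $W\cap P$, the same appeal to Corollary \ref{cor: line contained in Picl} for $|L(U,W)\cap\Piicl|\le 1$, and the same witness $T=U+(P\cap W)$ for the existence direction. Your only departures are cosmetic streamlinings — deriving the sandwich $U\cap P\subseteq T\cap P\subseteq W\cap P$ once up front instead of the triangle inequality, and obtaining $W=(T\cap P)+Q\subseteq P+Q$ directly rather than via the paper's comparison of $W+P$ with $P+Q$ — and both are sound.
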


\begin{proof}
	Suppose $P, Q\in\Glm$ be given as in the lemma, and let $L(U, W)$ be any line through $Q$.  Since $\mathrm{dist(P,\;Q)}=i$ and $\mathrm{dist(T,\; Q)}\le 1$ for every $T\in L(U, W)$,  the triangle inequality gives $\mathrm{dist(P,\;T)}\ge i-1 $, and consequently 
\[
|L(U, W)\cap \Pclj |= 0 \text{ for all } j\leq i-2.
\]		
	Furthermore, since $Q\notin\Piicl$, and $Q\in L(U, W)$, clearly $L(U, W)\nsubseteq \Piicl$. From Corollary $\ref{cor: line contained in Picl}$ we get $|L(U, W)\cap \Piicl|\leq 1$.
	
	What is left to prove is the last statement. Assume $|L(U, W)\cap \Piicl|= 1$  and let $T\in L(U,W)\cap\Piicl$. Then $T\neq Q$, implying $U=T\cap Q$ and $W=T+Q$. Since $T\in\Piicl$ we get $\dim(P\cap T)\geq \ell-i+1$. Since we have already shown that $|L(U, W)\cap \overline{P}^{(i-2)}|=0$, we may conclude that $\dim(P\cap T)= \ell-i+1$. Since $U$ is a hyperplane in $T$, this implies $\dim (P\cap U)\geq \ell-i$. On the other hand, $P\cap U\subseteq P\cap Q$ and hence $\dim (P\cap U)\leq \ell-i$. Therefore, $\dim (P\cap U)= \ell-i= \dim (P\cap Q)$. This shows $P\cap U= P\cap Q$.
	This gives $P\cap Q\subseteq U$. On the other hand, since $T\subset W$ and $Q\subset W$ is a hyperplane, we get $\dim (P\cap W)=\ell-i+1$ and hence $\dim (P+ W)=\ell +i$. Note that 
	\[
	W+ P= (W + Q) + P= W+ (P + Q).
	\]
	This implies that  $P + Q \subseteq W+ P$. But since both of these subspaces are of dimension $\ell+i$, we get $W\subset P+Q$.\\
	Conversely, let $L(U, W)$ be a line through $Q$ satisfying $P\cap Q\subset U\subset W\subset P+ Q$. We already have seen that  $|L(U, W)\cap \Piicl|\leq 1$, it is sufficient to show that $L(U, W)\cap \Piicl$ is not empty. Observe that $W$ is a subspace of $P+Q$ of codimension $i-1$, as $\dim(P + Q)=\ell+i$ and $P$ is an $\ell$ dimensional subspace of $P+Q$. This implies that $\dim (P\cap W)\geq \ell-i+1$. But as $Q$ is a hyperplane of $W$, and $\dim (P\cap Q)=\ell-i$, we also may conclude that $\dim (P\cap W)\leq\ell-i+1$. Consequently, $\dim (P\cap W)=\ell-i+1$. Since $P\cap Q\subset U\subset Q$, we get $P\cap Q= P\cap U$. Hence $\dim (P\cap U)=\ell-i$, and as $U\subseteq W$, we get
	$$
	\dim (U+ (P\cap W))=\dim U + \dim (P\cap W) -\dim (P\cap U) =\ell.
	$$
	Note that $U\subset (U+ (P\cap W)\subset W $, hence $U+ (P\cap W) \in L(U, W)$. Further,  $\dim (P\cap (U+ (P\cap W))\geq \dim (P\cap W)=\ell-i+1$. Therefore, 
	\[
	U+ (P\cap W) \in L(U, W)\cap \Piicl.
	\]
	This completes the proof of the Theorem. 
	
\end{proof}
\begin{lemma}
	\label{lemma: closure of line}
	Let $P$, $T$ be two distinct points of $\Glm$ lying on a line $L(U, W)$ in $\Glm$. Let $Q\in\Glm$ be any point and $1\leq i\leq \ell$ be such that $Q\in\Picl\cap \Ticl$. Then $Q\in\Ricl$ for any point $R\in L(U, W)$.
\end{lemma}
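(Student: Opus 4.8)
The plan is to push everything into the $(\ell+1)$-dimensional space $W$ and then argue inside the pencil of hyperplanes of $W$ that contain $U$. Since $P$ and $T$ are distinct points of $L(U,W)$, we have $U=P\cap T$ and $W=P+T$, and every $R\in L(U,W)$ satisfies $U\subseteq R\subseteq W$ with $\dim R=\ell$, so $U$ is a hyperplane of $R$. Put $Q':=Q\cap W$. Because $P$, $T$ and $R$ all lie inside $W$, we have $P\cap Q=P\cap Q'$, $T\cap Q=T\cap Q'$ and $R\cap Q=R\cap Q'$, so the hypothesis becomes $\dim(P\cap Q')\ge \ell-i$ and $\dim(T\cap Q')\ge\ell-i$, while the goal becomes $\dim(R\cap Q')\ge\ell-i$ for every $R\in L(U,W)$.

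I would then set $K:=U\cap Q'$ and split on $u:=\dim K$. If $u\ge\ell-i$, then $K\subseteq R\cap Q'$ for every $R\in L(U,W)$ (as $K\subseteq U\subseteq R$ and $K\subseteq Q'$), whence $\dim(R\cap Q')\ge u\ge\ell-i$; this covers in particular the case $U\subseteq Q'$, using $i\ge 1$. If instead $u\le\ell-i-1$, then since $U$ is a hyperplane of $P$ one has $\dim(P\cap Q')\le\dim(U\cap Q')+1=u+1\le\ell-i$, which together with $\dim(P\cap Q')\ge\ell-i$ forces $u=\ell-i-1$ and $\dim(P\cap Q')=u+1$; the same holds for $T$. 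In particular there exist $v_P\in(P\cap Q')\setminus U$ and $v_T\in(T\cap Q')\setminus U$.

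To conclude, write $W=U\oplus\langle x,y\rangle$ with $P=U\oplus\langle x\rangle$ and $T=U\oplus\langle y\rangle$ (possible because $P\ne T$), and, after rescaling, $v_P=k_P+x$, $v_T=k_T+y$ with $k_P,k_T\in U$. Any $R\in L(U,W)$ has the form $R=U\oplus\langle ax+by\rangle$ with $(a,b)\ne(0,0)$. Then $w:=a\,v_P+b\,v_T$ lies in $Q'$, lies in $R$ (its $U$-component is $a k_P+b k_T\in U\subseteq R$), and is congruent to $ax+by\ne 0$ modulo $U$, so $w\in(R\cap Q')\setminus U$. Hence $R\cap Q'$ strictly contains $K=U\cap Q'$, giving $\dim(R\cap Q')\ge u+1=\ell-i$, i.e.\ $Q\in\Ricl$. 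The only step needing care is the subcase $u\le\ell-i-1$: one must observe that the two dimension hypotheses pin $\dim(P\cap Q')$ and $\dim(T\cap Q')$ down to exactly $\ell-i$, thereby producing vectors of $Q'$ genuinely outside $U$; given that, the pencil computation above is routine.
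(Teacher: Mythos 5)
Your proof is correct and follows essentially the same route as the paper's: both dispose of the case $\dim(U\cap Q)\ge \ell-i$ immediately, then use the hyperplane $U\subset P$ (and $U\subset T$) to pin down $\dim(U\cap Q)=\ell-i-1$ and $\dim(P\cap Q)=\dim(T\cap Q)=\ell-i$, and finally show that $R\cap Q$ gains one dimension over $U\cap Q$. The only difference is cosmetic: the paper obtains the last step from $\dim(Q\cap W)\ge \dim(Q\cap P)+\dim(Q\cap T)-\dim(Q\cap U)=\ell-i+1$ together with the fact that $R$ is a hyperplane of $W$, whereas you exhibit the extra vector $a\,v_P+b\,v_T$ explicitly in the pencil over $U$.
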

\begin{proof}
	Since $P, T\in L(U, W)$, we have $U=P\cap T$ and $W= P+T$. Let $U\subset R\subset W$ be any point. Note that if $\dim (U\cap Q)\geq \ell-i$ then $\dim (R\cap Q)\geq \ell-i$ and hence $Q\in \Ricl$. So we assume $\dim (U\cap Q)< \ell-i$. Now since, $Q\in\Picl\cap \Ticl$ and $U$ is a hyperplane of $P$ and $T$, 
	\[
	\dim (Q\cap P) = \dim(Q\cap T)=\ell-i \text{ and }\dim (Q\cap U)=\ell-i-1.
	\]
	We also have $W=P + T$ and $(Q\cap P) + (Q\cap T)\subseteq Q\cap W$. Therefore we get,
	\begin{eqnarray*}
		\dim (Q\cap W)&\geq &\dim (Q\cap P)+ \dim (Q\cap T)-\dim (T\cap P\cap Q)\\
		&=& (\ell-i) + (\ell-i)- (\ell-i-1)\\
		&=& \ell -i +1.
	\end{eqnarray*}
	Now, since $R$ is a hyperplane of $W$, we get $\dim (Q\cap R)\geq\ell-i$ and consequently, $Q\in\Ricl$.
\end{proof}	
\begin{remark}
	\label{rmk: remark1}
	The results of this section are true for any Grassmann variety over any field. In other words, if we consider an arbitrary field $\FF$ and the Grassmannian $G_\ell(\FF^m)$ of all $\ell$-planes of $\FF^m$, then  Theorems \ref{thm: line and closure}, \ref{thm: line and closure2} and Lemma \ref{lemma: closure of line} proved in this section are true for Grassmannians $G_\ell(\FF^m)$  and lines in $G_\ell(\FF^m)$. But as we are interested in linear codes associated to  Grassmannians and Schubert varieties, we have defined these varieties over finite fields. 
\end{remark}
\section{Majority logic decoding for Schubert code $\calm$}
In this section, we use  majority logic decoding  for error correction for the Schubert code $\calm$. Therefore, for the rest of the article, we fix $\ell=2$, $\a=(\a_1, \a_2)$ and $A_1\subset A_2$ a partial flag of dimension sequence $\a$. From the definition of Schubert varieties, we know that $\olm$ is the collection of all two-dimensional subspaces of $A_2$ that intersect $A_1$ nontrivially. Therefore, without loss of generality, we may assume that $\a_2=m$ and $A_2=V$. Hence 
\[
\olm=\{P\in\glm: \dim(P\cap A_1)\ge 1\}.
\]
Let $\calm$ be the corresponding Schubert code. Then equation \eqref{eq: Schubparam} gives that $\calm$ is an $[n_\a, k_\a, d_\a]$ linear code, where
\begin{equation}
\label{eq: schubcode2}
	n_\a=\mathop{\sum_{\beta\in\mathbb{I}(2, m)}}_{\beta_1\le\a_1}q^{\beta_1+\beta_2-3},\quad k_\a=\frac{\a_1(2m-\a_1-1)}{2}, \text{ and }d_\a=q^{m+\a_1-3}.
\end{equation}
The idea is to use lines in $\olm$ to construct, for each point $P\in\olm$, sets of parity checks for $\calm$ ``orthogonal" on $P$ and use them for majority voting for errors. But before going into details we should recall the notion of  orthogonality for parity checks of a code. For a general reference on these topics, we refer to \cite[Ch 13.7]{MS} for the binary case and \cite[Ch 1]{M} for the $q$-ary case. As usual,  if $C$ is a code and $C^\perp$ is the dual of $C$ then a codeword of $C^\perp$ is called a {\emph parity check} of $C$
\begin{definition}
	Let $C$ be an $[n, k]$ code. A set $\mathcal J$ of $J$ parity checks of $C$ is said to be orthogonal on the $i^{\it th}$ coordinate if the $J\times n$ matrix $H$ having these $J$ parity checks as rows satisfies the following: 
	\begin{enumerate}
		\item Each entry in the $i^{\it th}$ column of $H$ is $1$.
		\item The Hamming weight of any other column of $H$ is at most $1$, i.e., if $j \neq i$ and the $j^{\it th}$ column of $H$ contains a non-zero entry in the $r^{\it th}$ row, then this is the only non-zero entry in this column.
	\end{enumerate}
\end{definition}
The following theorem from \cite{M} guarantees that if we can produce certain parity checks for a code orthogonal on each coordinate, then we can correct errors using majority logic.
\begin{theorem}\cite[Ch 1,Thm 1]{M}
	\label{thm: Maj decod}
	Let $C$ be an $[n,k]$ code such that for each $1 \le i \le n$, there exists a set $\mathcal J$ of $J$ orthogonal parity checks on the $i^{\it th}$ coordinate. Then the corresponding majority logic decoder corrects up to $\lfloor J/2 \rfloor$ errors. 
\end{theorem}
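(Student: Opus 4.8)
The plan is to exhibit the one-step majority-logic decoder explicitly and to track the error in a single coordinate. Write the received word as $y = c + e$ with $c \in C$ and $\w(e) \le \lfloor J/2 \rfloor$; it suffices to recover $e$. Fix a coordinate $i$ and let $h_1, \dots, h_J \in C^\perp$ be the rows of the matrix $H$ orthogonal on coordinate $i$, so that $(h_r)_i = 1$ for every $r$ and every coordinate $j \ne i$ is nonzero in at most one of the $h_r$. Since $h_r \cdot c = 0$, the check sums $A_r := h_r \cdot y = h_r \cdot e = e_i + \sum_{j \ne i} (h_r)_j e_j$ depend only on $e$. The decisive structural fact, coming straight from condition (2) of orthogonality, is that the ``interference'' terms $\sum_{j \ne i} (h_r)_j e_j$, as $r$ ranges over $1, \dots, J$, involve pairwise disjoint sets of coordinates; hence the number of indices $r$ for which the interference term is nonzero is at most the number of nonzero entries of $e$ among the coordinates $\ne i$.

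Next I would split into two cases according to $e_i$. If $e_i = 0$, then all of the (at most $\lfloor J/2 \rfloor$) nonzero entries of $e$ lie among coordinates $\ne i$, so at most $\lfloor J/2 \rfloor \le J/2$ of the $A_r$ are nonzero; in particular no single nonzero value of $\Fq$ occurs strictly more than $J/2$ times among $A_1,\dots,A_J$. If $e_i \ne 0$, then at most $\lfloor J/2 \rfloor - 1$ nonzero entries of $e$ lie among coordinates $\ne i$, so the interference term vanishes for at least $J - (\lfloor J/2 \rfloor - 1) > J/2$ values of $r$, and for each such $r$ one has $A_r = e_i$; thus the nonzero element $e_i$ occurs strictly more than $J/2$ times among the $A_r$. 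Because at most one nonzero field element can occur in more than half of the $A_r$ (two such would force the total count to exceed $J$), the following rule is unambiguous: set $\hat e_i := v$ if some nonzero $v \in \Fq$ occurs in strictly more than $J/2$ of the values $A_1, \dots, A_J$, and $\hat e_i := 0$ otherwise.

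By the two cases above, $\hat e_i = e_i$ for every $i$, hence $\hat e = e$, and the decoder returns $c = y - \hat e$ correctly whenever $\w(e)\le \lfloor J/2\rfloor$. The only delicate point — the one I would be most careful about — is the tie-bookkeeping when $J$ is even: the argument must use the strict inequality ``$> J/2$'' in the decision rule together with the observation that $\w(e) \le \lfloor J/2 \rfloor$ and $e_i \ne 0$ already force at most $\lfloor J/2 \rfloor - 1$ errors off coordinate $i$, which is exactly what produces a strict majority in that case. Everything else is routine linear algebra over $\Fq$, and since the argument is uniform in $q$, no separate treatment of the binary case is needed.
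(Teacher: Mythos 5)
Your argument is correct and complete: the paper itself offers no proof of this theorem, citing it directly from Massey, and what you have written is precisely the standard one-step majority-logic decoding argument (disjointness of the off-$i$ supports of the $h_r$, the count of contaminated check sums, and the strict-majority threshold $>J/2$ with the key observation that $e_i\ne 0$ leaves at most $\lfloor J/2\rfloor-1$ errors off coordinate $i$). The tie-handling for even $J$ that you flag is indeed the only delicate point, and you resolve it correctly.
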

As we mentioned, we are going to use the lines from different points in $\olm$ to construct orthogonal parity checks, we should look into the lines in $\olm$ more closely. In the next two lemmas, we determine lines in $\olm$ passing through a fixed point $P\in\olm$.

\begin{lemma}
	\label{lemma: linesthroughpoint}
Let $P\in\olm$ be a point satisfying $P\subset A_1$. Then there exist
\[
{2\brack 1}_q{m-2\brack 1}_q
\]
many lines in $\olm$ passing through $P$.
\end{lemma}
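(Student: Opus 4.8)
The plan is to count lines $L(U,W)$ contained in $\olm$ that pass through the fixed point $P$, under the hypothesis $P\subset A_1$ (so $\dim A_1=\a_1\ge 2$, since $\a\ne(1,2)$). Recall that a line through $P$ in $\glm$ is exactly a set $L(U,W)$ with $U$ a $1$-dimensional subspace of $P$ and $W$ a $3$-dimensional subspace of $V$ containing $P$; the line is determined by the pair $(U,W)$ with $U\subset P\subset W$. So first I would parametrize: choose $U$ to be any line in the plane $P$, giving ${2\brack 1}_q$ choices, and then choose $W$ to be any $3$-space with $P\subset W\subset V$, giving ${m-2\brack 1}_q$ choices (the number of hyperplanes-complements, i.e. lines in $V/P$). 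This already yields the claimed count ${2\brack 1}_q{m-2\brack 1}_q$ of lines through $P$ in $\glm$; the real content is to show that for $\ell=2$ and $P\subset A_1$, \emph{every} such line already lies inside $\olm$, so that the count is not diminished.

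Next I would verify the containment $L(U,W)\subset\olm$ using Proposition~\ref{prop: minimumwt} specialized to $\ell=2$. Here the flag is $A_1\subset A_2=V$, and $\olm=\{P\in\glm:\dim(P\cap A_1)\ge 1\}$. Since $P\subset A_1$ we have $U\subset P\subset A_1=A_2$, so the conditions $W\subset A_\ell=A_2=V$, $\dim(U\cap A_2)\ge 1$, $\dim(W\cap A_2)\ge 2$, and $\dim(W\cap A_1)\ge 1$ are automatic ($W\cap A_2=W$, and $\dim(W\cap A_1)\ge\dim(P)=2\ge 1$). For the remaining requirement: we need $\dim(U\cap A_1)\ge 0$ (trivial) together with condition (1) $\dim(U\cap A_i)\ge i$ for $1\le i\le\ell-1=1$, i.e. $\dim(U\cap A_1)\ge 1$, which holds because $U\subset P\subset A_1$ forces $U\cap A_1=U$, of dimension $1$. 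Thus condition (1) of the Proposition is met for \emph{every} choice of $(U,W)$ with $U\subset P\subset W$, so every line through $P$ in $\glm$ is a line in $\olm$.

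Finally I would assemble the count: the number of $1$-dimensional $U\subset P$ equals ${2\brack 1}_q = q+1$, and the number of $3$-dimensional $W$ with $P\subset W\subset V$ equals the number of $1$-dimensional subspaces of the $(m-2)$-dimensional quotient $V/P$, namely ${m-2\brack 1}_q$; distinct pairs $(U,W)$ give distinct lines, and every line through $P$ arises this way, so the total is ${2\brack 1}_q{m-2\brack 1}_q$, as claimed. I expect the only genuinely delicate point to be the careful bookkeeping in applying Proposition~\ref{prop: minimumwt} — specifically confirming that condition (1) is satisfied unconditionally when $P\subset A_1$, so that no lines are lost; everything else is the standard parametrization of lines through a point of a Grassmannian and a routine Gaussian-binomial count. (One should also note $\a_1\ge 2$ so that $\dim A_1\ge 2$ makes $P\subset A_1$ possible and $A_1\ne A_2=V$, which is implicit in the running assumptions.)
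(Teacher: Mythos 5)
Your proof is correct and takes essentially the same approach as the paper: parametrize lines through $P$ by pairs $(U,W)$ with $U\subset P\subset W$, use Proposition~\ref{prop: minimumwt} to see that $U\subset P\subset A_1$ forces every such line to lie in $\olm$, and count the pairs as ${2\brack 1}_q{m-2\brack 1}_q$. (The only blemish is the slip ``$A_1=A_2$'' early on, which should read $A_1\subset A_2$; it does not affect the argument.)
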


\begin{proof}
In this case the counting is very simple. Take any one-dimensional subspace $U$ of $P$ and any three dimensional subspace of $V$ containing $P$. Then by Proposition \ref{prop: minimumwt} the corresponding line $L(U,W)$ passes through $P$ and is contained in $\olm$. Further, any line in $\olm$ through $P$ is given by a hyperplane $U\subset P$ and a three-dimensional space $P\subset W\subseteq A_2$. The number of such ordered pairs $(U, W)$, and hence the line in $\olm$ through $P$ is: 
\[
{2\brack 1}_q{m-2\brack 1}_q.
\]	
	
\end{proof}
\begin{lemma}
	\label{lemma: linesthroughpoint2}
	Let $P\in\olm$ be a point satisfying $P\nsubseteq A_1$. Then there exist
	\[
	q{\a_1-1\brack 1}_q + {m-2\brack 1}_q
	\]
 lines in $\olm$ passing through $P$.
\end{lemma}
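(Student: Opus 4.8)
The plan is to parametrize the lines of $\glm$ through $P$ by pairs $(U,W)$ with $U\subset P\subset W$, $\dim U=1$ and $\dim W=3$, distinct pairs giving distinct lines (exactly as in the proof of Lemma~\ref{lemma: linesthroughpoint}), and then to decide, case by case, which of these lines lie in $\olm$ by applying Proposition~\ref{prop: minimumwt} with $\ell=2$ and $A_2=V$.

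First I would record what Proposition~\ref{prop: minimumwt} reduces to here. Since $A_2=V$, the requirements $U\subset W\subset A_2$, $\dim(U\cap A_2)\ge 1$ and $\dim(W\cap A_2)\ge 2$ are automatic, and one checks that a line $L(U,W)$ of $\glm$ lies in $\olm$ if and only if $\dim(W\cap A_1)\ge 1$ together with either (i) $U\subseteq A_1$, or (ii) $U\nsubseteq A_1$ and $\dim(W\cap A_1)=2$. Because $P\in\olm$ while $P\nsubseteq A_1$, we have $\dim(P\cap A_1)=1$; write $P_1:=P\cap A_1$, a line of $P$. For a line through $P$ the subspace $U$ ranges over the $q+1$ lines of $P$, precisely one of which, $P_1$, is contained in $A_1$.

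Now the casework. If $U\subseteq A_1$, then $U\subseteq P\cap A_1=P_1$ and comparing dimensions forces $U=P_1$; since then $U\subseteq W\cap A_1$, condition (i) holds for every $3$-dimensional $W\supset P$, so each such $W$ gives a line of $\olm$, and the number of them is the number of lines of $V/P$, namely ${m-2\brack 1}_q$. If $U\nsubseteq A_1$ — one of the other $q$ choices of $U$ — then $(U,W)$ gives a line of $\olm$ iff $\dim(W\cap A_1)=2$; using $\dim(W\cap A_1)=\dim W+\dim A_1-\dim(W+A_1)$ together with $P+A_1\subseteq W+A_1$ and $\dim(P+A_1)=\a_1+1$, I would show that for $W\supset P$ this is equivalent to $W\subseteq P+A_1$ (the value $3$ is excluded, since $\dim(W\cap A_1)=3$ would give $U\subseteq W\subseteq A_1$). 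The admissible $W$ are thus the $3$-dimensional subspaces with $P\subseteq W\subseteq P+A_1$, and there are ${\a_1-1\brack 1}_q$ of them. The two cases are disjoint, so adding gives ${m-2\brack 1}_q+q{\a_1-1\brack 1}_q$, as asserted.

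The one step that is not pure bookkeeping is the equivalence in the second case between the local condition $\dim(W\cap A_1)=2$ coming out of Proposition~\ref{prop: minimumwt} and the global containment $W\subseteq P+A_1$ used to do the count; everything else mirrors Lemma~\ref{lemma: linesthroughpoint}. I would also note the degenerate case $\a_1=1$: then $P+A_1=P$ admits no intermediate $3$-dimensional $W$, consistent with ${\a_1-1\brack 1}_q={0\brack 1}_q=0$.
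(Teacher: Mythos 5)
Your proof is correct and follows essentially the same route as the paper: the same case split on whether $U\subseteq A_1$ (forcing $U=P\cap A_1$) or not, with Proposition~\ref{prop: minimumwt} forcing $\dim(W\cap A_1)=2$, i.e.\ $W\subseteq P+A_1$, in the second case, and the same counts ${m-2\brack 1}_q$ and $q{\a_1-1\brack 1}_q$. Your explicit verification that $\dim(W\cap A_1)=2$ is equivalent to $W\subseteq P+A_1$ is a small clarification of a step the paper states more tersely.
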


\begin{proof}
	Clearly, any line in $\olm$ through $P$ is of the form $L(U,W)$ for some one-dimensional space $U$ and three-dimensional space $W$ satisfying $U\subset P\subset W$. From Proposition \ref{prop: minimumwt}, it is clear that  there are two possibilities for $U$, namely $\dim(U\cap A_1)=1, \text{ or }0$. Since $P\in\olm$ but $P\nsubseteq A_1$, we get $\dim (P\cap A_1)=1$ and hence if $L(U, W)$ is a line in $\olm$ through $P$ that satisfies $\dim(U\cap A_1)=1$ then $U=P\cap A_1$. Write  $U_0=P\cap A_1$. Then we first count lines $L(U, W)$ in $\olm $ through $P$ and $U=U_0$. In this case, any three-dimensional space $W\subset V$ satisfying $P\subset W$ gives a line $L(U_0, W)$ through $P$ and in $\olm$. This gives ${m-2\brack 1}_q$  lines of the form $L(U_0, W)$ in $\olm$ through $P$.

	Now let $L(U, W)$ be a line in $\olm$ through $P$ and let $\dim(U\cap A_1)=0$, i.e. $U\neq U_0$. In this case, we have $\dim (U\cap A_1)=0$ and hence by Proposition \ref{prop: minimumwt}, $P\subset W\subset V$  satisfies $\dim(W\cap A_1)=2$. Hence, $W= P+\langle x\rangle$ for some $x\in A_1$. In other words, $P\subset W\subset A_1 +\langle x\rangle$. 
This gives that the number of distinct such $W's$ are in one to one correspondence with one-dimensional subspaces of $(A_1+ P)/P$. Therefore, we get $({2\brack 1}_q-1)=q$ choices for $U\neq U_0$ and ${\a_1-1\brack 1}_q$  choices for $W$. Hence we get $({2\brack 1}_q-1){\a_1-1\brack 1}_q$  lines $L(U, W)$ in $\olm$ through $P$ with $U\neq U_0$. This proves that, if $P\nsubseteq A_1$, then there are 
\[
 q{\a_1-1\brack 1}_q + {m-2\brack 1}_q
\]
many lines in $\olm$ passing through $P$.
\end{proof}	
In the next theorem, we construct some orthogonal parity  checks for $\calm$ for each coordinate using the lines from the last two lemmas. Recall that for any $P\in\glm$ the disc $\overline{P}^{(1)}$ gives points in $\glm$ that lies on some line through $P$. For simplicity of the notation we write $\P$ to denote the set $\overline{P}^{(1)}$.
\begin{theorem}
	\label{thm: p1closure}
For every $P\in\olm$, there exists a set $\mathcal{J}_1(P)$ of parity  checks 	of $\calm$ of weight three satisfying
\begin{enumerate}
	\item For every $\omega\in \mathcal{J}_1(P)$ the support of $\omega$ contains $P$ and two other points from $\P\cap\olm$, i.e., the other two points lie on a line through $P$ in $\olm$.
	\item For $\omega_1,\;\omega_2\in\mathcal{J}_1(P)$, $\supp(\omega_1)\cap\supp(\omega_)=\{P\}$.
\end{enumerate}
Further, 
\begin{equation*}
\label{eq: p1closure}
|\mathcal{J}_1(P)|:=\begin{cases}\lfloor q/2\rfloor {2\brack 1}_q{m-2\brack 1}_q & \text{if } P\subset A_1,\\ 
\lfloor q/2\rfloor\left( q{\a_1-1\brack 1}_q + {m-2\brack 1}_q\right) & \text{if } P\nsubseteq A_1
\end{cases}.
\end{equation*}	
\end{theorem}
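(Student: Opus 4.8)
The plan is to construct the set $\mathcal{J}_1(P)$ explicitly from the lines in $\olm$ through $P$ catalogued in Lemmas \ref{lemma: linesthroughpoint} and \ref{lemma: linesthroughpoint2}, and then to verify the two orthogonality properties. First I would fix $P\in\olm$ and consider the family $\mathcal{L}(P)$ of all lines in $\olm$ passing through $P$; by Theorem \ref{thm: support}, for any line $L\in\mathcal{L}(P)$ and any two further points $Q,R$ on $L$, the triple $\{P,Q,R\}$ is the support of a weight-three parity check of $\calm$. Each such line $L$ contains exactly $q+1$ points of $\olm$ (it is a line $L(U,W)$, which has $q+1$ points of $\Glm$, and all of them lie in $\olm$ by Corollary \ref{cor: lineschubert}), so it contains $q$ points other than $P$. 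I would partition these $q$ points into $\lfloor q/2\rfloor$ disjoint pairs (discarding one point if $q$ is odd), and for each pair $\{Q,R\}$ include in $\mathcal{J}_1(P)$ the weight-three parity check $\omega$ with $\supp(\omega)=\{P,Q,R\}$ guaranteed by Theorem \ref{thm: support}.

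Next I would check property (2): if $\omega_1,\omega_2\in\mathcal{J}_1(P)$ are distinct, their supports meet only in $P$. If $\omega_1,\omega_2$ come from pairs on the same line $L$, the pairs were chosen disjoint, so $\supp(\omega_1)\cap\supp(\omega_2)=\{P\}$. If they come from distinct lines $L,L'\in\mathcal{L}(P)$, then $L\cap L'=\{P\}$ — two distinct lines of a projective space meet in at most one point, and both pass through $P$ — hence again the supports share only $P$. This gives property (1) essentially for free as well: each $\omega\in\mathcal{J}_1(P)$ has $P$ in its support together with two points lying on a common line through $P$ inside $\olm$, i.e. two points of $\P\cap\olm$ (every point of such a line lies on a line through $P$, hence in $\overline{P}^{(1)}=\P$). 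Finally, the cardinality count is bookkeeping: $|\mathcal{J}_1(P)|=\lfloor q/2\rfloor\cdot|\mathcal{L}(P)|$, and $|\mathcal{L}(P)|$ is supplied by Lemma \ref{lemma: linesthroughpoint} when $P\subset A_1$ and by Lemma \ref{lemma: linesthroughpoint2} when $P\nsubseteq A_1$, yielding exactly the two cases displayed in the statement.

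The only genuinely substantive point — and the step I would flag as the main obstacle — is confirming that every line $L(U,W)\subset\olm$ through $P$ really does contain $q+1$ points all lying in $\olm$, so that we indeed have $q$ usable partners for $P$; this is where Corollary \ref{cor: lineschubert} is essential, since it upgrades ``two points of $L(U,W)$ lie in $\olm$'' to ``$L(U,W)\subset\olm$''. Everything else is routine: the pairing is an arbitrary perfect matching (or near-perfect matching) on a $q$-element set, and the disjointness of supports across different lines is the standard fact that distinct projective lines meet in at most one point. I do not anticipate needing any finer property of the Pl\"ucker embedding beyond what has already been established.
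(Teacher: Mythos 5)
Your proposal is correct and follows essentially the same route as the paper: invoke Theorem \ref{thm: support} to turn any three collinear points of $\olm$ into a weight-three parity check, partition the $q$ points other than $P$ on each line through $P$ into $\lfloor q/2\rfloor$ disjoint pairs, and count the lines via Lemmas \ref{lemma: linesthroughpoint} and \ref{lemma: linesthroughpoint2}. Your explicit remark that two distinct lines through $P$ meet only in $P$ is a slight tightening of the disjointness argument the paper states more briefly, but the construction is the same.
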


\begin{proof}
The proof is a simple consequence of Theorem \ref{thm: support} and Lemmas \ref{lemma: linesthroughpoint} and \ref{lemma: linesthroughpoint2}. In the last two lemmas, we have computed the number of lines through points of $\olm$. Now let $P\in\olm$ be an arbitrary point. Any line through $P$ contains $q$  points from $\olm$ other than $P$. Partition these $q$ points on a line other than $P$  into $q/2$ many  subsets of cardinality two (if $q$ is even) and  into $q/2$  subsets of cardinality two, one subset of cardinality one  (if $q$ is odd). Take all these subsets of cardinality two, together with $P$ each of them gives three points on a line in $\olm$. From Theorem \ref{thm: support}, every such set of three points is the support of some minimum weight codeword of $\calmp$. Hence we get $\lfloor q/2\rfloor$  such parity checks for each line through $P$ in $\olm$.  Let $\mathcal{J}_1(P)$ be the set of all parity checks obtained in this way. Item (1) is clearly satisfied. Now if $\omega_1,\;\omega_2\in\mathcal{J}_1(P)$, then their supports either lie on the same line or two different lines in $\olm$. If they are on the same line, by construction they contain $P$ and two other points from the subsets of the partition. In either case, these supports intersect only in $P$. This proves item $(2).$ Now the last part of the theorem follows from Lemmas \ref{lemma: linesthroughpoint} and \ref{lemma: linesthroughpoint2}.
\end{proof}	

Note that, if $\a_1=1$ then the corresponding Schubert variety $\olm$ is the projective space $\mathbb{P}^{m-2}$. Therefore, the corresponding Schubert code $\calm$ is the projective Reed-Muller code of order one and majority logic decoding for the first order projective Reed-Muller code is known \cite{BS}.  This is why we may assume that $\a_1\ge 2$. Let $P\in\olm$ be a point satisfying $P\subset A_1$
\begin{equation}
\label{eq: flagA}
\mathcal{U}_1\subset P=\mathcal W_2\subset\mathcal W_3\subset\cdots\subset \mathcal W_{\a_1}=A_1\subset\cdots\subset \mathcal W_m=V
\end{equation}
be a flag through $P$ satisfying $\dim \mathcal{U}_1=1$, $\dim\mathcal{W}_j=j$ for every $3\le j\le m$. For the rest of the article, whenever we consider lines through a point $P\in\olm$, we always mean a line in $\olm$.
\begin{lemma}
	\label{lemma: doubleclos}
 Let $P\in\olm$ be a point satisfying $P\subset A_1$ and let equation \eqref{eq: flagA} be a fixed flag through $P$. For $3\le i\le\a_1$, let $L(\mathcal U_1, \mathcal W)$ be a line through $P$ where $\mathcal W\subset\mathcal{W}_i$ and $\mathcal W\nsubseteq\mathcal{W}_{i-1}$. Let $Q\in L(\mathcal U_1, \mathcal W)$ be a point different than $P$ and let $L(U, W)$ be an arbitrary line through $Q$. Then $|L(U, W)\cap \overline{R}|=1$ for any $R$ on any line $L(\mathcal U_1, W_1)$ through $P$ satisfying $ W_1\subset \mathcal W_{i-1}$ iff $U\neq\mathcal U_1$ or $ W\nsubseteq \mathcal W_i$. The total number of such ordered pair $(U, W)$ of subspaces is:
	$$
	\left({2\brack 1}_q-1\right)\left({m-2\brack 1}_q-{i-2\brack 1}_q\right).
	$$
	
\end{lemma}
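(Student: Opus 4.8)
The plan is to reduce this to the disc results of Section 4, using the observation from Example \ref{ex: schubert} that intersection-of-lines statements relative to a point $P$ and its line-closure $\overline P$ are really statements about the disc $\overline P^{(1)}$. First I would record the geometric picture: since $P\subset A_1$ and $L(\mathcal U_1,\mathcal W)$ with $\mathcal U_1\subset\mathcal W\subset\mathcal W_i$, $\mathcal W\nsubseteq\mathcal W_{i-1}$ is a line through $P$ in $\olm$, any point $Q\ne P$ on it satisfies $Q\cap P=\mathcal U_1$ and $Q+P=\mathcal W$; in particular $\operatorname{dist}(P,Q)=1$. The family of lines $L(\mathcal U_1,W_1)$ through $P$ with $W_1\subseteq\mathcal W_{i-1}$ sweeps out precisely the set $S_{i-1}:=\{R\in\glm: \mathcal U_1\subset R\subset\mathcal W_{i-1}\}$, which is the Grassmannian of $2$-planes between $\mathcal U_1$ and $\mathcal W_{i-1}$, i.e.\ $\mathbb P((\mathcal W_{i-1}/\mathcal U_1))$, a linear subspace of $\Pml$ lying inside $\overline P\cap\olm$. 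So the condition ``$|L(U,W)\cap\overline R|=1$ for every such $R$'' says that the line $L(U,W)$ through $Q$ meets the $1$-closure of no point of $S_{i-1}$ in more than one point — and since $Q\in S_i\setminus S_{i-1}$ we want exactly to characterize when $L(U,W)$ stays ``transverse'' to this flag.

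Next I would invoke Corollary \ref{cor: i2case}: for a fixed $R$ with $\operatorname{dist}(R,Q)=1$, the line $L(U,W)$ through $Q$ contains a point $\ne Q$ on a line through $R$ iff $U=R\cap Q$ or $W=R+Q$. The key point is to run this over all $R\in S_{i-1}$ and see what the union of the ``bad'' conditions amounts to. Here the main work is a short linear-algebra computation: if $U=R\cap Q$ for some $R\in S_{i-1}$, then $U$ is a hyperplane of $Q$ with $U\supset R\cap Q\supset\mathcal U_1$... actually one shows $U$ must equal $\mathcal U_1$ (because $R\cap Q$ contains $\mathcal U_1$ and has dimension $1$); conversely if $U=\mathcal U_1$ then taking $R=Q'$ any other point of the original line already gives the bad coincidence. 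Similarly, $W=R+Q$ for some $R\subset\mathcal W_{i-1}$ forces $W\subseteq\mathcal W_{i-1}+Q=\mathcal W_i$ (using $Q+P=\mathcal W$, $\mathcal W\subseteq\mathcal W_i$, and $P\subset\mathcal W_{i-1}$ wait — $P=\mathcal W_2\subseteq\mathcal W_{i-1}$), and conversely if $W\subseteq\mathcal W_i$ one can exhibit an $R\in S_{i-1}$ with $R+Q=W$. Assembling these, the ``bad'' situation occurs iff $U=\mathcal U_1$ \emph{or} $W\subseteq\mathcal W_i$ — equivalently, $|L(U,W)\cap\overline R|=1$ for all such $R$ iff $U\ne\mathcal U_1$ and $W\nsubseteq\mathcal W_i$, which is the claimed iff (note the statement's ``or'' is the negation-of-AND, so this matches).

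Finally, for the count: we must enumerate ordered pairs $(U,W)$ with $U\subset Q\subset W$, $\dim U=1$, $\dim W=3$, $U\ne\mathcal U_1$, and $W\nsubseteq\mathcal W_i$. Wait — one must be careful that the constraint is on $W$ relative to $\mathcal W_i$, and that $Q\subseteq\mathcal W_i$ already (since $Q\subset\mathcal W\subseteq\mathcal W_i$). The number of hyperplanes $U$ of $Q$ is ${2\brack 1}_q$, of which we exclude $U=\mathcal U_1$, giving ${2\brack 1}_q-1=q$ choices. Independently, the number of $3$-dimensional $W$ with $Q\subset W\subseteq V$ is ${m-2\brack 1}_q$ (lines in $V/Q$), and those with $W\subseteq\mathcal W_i$ number ${i-2\brack 1}_q$ (lines in $\mathcal W_i/Q$, since $\dim(\mathcal W_i/Q)=i-2$); subtracting gives ${m-2\brack 1}_q-{i-2\brack 1}_q$. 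Since the conditions on $U$ and on $W$ are independent, the total is $\left({2\brack 1}_q-1\right)\left({m-2\brack 1}_q-{i-2\brack 1}_q\right)$, as asserted. The main obstacle I anticipate is the bookkeeping in the ``run over all $R$'' step — verifying that the union of finitely many constraints $\{U=R\cap Q\}\cup\{W=R+Q\}$ collapses exactly to $\{U=\mathcal U_1\}\cup\{W\subseteq\mathcal W_i\}$ in both directions, which requires care to ensure the witnessing $R$ actually lies in $S_{i-1}$ (i.e.\ that $\mathcal U_1\subset R\subset\mathcal W_{i-1}$, not merely in $\overline P$).
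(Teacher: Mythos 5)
Your proof is correct and follows essentially the same route as the paper: for each fixed $R$ apply Corollary \ref{cor: i2case} to see that $|L(U,W)\cap\overline{R}|\ge 2$ iff $U=Q\cap R$ or $W=Q+R$, note that $Q\cap R=\mathcal U_1$ for every admissible $R$ and that the subspaces $Q+R$ sweep out exactly the three-dimensional $W$ with $Q\subset W\subseteq\mathcal W_i$, then count the remaining pairs $(U,W)$. You are also right that the operative condition is the conjunction $U\neq\mathcal U_1$ \emph{and} $W\nsubseteq\mathcal W_i$ (as the stated count and the subsequent definition of $\mathcal L^{i}_{\mathcal W}(P,Q)$ confirm), and your explicit construction of a witnessing $R$ in the converse direction is the one detail the paper's own argument glosses over.
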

\begin{proof}
	Let $3\le i\le \a_1$ be an arbitrary integer and let $L(\mathcal U_1, \mathcal W)$ be a line through $P$ where $\mathcal W\subset\mathcal{W}_i$ and $\mathcal W\nsubseteq\mathcal{W}_{i-1}$. Let $L(\mathcal U_1, W_1)$ be an arbitrary line through $P$ for some $W_1\subseteq \mathcal W_{i-1}$ and let $R\in L(\mathcal U_1, W_1)$ be an arbitrary point. Since $\mathcal U_1\subset Q$,   $\mathcal U_1\subset R$ and $Q\neq R$ as $Q\nsubseteq W_{i-1}$, we get $\mathcal U_1= Q\cap R$. Consequently, $Q\in\overline{R} $ and hence $|L(U, W)\cap \overline{R}|\ge 1$ for any line $L(U, W)$ through $Q$. Let $L(U, W)$ be an arbitrary line through $Q$.  From Theorem \ref{thm: line and closure} we know that $|L(U, W)\cap \overline{R}|\ge 2$ iff $U=Q\cap R$ or $W =Q+ R$. Since $Q\subset W_i$ and $R\subset W_{i-1}$ we get $Q+ R\subset W_i$. Further, any $W\subset W_i$ containing $Q$ can be written as $Q+ R$ for some $R\subset W_{i-1}$. Since $R$ is an arbitrary point on an arbitrary line  $L(\mathcal U_1, W_1)$ through $P$ for some $W_1\subseteq W_{i-1}$, we get $|L(U, W)\cap \overline{R}|=1$ for any $R$ on any line $L(\mathcal U_1, W_1)$ for some $P\subset W_1\subset \mathcal W_{i-1}$ iff $U\neq\mathcal U_1$ or $ W\nsubseteq \mathcal W_i$. This completes the proof of the lemma. The last part of the lemma follows, as the number of choices for lines $L(U, W)$ through $Q$ is given by the number of choices for $U\subset Q$, $U\neq \mathcal U_1$, and $Q\subset W\subset V$ but $ W\nsubseteq \mathcal W_i$. But this number is: 
$$
\left({2\brack 1}_q-1\right)\left({m-2\brack 1}_q-{i-2\brack 1}_q\right).
$$
\end{proof}	
For every point $P\in \olm$ satisfying $P\subset A_1$ we fix a flag through $P$ as in equation \eqref{eq: flagA}. For a line $L(\mathcal U_1, \mathcal W)$  through $P$ where  $\mathcal W\subset\mathcal{W}_i$ but $\mathcal W\nsubseteq\mathcal{W}_{i-1}$ and for every $Q\in L(\mathcal U_1, \mathcal W)$ different from $P$, we define $\mathcal L^{i}_{\mathcal W}(P, Q)$ as the set of all lines $L(U, W)$ through $Q$, where $U\neq \mathcal U_1$ and $W\subset V$ but $ W\nsubseteq \mathcal W_i$. From the last part of the Lemma, we get that the cardinality of the set $\Liwpq$ is given by the formula in the Lemma. Note that $P+Q\subset \mathcal W_i$ therefore, from Corollary \ref{cor: i2case} we get $L(U, W)\cap \P=\{Q\}$ for every $L(U, W)\in\Liwpq$. In other words, every $L(U, W)\in\Liwpq$ has one point, namely, $Q\in \P$ and $q$ remaining points in $({\overline{P}^{(2)}}\setminus\P)\cap\olm$. Also, from the last lemma we have the following:
\begin{corollary}
	\label{cor: Imp1}
      Let $Q_1$ and $Q_2$ be two distinct points on the line $L(\mathcal U_1, \mathcal W)$  through $P$ where $\mathcal W\subset\mathcal{W}_i$ but $\mathcal W\nsubseteq\mathcal{W}_{i-1}$. If $L(U, W)\in\mathcal L^{i}_{\mathcal W}(P, Q_1)$	and   $L(U^\prime, W^\prime)\in \mathcal L^{i}_{\mathcal W}(P, Q_2)$ then $L(U, W)\cap L(U^\prime, W^\prime)=\emptyset$.
\end{corollary}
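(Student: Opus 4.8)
The plan is to reduce Corollary \ref{cor: Imp1} to the situation already analyzed in Lemma \ref{lemma: doubleclos}, using the intervening Corollary \ref{cor: i2case} and Theorem \ref{thm: line and closure} as the main tools. The key observation is that any line $L(U,W) \in \Liwpq[1]$ (with $Q=Q_1$) has, by the remark preceding this corollary, exactly one point in $\P$, namely $Q_1$ itself; similarly $L(U',W') \in \mathcal L^i_{\mathcal W}(P,Q_2)$ meets $\P$ only in $Q_2$. Since $Q_1 \ne Q_2$, the two lines $L(U,W)$ and $L(U',W')$ cannot share any point of $\P$. So the whole content is to rule out a common point outside $\P$, i.e., to show the two lines cannot intersect at some $T \in \overline{P}^{(2)} \setminus \P$ (both lines live inside $\overline{P}^{(2)}$).

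First I would set $T = L(U,W) \cap L(U',W')$ assuming toward a contradiction that such a point exists, and note $T \ne Q_1$ (else $Q_1 \in L(U',W')$, but $Q_1 \in \P$ forces $Q_1 = Q_2$, contradiction) and similarly $T \ne Q_2$. Thus $T$ is a point distinct from $Q_1$ on $L(U,W)$, so $U = T \cap Q_1$ and $W = T + Q_1$; likewise $U' = T \cap Q_2$ and $W' = T + Q_2$. Now both $Q_1, Q_2$ lie on $L(\mathcal U_1,\mathcal W)$, so $\mathcal U_1 = Q_1 \cap Q_2 \subset Q_1, Q_2$ and $Q_1 + Q_2 = \mathcal W \subset \mathcal W_i$. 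The contradiction should come from comparing $T$'s position to $P$: since $L(U,W) \in \Liwpq[1]$, we have $W = T + Q_1 \nsubseteq \mathcal W_i$, hence $T \nsubseteq \mathcal W_i$ (because $Q_1 \subset \mathcal W_i$); but symmetrically $W' = T + Q_2 \nsubseteq \mathcal W_i$ which is consistent, so this alone is not enough. The real leverage is that $T$, $Q_1$, $Q_2$, $P$ all sit inside the four-dimensional-ish configuration governed by $\mathcal U_1$: one shows $\dim(T \cap \mathcal U_1)$-type relations force $U = U'$ or $W = W'$ in a way that, via Lemma \ref{lemma: doubleclos} (applied with $R$ ranging over a line through $P$ inside $\mathcal W_{i-1}$, or directly via its proof that $|L(U,W)\cap\overline R| = 1$), contradicts membership in the $\mathcal L^i_{\mathcal W}$ sets.

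Actually the cleanest route avoids case analysis: apply Lemma \ref{lemma: closure of line} with the pair of distinct points $Q_1, Q_2$ on the line $L(\mathcal U_1,\mathcal W)$. If $T$ were a common point of $L(U,W)$ and $L(U',W')$, then $T \in L(U,W)$ is on a line through $Q_1$ and $T \in L(U',W')$ is on a line through $Q_2$; but more usefully, one uses that $P \in \overline{Q_1}^{(1)} \cap \overline{Q_2}^{(1)}$ (since $P, Q_1, Q_2$ are collinear in $\olm$), so by Lemma \ref{lemma: closure of line} applied to the line $L(\mathcal U_1,\mathcal W)$ containing $Q_1, Q_2$, the point $P$ lies in $\overline{R}^{(1)}$ for every $R$ on $L(\mathcal U_1,\mathcal W)$. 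Combined with Corollary \ref{cor: i2case}, any line through $Q_1$ other than $L(\mathcal U_1,\mathcal W)$ that meets $\P$ in a second point must have $U = \mathcal U_1 = P \cap Q_1$ or $W = \mathcal W = P + Q_1$; but lines in $\Liwpq[1]$ are precisely those with $U \ne \mathcal U_1$ and $W \nsubseteq \mathcal W_i$ (so $W \ne \mathcal W$), hence they meet $\P$ only at $Q_1$. Therefore a shared point $T$ would have to equal $Q_1$ as the only point of $L(U,W)$ in $\P$ that could also be forced into $L(U',W')$ — no wait, $T$ need not be in $\P$.

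Let me state the plan more carefully then: I would show that $L(U,W) \subset \overline{P}^{(2)}$ and $L(U',W') \subset \overline{P}^{(2)}$ with each meeting $\P = \overline{P}^{(1)}$ in exactly one point ($Q_1$ resp. $Q_2$), so $L(U,W) \setminus \{Q_1\}$ and $L(U',W') \setminus \{Q_2\}$ lie in $(\overline{P}^{(2)} \setminus \P) \cap \olm$; a common point $T \ne Q_1, Q_2$ would be such a point lying on two distinct lines, and then $W = T+Q_1$, $W' = T+Q_2$, $U = T\cap Q_1$, $U' = T\cap Q_2$. The plane $W + W' = T + Q_1 + Q_2 = T + \mathcal W$ has dimension at most $5$; using $\dim(T\cap\mathcal W)$ and the hyperplane conditions one deduces $T \cap \mathcal W_{i-1}$ is large enough that $\dim(T \cap \mathcal W_{i-1}) \ge 1$, forcing $\dim(W \cap \mathcal W_i) = 3$ or $\dim(T \cap \mathcal W_i) \ge 1$, which contradicts $W \nsubseteq \mathcal W_i$ together with $Q_1 \subset \mathcal W_i$ after a short dimension count — this dimension-chasing step, tying $T$'s intersection with the flag $\mathcal W_{i-1} \subset \mathcal W_i$ back to the defining inequalities of $\Liwpq[1]$, is where I expect the real work to be, and it is essentially a repackaging of the computation already done inside the proof of Lemma \ref{lemma: doubleclos}. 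The main obstacle is bookkeeping: making sure the "one point in $\P$" property is invoked correctly for both lines and that the flag conditions $\mathcal W \subset \mathcal W_i$, $W \nsubseteq \mathcal W_i$ are combined to pin down $\dim(T \cap \mathcal W_i)$ precisely, after which the contradiction is immediate.
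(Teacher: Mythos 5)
Your proposal does not close the key step, and you acknowledge this yourself: after correctly reducing the problem to showing that a common point $T$ of the two lines cannot exist, you invoke Lemma \ref{lemma: closure of line} but apply it with the wrong point. You use it to conclude that $P$ lies in $\overline{R}^{(1)}$ for every $R$ on $L(\mathcal U_1,\mathcal W)$, which is trivially true (since $P$ is on that line) and says nothing about $T$; you then notice the argument does not close (``no wait, $T$ need not be in $\P$'') and retreat to an unexecuted dimension count. The paper's proof is precisely the application of Lemma \ref{lemma: closure of line} that you brushed past: take the two distinct collinear points in that lemma to be $Q_1$ and $Q_2$ on the line $L(\mathcal U_1,\mathcal W)$, and the auxiliary point to be $T$. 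Since $T$ lies on a line through $Q_1$ (namely $L(U,W)$) and on a line through $Q_2$ (namely $L(U',W')$), we have $T\in\overline{Q_1}^{(1)}\cap\overline{Q_2}^{(1)}$, so the lemma gives $T\in\overline{R}^{(1)}$ for every $R$ on $L(\mathcal U_1,\mathcal W)$; taking $R=P$ yields $T\in\overline{P}^{(1)}$. Then $T\in L(U,W)\cap\overline{P}^{(1)}=\{Q_1\}$ and $T\in L(U',W')\cap\overline{P}^{(1)}=\{Q_2\}$ force $Q_1=Q_2$, a contradiction. You had every ingredient on the page --- the observation that $T$ is collinear with both $Q_1$ and $Q_2$, and the ``exactly one point in $\overline{P}^{(1)}$'' property of lines in $\mathcal L^{i}_{\mathcal W}(P,Q_j)$ --- but never combined them.

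For what it is worth, your fallback dimension count can also be made to work, and more simply than you sketch: with $T\ne Q_1,Q_2$ one has $U=T\cap Q_1$ and $U'=T\cap Q_2$; if $U=U'$ then $U\subseteq Q_1\cap Q_2=\mathcal U_1$ forces $U=\mathcal U_1$, which is excluded for lines in $\mathcal L^{i}_{\mathcal W}(P,Q_1)$, so $U\ne U'$ and hence $T=U+U'\subseteq Q_1+Q_2=\mathcal W\subseteq\mathcal W_i$, whence $W=T+Q_1\subseteq\mathcal W_i$, contradicting $W\nsubseteq\mathcal W_i$. This would give a genuinely different, purely linear-algebraic route. But as written, your proposal establishes neither route, and the step you defer to ``a short dimension count'' is exactly where the content lies.
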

\begin{proof}
Let $L(U, W)$ and $L(U^\prime, W^\prime)$ be as given. Assume, if possible, that 	$L(U, W)\cap L(U^\prime, W^\prime)\ne\emptyset$ and $T\in L(U, W)\cap L(U^\prime, W^\prime) $. Since $P, \;Q_1$ and $Q_2$ are on a line and there is a line from $T$ to $Q_1$ and a line from $T$ to $Q_2$, from Lemma \ref{lemma: closure of line} we get that $P$ and $T$ are colinear. Consequently $T\in\P$. But this is a contradiction, as we have $L(U, W)\cap \P=\{Q_1\}$.
\end{proof}	
\begin{corollary}
	\label{cor: Imp2}
	Let $\mathcal W$ and $\mathcal W^\prime$ be two subspaces of $\mathcal W_i$ containing $P$ satisfying $\mathcal W\nsubseteq \mathcal W_{i-1}$ and $\mathcal W^\prime\nsubseteq \mathcal W_{i-1}$. Let $Q\in L(\mathcal U_1, \mathcal W)$ and $Q^\prime\in L(\mathcal U_1, \mathcal W^\prime)$ be points different from $P$. Then for $L(U, W)\in\Liwpq$ and $L(U^\prime, W^\prime)\in\mathcal L^{i}_{\mathcal W^\prime}(P, Q^\prime)$ we have $L(U, W)\cap L(U^\prime, W^\prime)=\emptyset$.
\end{corollary}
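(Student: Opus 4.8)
The plan is to argue by contradiction, in the spirit of Corollary~\ref{cor: Imp1}; but the configuration here is genuinely different, since the three reference points $P$, $Q$, $Q'$ need not lie on a common line (they form a ``triangle'' in the pencil $L(\mathcal U_1,-)$ through $P$), so I cannot simply invoke Lemma~\ref{lemma: closure of line} to land $T$ on a line through $P$. I will assume $\mathcal W\neq\mathcal W'$, the case $\mathcal W=\mathcal W'$ being Corollary~\ref{cor: Imp1}. First I would record the elementary facts about the configuration: $\mathcal U_1=P\cap Q=P\cap Q'$ and $P+Q=\mathcal W$, $P+Q'=\mathcal W'$, with $\mathcal W,\mathcal W'\subseteq\mathcal W_i$ three-dimensional; moreover the two lines $L(\mathcal U_1,\mathcal W)$ and $L(\mathcal U_1,\mathcal W')$ through $P$ are distinct, hence meet only at $P$, so $Q\neq Q'$ and therefore $Q\cap Q'=\mathcal U_1$.

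Now suppose $T\in L(U,W)\cap L(U',W')$. The cases $T=Q$ and $T=Q'$ are disposed of immediately: if $T=Q$ then $Q\in L(U',W')$, but $Q$ lies on a line through $P$, so $Q\in\P$, whereas $L(U',W')\cap\P=\{Q'\}$ by Corollary~\ref{cor: i2case}, which applies since $U'\neq\mathcal U_1=P\cap Q'$ and $W'\neq\mathcal W'=P+Q'$ by the definition of $\mathcal L^{i}_{\mathcal W'}(P,Q')$; this forces $Q=Q'$, a contradiction, and the case $T=Q'$ is symmetric.

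The remaining case $T\neq Q,Q'$ is the heart of the matter. Here $L(U,W)$ is the line joining $T$ and $Q$, so $U=T\cap Q$ and $W=T+Q$, and likewise $U'=T\cap Q'$, $W'=T+Q'$. The key step is to compute $T\cap\mathcal W_i$: since $Q\subseteq W\cap\mathcal W_i$, $\dim W=3$, and $W\nsubseteq\mathcal W_i$ (exactly the defining condition of $\Liwpq$), we get $W\cap\mathcal W_i=Q$, whence $T\cap\mathcal W_i=T\cap Q=U$; the identical computation on the primed side gives $T\cap\mathcal W_i=T\cap Q'=U'$. Therefore $U=U'=T\cap\mathcal W_i\subseteq Q\cap Q'=\mathcal U_1$, so $U=\mathcal U_1$, contradicting $U\neq\mathcal U_1$, which is again part of the definition of $\Liwpq$. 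This contradiction gives $L(U,W)\cap L(U',W')=\emptyset$.

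I expect the only real obstacle to be resisting the temptation to copy the proof of Corollary~\ref{cor: Imp1} verbatim: there the three reference points are collinear, so Lemma~\ref{lemma: closure of line} immediately puts $T$ into $\P$; here that route is closed, and one must instead use the condition $W\nsubseteq\mathcal W_i$ \emph{actively} — rather than merely as a bookkeeping hypothesis — to pin down $T\cap\mathcal W_i$ on both sides and conclude $U=U'=\mathcal U_1$.
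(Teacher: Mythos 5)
Your proof is correct. It does, however, take a more hands-on route than the paper. The paper's own proof is a two-line application of the disc machinery \emph{re-centered at $Q'$}: since $Q$ and $Q'$ both contain $\mathcal U_1$, one has $Q\in\overline{Q'}$ and $Q+Q'\subseteq\mathcal W_i$, so Corollary~\ref{cor: i2case} (with $Q'$ playing the role of the center) gives $|L(U,W)\cap\overline{Q'}|\ge 2$ iff $U=Q\cap Q'=\mathcal U_1$ or $W=Q+Q'\subseteq\mathcal W_i$ — both excluded by the definition of $\Liwpq$ — whence $L(U,W)\cap\overline{Q'}=\{Q\}$, while $L(U',W')\subseteq\overline{Q'}$ and $Q\notin L(U',W')$. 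Your argument instead re-derives exactly this instance from scratch: your computation $U=T\cap\mathcal W_i=U'\subseteq Q\cap Q'=\mathcal U_1$ is precisely the ``only if'' direction of Theorem~\ref{thm: line and closure} for the pair $(Q',Q)$, unwound by hand. What your version buys is transparency (it shows exactly where $U\ne\mathcal U_1$ and $W\nsubseteq\mathcal W_i$ are used) and a slightly more careful treatment of the boundary: you explicitly note that the statement must be read with $\mathcal W\ne\mathcal W'$ (deferring $\mathcal W=\mathcal W'$ to Corollary~\ref{cor: Imp1}), and that $\mathcal W\ne\mathcal W'$ forces $Q\ne Q'$ and $Q\cap Q'=\mathcal U_1$ — points the paper leaves implicit. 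What it costs is length; the observation you describe as ``closed off'' (landing the argument on a reference point via collinearity) is in fact available here, not through $P$ but through $Q'$, and that is the shortcut the paper takes.
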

\begin{proof}
First, note that  $Q+ Q^\prime\subset \mathcal W_i$. As $\mathcal U_1=Q\cap Q^\prime$ we get $Q\in \overline{Q^\prime}$ hence $|L(U, W)\cap\overline{Q^\prime}|\ge 1 $. From Corollary \ref{cor: i2case} we know that $|L(U, W)\cap\overline{Q^\prime}|\ge 2 $ iff $U=\mathcal U_1$ or $W=Q +Q^\prime$. But since $L(U, W)\in\Liwpq$ we have $U\ne \mathcal U_1$ and $W\nsubseteq \mathcal W_i$. Hence,  $L(U, W)\cap\overline{Q^\prime}=\{Q\}$ and consequently $L(U, W)\cap L(U^\prime, W^\prime)=\emptyset$ as $L(U^\prime, W^\prime)\subset \overline{Q^\prime}$ and $Q\notin L(U^\prime, W^\prime)$.
	
\end{proof}	
\begin{corollary}
	\label{cor: Imp3}
	Let $3\le i,\; j\le \a_1$ be two distinct integers. Let $L(\mathcal U_1, \mathcal W)$ and $L(\mathcal U_1, \mathcal W^\prime)$ be two lines through $P$ satisfying $\mathcal W\subset \mathcal W_i$ but $\mathcal W\nsubseteq \mathcal W_{i-1}$ and $\mathcal W\subset \mathcal W_j$ but  $\mathcal W\nsubseteq \mathcal W_{j-1}$. Let $Q\in L(\mathcal U_1, \mathcal W)$ and $Q^\prime\in L(\mathcal U_1, \mathcal W^\prime)$ be points different from $P$. Then every line $L(U, W)\in\Liwpq$ and line $L(U^\prime, W^\prime)\in\mathcal L^{j}_{\mathcal W^\prime}(P, Q^\prime)$ intersects trivially.

\end{corollary}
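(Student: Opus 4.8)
The plan is to imitate the proof of Corollary~\ref{cor: Imp2}, but now tracking two different radii $i$ and $j$; say without loss of generality that $i<j$. First I would record the containment relations coming from the hypotheses: since $L(\mathcal U_1,\mathcal W)$ has $\mathcal W\subset\mathcal W_i\subset\mathcal W_{j-1}$ and $\mathcal W'\subset\mathcal W_j$, both $Q$ and $Q'$ contain $\mathcal U_1$, and $Q\subset\mathcal W_i\subseteq\mathcal W_{j-1}$ while $Q'\subset\mathcal W_j$. Hence $Q+Q'\subset\mathcal W_j$ and $\mathcal U_1=Q\cap Q'$ (the two $2$-spaces are distinct because, as in the previous corollaries, $Q$ lies in $\mathcal W_{i}\subseteq \mathcal W_{j-1}$ but a point of $L(\mathcal U_1,\mathcal W')$ different from $P$ does not lie in $\mathcal W_{j-1}$; if $Q$ happened to equal $Q'$ we would reduce to the $i=j$ situation already handled).

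Next, from $\mathcal U_1=Q\cap Q'$ I get $Q\subset\overline{Q'}$, so $|L(U,W)\cap\overline{Q'}|\ge 1$. Now I apply Corollary~\ref{cor: i2case} with the pair of collinear points $Q,Q'$: a line $L(U,W)$ through $Q$ meets $\overline{Q'}$ in a second point if and only if $U=Q\cap Q'=\mathcal U_1$ or $W=Q+Q'$. By the definition of $\Liwpq$ we have $U\ne\mathcal U_1$, which kills the first alternative. For the second alternative, since $W=Q+Q'\subset\mathcal W_j$ we would need $W\subseteq\mathcal W_j$; but membership of $L(U,W)$ in $\Liwpq$ only forbids $W\subseteq\mathcal W_i$, so this is where I must do a little extra work compared with Corollary~\ref{cor: Imp2}. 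I would argue: if $W=Q+Q'$ then $W\subset\mathcal W_j$, and restricting $L(U,W)$ to lie also in $\mathcal L^{j}_{\mathcal W'}(P,Q')$ forbids $W'\subseteq\mathcal W_j$, hence the two lines cannot coincide; and if they merely intersect in a single point $T$ then that point would force, via Lemma~\ref{lemma: closure of line} applied to the collinear triple $P,Q,Q'$, that $T\in\overline P$, contradicting $L(U,W)\cap\overline P=\{Q\}$.

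So the cleaner route, and the one I would actually write, is the $\overline{Q'}$ argument: having shown $L(U,W)\cap\overline{Q'}=\{Q\}$, and noting $L(U',W')\subset\overline{Q'}$ with $Q\notin L(U',W')$ (because $L(U',W')\in\mathcal L^{j}_{\mathcal W'}(P,Q')$ has its unique point of $\overline P$ equal to $Q'\ne Q$, and by Corollary~\ref{cor: Imp1}-type reasoning $Q$ is not on it), we conclude $L(U,W)\cap L(U',W')=\emptyset$. The one subtlety to handle carefully in the case $W=Q+Q'$ is precisely the exceptional configuration; there I invoke Lemma~\ref{lemma: closure of line} with the two distinct points $P$ and $Q'$ on the line $L(\mathcal U_1,\mathcal W')$: any purported common point $T$ of $L(U,W)$ and $L(U',W')$ lies on a line to $Q$ and on a line to $Q'$, which by collinearity of $P,Q,Q'$ and Lemma~\ref{lemma: closure of line} puts $T\in\overline P$, forcing $T=Q$, which is impossible since $Q\notin L(U',W')$.

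The main obstacle, as indicated, is that the defining constraint ``$W\nsubseteq\mathcal W_i$'' of $\Liwpq$ is adapted to radius $i$, not to the larger radius $j$, so the single line-count identity from Corollary~\ref{cor: i2case} does not immediately dispose of the case $W=Q+Q'$; one has to fall back on Lemma~\ref{lemma: closure of line} to rule out the residual one-point intersection. Everything else is bookkeeping with the flag $\mathcal U_1\subset\mathcal W_2\subset\cdots\subset\mathcal W_m$ of \eqref{eq: flagA} and repeated application of Theorem~\ref{thm: line and closure} / Corollary~\ref{cor: i2case}, exactly parallel to Corollaries~\ref{cor: Imp1} and \ref{cor: Imp2}.
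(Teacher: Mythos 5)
Your overall strategy is sound, but the way you broke the symmetry creates an exceptional case that your argument does not actually close. Having fixed $i<j$, you analyze $L(U,W)\cap\overline{Q'}$ for $L(U,W)\in\mathcal{L}^{i}_{\mathcal{W}}(P,Q)$, and you correctly observe that Corollary \ref{cor: i2case} leaves open the alternative $W=Q+Q'$; this case really does occur, since $Q'\nsubseteq\mathcal{W}_{i}$ forces $Q+Q'\nsubseteq\mathcal{W}_{i}$, so such a line belongs to $\mathcal{L}^{i}_{\mathcal{W}}(P,Q)$. Your resolution of that case, however, rests on the claim that $P$, $Q$, $Q'$ form a collinear triple, and this is false: $Q$ and $Q'$ lie on the two \emph{distinct} lines $L(\mathcal{U}_1,\mathcal{W})$ and $L(\mathcal{U}_1,\mathcal{W}')$ through $P$, and $P+Q+Q'=\mathcal{W}+\mathcal{W}'$ has dimension at least $4$, so no line of the Grassmannian contains all three points. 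Consequently Lemma \ref{lemma: closure of line} cannot be invoked the way you do: knowing $T\in\overline{Q}\cap\overline{Q'}$ only places $T$ in $\overline{R}$ for points $R$ on the line $L(\mathcal{U}_1,Q+Q')$, and $P$ need not lie on that line (generically $P\nsubseteq Q+Q'$), so the desired conclusion $T\in\overline{P}$ does not follow.

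The gap is easy to repair, and the cleanest repair is exactly the paper's one-line proof: run the argument in the other direction. Since $\mathcal{W}\subset\mathcal{W}_i\subseteq\mathcal{W}_{j-1}$, the point $Q$ lies on a line $L(\mathcal{U}_1,W_1)$ with $W_1\subseteq\mathcal{W}_{j-1}$, so Lemma \ref{lemma: doubleclos} applied at level $j$ to $L(U',W')\in\mathcal{L}^{j}_{\mathcal{W}'}(P,Q')$ gives $L(U',W')\cap\overline{Q}=\{Q'\}$ with no exceptional alternative, because now $Q+Q'\subseteq\mathcal{W}_j$ while $W'\nsubseteq\mathcal{W}_j$, so $W'\neq Q+Q'$ automatically. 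As every point of $L(U,W)$ lies in $\overline{Q}$ and $Q'\notin L(U,W)$ (its only point in $\overline{P}$ is $Q$), the two lines are disjoint. Alternatively, if you keep your orientation, the residual case can be settled directly: a common point $T$ of $L(U,Q+Q')$ and $L(U',W')$ would satisfy $U\subseteq T\subseteq W'$, hence $U\subseteq W'\cap\mathcal{W}_j=Q'$ (using $U\subseteq\mathcal{W}_i\subseteq\mathcal{W}_j$ and that $Q'$ is a hyperplane of $W'$ with $W'\nsubseteq\mathcal{W}_j$); but $U\cap Q'\subseteq U\cap\mathcal{U}_1=0$, a contradiction. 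Either patch makes your proof correct.
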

\begin{proof}
	We may assume that $j<i$. Now the corollary follows from  Lemma \ref{lemma: doubleclos} as $Q^\prime$ lies on the line  $L(\mathcal U_1, \mathcal W^\prime)$ where $\mathcal W^\prime\subset \mathcal W_{i-1} $.
\end{proof}

 In the next theorem, we will use the parity checks obtained in the Theorem \ref{thm: p1closure}  and Lemma \ref{lemma: doubleclos} to construct parity checks of weight five such that the support of all these new parity checks contain $P$ and four other points from $({\overline{P}^{(2)}}\setminus\P)\cap\olm$. Further, the support of any such two parity checks shall have only $P$ in common. The idea of this construction is as follows:\\
Let $P\in\olm$ be a point satisfying $P\subset A_1$, and $\omega$ be a parity check in $\mathcal{J}_1(P)$ as described in Theorem \ref{thm: p1closure}. Let $\supp(\omega)=\{P,Q,R\}$. From the construction, we know that these three points lie on a line in $\olm$ through $P$. We consider lines in $\olm$ through $Q$ such that all points on these lines other than $Q$ lie in $(\overline{P}^{(2)}\setminus\P)\cap \olm$. We do the same for the point $R$. We will see that the numbers of such lines through $Q$ and $R$ are the same. We enumerate these lines, with the same index. Now consider subsets $\mathcal K_1(Q)\subset\mathcal{J}_1(Q)$ and $\mathcal K_1(R)\subset\mathcal{J}_1(R)$  such that the support of the parity checks in $\mathcal K_1(Q)$ contains $Q$ and two other points of $(\overline{P}^{(2)}\setminus\P)\cap \olm$ and  the support of the parity checks in $\mathcal K_1(Q)$ contains $Q$ and two other points of $(\overline{P}^{(2)}\setminus\P)\cap \olm$. From Lemma \ref{lemma: closure of line} we can see that the supports of a parity check $\omega_1\in\mathcal K_1(Q)$ and $\omega_2\in\mathcal K_1(R)$  are disjoint. If necessary, scale $\omega_i$ for $i=1, 2$ such that the parity check $\omega + \omega_1 +\omega_2$ does not contain $Q$ and $R$. We consider some other parity check from $\mathcal{J}_1(P)$ and repeat the process except this time the chosen lines (through $Q$ and $R$ ) must be chosen avoiding the support of previously constructed parity checks. The precise construction is given in the following theorem
\begin{theorem}
	\label{thm: p2closure}
	Let $P\in\olm$ be a point with $P\subseteq A_1$. Then there exists a set $\mathcal{A}_2(P)$ of parity checks of the Schubert code $\calm$ of weight five satisfying
	\begin{enumerate}
		\item The support of every $\omega\in \mathcal{A}_2(P)$  contains $P$ and four other points from $(\overline{P}^{(2)}\setminus\P)\cap \olm$.
		\item For $\omega_1,\;\omega_2\in\mathcal{A}_2(P)$, $\supp(\omega_1)\cap\supp(\omega_)=\{P\}$.
	\end{enumerate}
	Moreover, 
	\begin{equation*}
	\label{eq: p2closure}
	|\mathcal{A}_2(P)|=\frac{1}{q-1} (\lfloor q/2\rfloor)^2 \left(q^{m-2}{\a_1-1\brack 1}_q- \frac{q^{2\a_1-2}-1}{q^2-1}\right)
	\end{equation*}	
\end{theorem}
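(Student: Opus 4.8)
The plan is to make the informal construction preceding the statement precise and then do a careful bookkeeping of cardinalities. Fix $P\in\olm$ with $P\subseteq A_1$ and fix the flag \eqref{eq: flagA} through $P$. First I would set up the building blocks: for each $3\le i\le\a_1$ and each line $L(\mathcal U_1,\mathcal W)$ through $P$ with $\mathcal W\subset\mathcal W_i$ but $\mathcal W\nsubseteq\mathcal W_{i-1}$, and for each point $Q\neq P$ on that line, the set $\Liwpq$ from Lemma \ref{lemma: doubleclos} consists of lines through $Q$ whose only point in $\P$ is $Q$ itself, the remaining $q$ points lying in $(\overline{P}^{(2)}\setminus\P)\cap\olm$. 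For such a line $L(U,W)\in\Liwpq$, partition its $q$ points other than $Q$ into $\lfloor q/2\rfloor$ pairs (discarding one point if $q$ is odd); by Theorem \ref{thm: support} each pair together with $Q$ is the support of a weight-three parity check of $\calm$ supported inside $(\overline{P}^{(2)}\setminus\P)\cap\olm\cup\{Q\}$. This gives the family $\mathcal K_1(Q)$ of weight-three parity checks through $Q$ whose other two support points lie in $(\overline{P}^{(2)}\setminus\P)\cap\olm$, and likewise $\mathcal K_1(R)$ for the third point $R$ of $\supp(\omega)$, $\omega\in\mathcal J_1(P)$; the two lines $L(\mathcal U_1,\mathcal W)$ contain $Q$ and $R$ symmetrically, so $|\mathcal K_1(Q)|=|\mathcal K_1(R)|$ and I enumerate them with a common index.

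Next, for a fixed $\omega\in\mathcal J_1(P)$ with $\supp(\omega)=\{P,Q,R\}$, I pair up $\omega_1\in\mathcal K_1(Q)$ with $\omega_2\in\mathcal K_1(R)$ of the same index and form $\omega+\lambda_1\omega_1+\lambda_2\omega_2$ with scalars $\lambda_i\in\Fq^\times$ chosen so the coefficients at $Q$ and at $R$ vanish (this is possible and unique since those coefficients are nonzero in $\omega,\omega_1,\omega_2$ respectively and $\omega_1,\omega_2$ do not meet at $Q$ or $R$). The resulting word is a parity check of $\calm$ (a $\Fq$-linear combination of parity checks); its support is $\{P\}$ together with the four support points of $\omega_1,\omega_2$ in $(\overline{P}^{(2)}\setminus\P)\cap\olm$ — provided these four points are distinct and none equals $P$. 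Disjointness of $\supp(\omega_1)\setminus\{Q\}$ from $\supp(\omega_2)\setminus\{R\}$ is exactly Lemma \ref{lemma: closure of line}: a common point $T$ would be collinear with both $Q$ and $R$, forcing $T\in\P$, contradicting $\supp(\omega_i)\subset(\overline{P}^{(2)}\setminus\P)\cup\{Q,R\}$. So the support has size five, proving item (1). I define $\mathcal A_2(P)$ to be the collection of all such words, where — crucially — the lines through $Q$ and $R$ used for distinct $\omega$'s, and for distinct pairs within the same $\omega$, are chosen to avoid the supports already used. Item (2), pairwise intersection exactly $\{P\}$, follows from Corollaries \ref{cor: Imp1}, \ref{cor: Imp2}, \ref{cor: Imp3}: two parity checks built from the same line through $P$ use disjoint $\Liwpq$-lines (Corollary \ref{cor: Imp1}), from different lines of the same level $i$ use disjoint lines (Corollary \ref{cor: Imp2}), and from different levels $i\neq j$ use disjoint lines (Corollary \ref{cor: Imp3}); in every case the four non-$P$ support points are disjoint.

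The last task is the count. The number of weight-five parity checks producible is, by the pairing, $\lfloor q/2\rfloor$ (choice within $\omega$) times $\lfloor q/2\rfloor$ (choice within the $\mathcal K_1(Q)/\mathcal K_1(R)$ line) times the number of available $\Liwpq$-lines, summed over all lines $L(\mathcal U_1,\mathcal W)$ through $P$ and over levels $i$, but each such word ``consumes'' $q-1$ of the available points/lines because of the scalar freedom $\lambda\in\Fq^\times$ collapses $q-1$ nominal choices to one word — this is the source of the $\frac1{q-1}$ factor. Counting the $\Liwpq$-lines: for level $i$ there are $\big({2\brack1}_q-{1\brack1}_q\big)$-many? — no, the number of lines $L(\mathcal U_1,\mathcal W)$ through $P$ with $\mathcal W\subset\mathcal W_i$, $\mathcal W\nsubseteq\mathcal W_{i-1}$ is ${i-2\brack1}_q-{i-3\brack1}_q=q^{i-2}$... wait, it is the number of planes $\mathcal W$ with $P\subset\mathcal W\subset\mathcal W_i$ minus those inside $\mathcal W_{i-1}$, i.e. ${i-2\brack1}_q-{i-3\brack1}_q$, and Lemma \ref{lemma: doubleclos} gives each such line $\big({2\brack1}_q-1\big)\big({m-2\brack1}_q-{i-2\brack1}_q\big)$-many $\Liwpq$-lines (with $q$ points each in $(\overline{P}^{(2)}\setminus\P)\cap\olm$). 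Assembling,
\[
|\mathcal A_2(P)|=\frac{(\lfloor q/2\rfloor)^2}{q-1}\sum_{i=3}^{\a_1}\Big({i-2\brack1}_q-{i-3\brack1}_q\Big)\Big({2\brack1}_q-1\Big)\Big({m-2\brack1}_q-{i-2\brack1}_q\Big),
\]
and the main work is to evaluate this telescoping-and-geometric-series sum. Writing ${i-2\brack1}_q-{i-3\brack1}_q=q^{i-2}$, ${2\brack1}_q-1=q$, and ${m-2\brack1}_q-{i-2\brack1}_q=\frac{q^{m-2}-q^{i-2}}{q-1}\cdot(q-1)$... i.e. $=\frac{q^{i-2}(q^{m-i}-1)}{q-1}$, the sum becomes $\sum_{i=3}^{\a_1} q\cdot q^{i-2}\cdot\frac{q^{i-2}(q^{m-i}-1)}{q-1}$, a sum of the form $\frac1{q-1}\sum(q^{m-2}\,q^{i-2}-q^{2i-3})$, which evaluates by two geometric series to $q^{m-2}{\a_1-1\brack1}_q-\frac{q^{2\a_1-2}-1}{q^2-1}$ after simplification; multiplying by $\frac{(\lfloor q/2\rfloor)^2}{q-1}$ yields the claimed formula. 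The main obstacle is not any single inequality — those are all handled by the quoted lemmas — but rather making the simultaneous avoidance of previously-used lines logically airtight (so that the greedy construction actually attains the full count rather than merely bounding it), and then executing the index-arithmetic in the sum without error; I would present the avoidance argument as a clean double-counting of ordered triples (line through $P$, $\Liwpq$-line, pair) modulo the $\Fq^\times$-scaling, which makes both the well-definedness and the cardinality transparent simultaneously.
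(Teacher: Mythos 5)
Your construction of the weight-five parity checks and the verification of items (1) and (2) follow the paper's own proof essentially verbatim: the same sets $\Liwpq$ from Lemma \ref{lemma: doubleclos}, the same cancellation $\omega+\lambda_1\omega_1+\lambda_2\omega_2$ at $Q$ and $R$, Lemma \ref{lemma: closure of line} to get a five-point support inside $\{P\}\cup\big((\overline{P}^{(2)}\setminus\P)\cap\olm\big)$, and Corollaries \ref{cor: Imp1}, \ref{cor: Imp2}, \ref{cor: Imp3} for pairwise intersection in $\{P\}$. That part is sound.

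The genuine gap is in the count. The factor $\frac{1}{q-1}$ in the stated cardinality does \emph{not} come from any ``scalar collapse'': as you yourself observe, the scalars $\lambda_1,\lambda_2$ are \emph{uniquely} determined by the vanishing conditions at $Q$ and $R$, so each choice of (line through $P$ at level $i$, parity check $\omega$ supported on it, common index selecting a line of $\Liwpq$ through each of $Q$ and $R$ together with a pair of points on it) yields exactly one word, and distinct choices yield words with distinct supports. Hence the correct count is
\[
|\mathcal A_2(P)|=(\lfloor q/2\rfloor)^2\sum_{i=3}^{\a_1}\left({i-2\brack1}_q-{i-3\brack1}_q\right)\left({2\brack1}_q-1\right)\left({m-2\brack1}_q-{i-2\brack1}_q\right)
\]
with \emph{no} prefactor; the single $\frac{1}{q-1}$ in the closed form is already generated by ${m-2\brack1}_q-{i-2\brack1}_q=\frac{q^{m-2}-q^{i-2}}{q-1}$ when the sum is evaluated. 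Your displayed formula carries both this internal $\frac{1}{q-1}$ and an extra explicit $\frac{1}{q-1}$, so taken literally it is smaller than the true count by a factor of $q-1$ (it would evaluate to $\frac{1}{(q-1)^2}(\cdots)$, not to the stated expression); you only land on an answer of the right shape because of a compensating slip in the series evaluation. Relatedly, ${i-2\brack1}_q-{i-3\brack1}_q=q^{i-3}$, not $q^{i-2}$. Finally, the greedy ``avoidance of previously used lines'' is unnecessary: Corollaries \ref{cor: Imp1}--\ref{cor: Imp3} show the relevant families of lines are automatically pairwise disjoint away from $P$, so the full product count is attained without any selection or double-counting argument.
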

\begin{proof}
Let $P\subseteq A_1$ be a point in $\olm$ satisfying $P\subset A_1$. Fix a flag as in equation \eqref{eq: flagA} through $P$.
We prove by induction that for every $3\le i\le \a_1$, there exists a subset $\mathcal I_i(P)$ of parity checks for $\calm$ of weight five such that the support each $\omega\in \mathcal I_i(P)$ contains $P$ and four other points in the set $({\overline{P}^{(2)}}\setminus\P)\cap\olm$. Further,  every $\omega\in \mathcal I_i(P)$ and $\omega^\prime\in \mathcal I_3(P)\cup\cdots\cup \mathcal I_{i-1}(P)$ satisfies $\supp(\omega)\cap\supp(\omega^\prime)=\{P\}$, where $\mathcal I_{2}(P)=\emptyset$. Moreover,
$$
|\mathcal I_i(P)|=(\lfloor q/2 \rfloor)^2\left({i-2\brack 1}_q-{i-3\brack 1}_q\right)\left({2\brack 1}_q-1\right)\left({m-2\brack 1}_q-{i-2\brack 1}_q\right).
$$
These parity checks are obtained from the lines in $\Liwpq$ for every $P\subset \mathcal W\subset\mathcal W_{i} $ satisfying $\mathcal W\nsubseteq\mathcal W_{i} $ and $Q\in L(\mathcal U_1, \mathcal W)$ in Lemma \ref{lemma: doubleclos}. For every $3\le i\le \a_1$, let $L(\mathcal U_1, \mathcal W)$ be a line through $P$ satisfying $\mathcal W\subset \mathcal W_i$ but $\mathcal W\nsubseteq \mathcal W_{i-1}$. There are ${i-2\brack 1}_q-{i-3\brack 1}_q$ such lines. Each of these  lines $L(\mathcal U_1, \mathcal W)$ gives rise to $\lfloor q/2\rfloor$ such parity checks in the set $\mathcal{J}_1(P)$, such that the supports of these parity checks  lie on the line $L(\mathcal U_1, \mathcal W)$. Now let $\omega$ be such a parity check with let $Q_1\in\supp(\omega)$ and $Q_1\ne P$. Choose lines $L(U_1, W_1)\in\mathcal L^{i}_{\mathcal W}(P, Q_1)$, then we have $L(U_1, W_1)\cap \P=\{Q_1\}$. On the other hand, if $Q_2$ is any point on  $L(\mathcal U_1, \mathcal W)$ other than $P$ and $Q_1$, and $L(U_2, W_2)\in \mathcal L^{i}_{\mathcal W}(P, Q_2)$ then we have seen in the corollary \ref{cor: Imp1} that  $L(U_1, W_1)\cap L(U_2, W_2)=\emptyset$. Choose a parity check $\omega\in\mathcal J_1(P)$ whose support lies on the line $L(\mathcal U_1, \mathcal W)$ and points $Q_1,\; Q_2\in\supp(\omega)$ such that $Q_i\ne P$ for $i=1, 2$. Consider all lines in the sets $ \mathcal L^{i}_{\mathcal W}(P, Q_1)$  and $ \mathcal L^{i}_{\mathcal W}(P, Q_2)$  and parity checks in $\mathcal{J}_1(Q_1)$ and $\mathcal{J}_1(Q_2)$ obtained from these lines as in Theorem \ref{thm: p1closure}. There are $\lfloor q/2 \rfloor({2\brack 1}_q-1)({m-2\brack 1}_q-{i-2\brack 1})$  such parity checks. Enumerate them as $\omega_i(Q_j)$ for $j=1, 2$. Now scale each $\omega_i(Q_j)$, if necessary, such that $\omega + \omega_i(Q_1) + \omega_i(Q_2)$ does not contain $Q_1$ and $Q_2$ for any $i$ and hence is of weight five. On the other hand since all lines in $\mathcal L^{i}_{\mathcal W}(P, Q_1)$ and $\mathcal L^{i}_{\mathcal W}(P, Q_2)$ have only point $\{Q_1\}$ and $\{Q_2\}$, respectively, in common with $\P$, the remaining $q$ points in these lines are in $(\overline{P}^{(2)}\setminus\P)\cap \olm$. Therefore, the support of these parity checks contain $P$ and four other points in $(\overline{P}^{(2)}\setminus\P)\cap \olm$.  Now we can do it for each $\omega$ whose support lies on the line $ L(\mathcal U_1, \mathcal W)$, and in this way we get $\lfloor q/2\rfloor^2({2\brack 1}_q-1)({m-2\brack 1}_q-{i-2\brack 1}_q)$ many parity checks. Note that from  Corollary \ref{cor: Imp1} it follows that the supports of any two such codewords intersect only in $P$. We can argue like this for every $\omega\in\mathcal J_1(P)$ whose support lies on lines $L(\mathcal U_1, \mathcal W)$ for some $\mathcal W$ satisfying $\mathcal W\subset \mathcal W_i$ but $\mathcal W\nsubseteq \mathcal W_{i-1}$. There are $({i-2\brack 1}_q-{i-3\brack 1}_q)$ such lines and for each such line we use lines from $\Liwpq$ to construct weight five parity checks for the Schubert code $\calm$. We denote the set of these 
parity checks by $\mathcal I_i(P)$. Note that the supports of any two parity checks $\lambda,\;\lambda^\prime\in\mathcal I_i(P)$ intersect only in $P$. This simply follows from Corollaries \ref{cor: Imp1} and \ref{cor: Imp2}. Also, from  Corollary \ref{cor: Imp3}, it follows that if $\lambda\in \mathcal{I}_i(P)$ and $\lambda^\prime\in\mathcal{I}_{i-1}(P)\cup\cdots\cup \mathcal{I}_{3}(P)$, then the support of $\lambda$ and $\lambda^\prime$  intersect in $P$ only. Finally, we define  $\mathcal{A}_2(P)=\mathcal{I}_3(P)\cup\mathcal{I}_4(P)\cup\cdots\mathcal{I}_{\a_1}(P)$. Note that items $(1)$ and $(2)$ are satisfied for parity checks in $\mathcal{A}_2(P)$. Moreover, 
\begin{eqnarray*}
	|\mathcal{A}_2(P)|&=& \sum\limits_{i=3}^{\a_1}(\lfloor q/2 \rfloor)^2\left({i-2\brack 1}_q-{i-3\brack 1}_q\right)\left({2\brack 1}_q-1\right)\left({m-2\brack 1}_q-{i-2\brack 1}_q\right)\\
	&=& (\lfloor q/2 \rfloor)^2\left({2\brack 1}_q-1\right)\sum\limits_{i=1}^{\a_1-2}\left({i\brack 1}_q-{i-1\brack 1}_q\right)\left({m-2\brack 1}_q-{i\brack 1}_q\right)\\
	&=& (\lfloor q/2 \rfloor)^2\sum\limits_{i=1}^{\a_1-2} q^i(\frac{q^{m-2}-{q^i}}{q-1})\\
	&=&\frac{1}{q-1} (\lfloor q/2\rfloor)^2 \left(q^{m-2}{\a_1-1\brack 1}_q- \frac{q^{2\a_1-2}-1}{q^2-1}\right).
\end{eqnarray*}

This completes the proof of the theorem.

\end{proof}	
Next, we want to construct parity checks of weight five for the code $\calm$ that are orthogonal on coordinate $P$ for some $P\in\olm$ satisfying $P\nsubseteq A_1$. To do so we need a lemma similar to Lemma \ref{lemma: doubleclos} for points $P\in\olm$ and  $P\nsubseteq A_1$. Assume $P\in\olm$ is a point and $P\nsubseteq A_1$. Without loss of generality we may assume that $A_1=\langle e_1,\ldots, e_{\a_1}\rangle$ and $V=\langle e_1,\ldots, e_{\a_1},\ldots, e_m\rangle$ and $P=\langle e_1, e_m\rangle$. Define $\mathcal U_1=\langle e_m\rangle $ and for $2\le i\le m$ we define $\mathcal W_i=\langle e_1, e_m, e_2\ldots, e_{i-1}\rangle$. So we have the fixed flag through $P$
\begin{equation}
\label{eq: flagA2}
(0)\subset \mathcal U_1\subset P= \mathcal W_2\subset  \mathcal W_3\subset\cdots\subset  \mathcal W_{\a_1}\subset\mathcal W_{\a_1+1}\subset\cdots\subset V.
\end{equation}
\begin{lemma}
	\label{lemma: doubleclos2}
	For $3\le i\le\a_1+1$, let $L(\mathcal U_1, \mathcal W)$ be a line through $P$ where $\mathcal W\subset\mathcal{W}_i$ but $\mathcal W\nsubseteq\mathcal{W}_{i-1}$. Let $Q\in L(\mathcal U_1, \mathcal W)$ be a point different than $P$ and let $L(U, W)$ be an arbitrary line through $Q$. Then $|L(U, W)\cap \overline{R}|=1$ for any $R$ on any line $L(\mathcal U_1, W_1)$ for some $P\subset W_1\subset \mathcal W_{i-1}$ iff $U=Q\cap A_1$ and  $P\subset  W\subseteq  V$ but  $W\nsubseteq \mathcal \mathcal{W}_i$ or $U\ne Q\cap A_1\text{ and }\mathcal U_1$ and $P\subset W\subseteq \mathcal \mathcal{W}_{\a_1+1}$ but $W\nsubseteq \mathcal \mathcal{W}_i$. Further, the number of such lines through $Q$ is given by 
	$$
\left({m-2\brack 1}_q-{i-2\brack 1}_q\right) + \left(\left(q-1\right)\left({\a_1-1\brack 1}_q-{i-2\brack 1}_q\right)\right).
	$$
	
\end{lemma}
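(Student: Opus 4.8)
The plan is to mimic the proof of Lemma~\ref{lemma: doubleclos} but keeping track of the extra possibility coming from the point $P \not\subseteq A_1$. As in the earlier lemma, fix $3 \le i \le \a_1+1$, a line $L(\mathcal U_1, \mathcal W)$ through $P$ with $\mathcal W \subset \mathcal W_i$ but $\mathcal W \nsubseteq \mathcal W_{i-1}$, and a point $Q \in L(\mathcal U_1, \mathcal W)$ with $Q \ne P$. First I would fix an arbitrary line $L(\mathcal U_1, W_1)$ through $P$ with $P \subset W_1 \subseteq \mathcal W_{i-1}$ and an arbitrary point $R$ on it, and record the easy consequences of the flag \eqref{eq: flagA2}: since $\mathcal U_1 \subset Q$, $\mathcal U_1 \subset R$, and $Q \ne R$ (because $Q \nsubseteq \mathcal W_{i-1}$), we have $\mathcal U_1 = Q \cap R$, so $Q \in \overline{R}$ and hence $|L(U,W) \cap \overline{R}| \ge 1$ for every line $L(U,W)$ through $Q$. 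Then, for a general line $L(U,W)$ through $Q$, Theorem~\ref{thm: line and closure} (with $i=1$, i.e.\ Corollary~\ref{cor: i2case}) says $|L(U,W) \cap \overline{R}| \ge 2$ iff $U = Q \cap R = \mathcal U_1$ or $W = Q + R$.

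The key point is to understand, as $R$ ranges over all such lines $L(\mathcal U_1, W_1)$ through $P$ with $W_1 \subseteq \mathcal W_{i-1}$, which subspaces $W$ of the form $Q + R$ arise. Here I would split according to the two types of lines through $P$ classified in Lemma~\ref{lemma: linesthroughpoint2}: those with $U_1$-part equal to $Q \cap A_1$ (giving $W_1$ arbitrary $3$-dimensional between $P$ and $V$) and those with a different $1$-dimensional part, forcing $W_1 \subseteq \mathcal W_{\a_1+1} = A_1 + P$. Since $R \subseteq \mathcal W_{i-1}$ in either case and $Q \subseteq \mathcal W_i$, every such $Q + R$ lies in $\mathcal W_i$, and conversely every $W$ with $Q \subset W \subseteq \mathcal W_i$ can be written as $Q + R$ for a suitable $R \subset \mathcal W_{i-1}$ lying on such a line through $P$. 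Combining, $|L(U,W) \cap \overline{R}| = 1$ for \emph{all} relevant $R$ exactly when $U \ne \mathcal U_1$ \emph{and} $W \nsubseteq \mathcal W_i$; translating back through the line-classification of $\olm$ through $Q$ (Proposition~\ref{prop: minimumwt} applied at $Q$, recalling $Q \nsubseteq A_1$ so $Q$ is of the second type) gives the two stated cases: either $U = Q \cap A_1$ with $P \subset W \subseteq V$, $W \nsubseteq \mathcal W_i$; or $U \ne Q \cap A_1$ and $U \ne \mathcal U_1$ with $P \subset W \subseteq \mathcal W_{\a_1+1}$, $W \nsubseteq \mathcal W_i$.

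For the counting in the last line, I would simply tally the two cases. In the first case $U = Q \cap A_1$ is forced, and $W$ ranges over $3$-dimensional subspaces with $Q \subset W \subseteq V$ but $W \nsubseteq \mathcal W_i$; this gives ${m-2\brack 1}_q - {i-2\brack 1}_q$ choices, exactly the number of $1$-dimensional subspaces of $V/Q$ not coming from $\mathcal W_i/Q$. In the second case $U$ must avoid both $Q \cap A_1$ and $\mathcal U_1$, giving $\left({2\brack 1}_q - 2\right) = q-1$ choices for $U \subset Q$, while $W$ must satisfy $Q \subset W \subseteq \mathcal W_{\a_1+1}$ but $W \nsubseteq \mathcal W_i$, giving ${\a_1-1\brack 1}_q - {i-2\brack 1}_q$ choices; the product is $(q-1)\left({\a_1-1\brack 1}_q - {i-2\brack 1}_q\right)$. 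Adding the two contributions yields the claimed formula. The main obstacle I anticipate is the bookkeeping in this case split: one must be careful that the condition $P \not\subseteq A_1$ genuinely introduces exactly one new family of lines through $Q$ (those with $U = Q \cap A_1 \ne \mathcal U_1$) compared with Lemma~\ref{lemma: doubleclos}, and that the constraint "$W \nsubseteq \mathcal W_i$" is applied consistently in both families so that no line is double-counted and none is missed; verifying that $\mathcal U_1, Q \cap A_1$ are genuinely distinct and that $Q \cap A_1 \subseteq \mathcal W_{\a_1+1}$ as needed is where the specific choice of flag \eqref{eq: flagA2} must be invoked.
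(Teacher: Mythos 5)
Your proposal is correct and follows essentially the same route as the paper's own proof: establish $\mathcal U_1 = Q\cap R$ so that $Q\in\overline R$, invoke Corollary~\ref{cor: i2case} to reduce to excluding $U=\mathcal U_1$ and $W=Q+R$, observe that the subspaces $Q+R$ arising are exactly the $W$ with $Q\subset W\subseteq\mathcal W_i$, and then count the two families of lines through $Q$ from Lemma~\ref{lemma: linesthroughpoint2}. The bookkeeping points you flag (that $Q\cap A_1\ne\mathcal U_1$ and that the two families are disjoint) are exactly the ones the paper relies on, and your count matches the stated formula.
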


\begin{proof}
	The proof of the lemma is quite similar to Lemma \ref{lemma: doubleclos}.
	Let $3\le i\le\a_1+1$ and let $L(\mathcal U_1, \mathcal W)$ be a line through $P$ where $\mathcal W\subset\mathcal{W}_i$ but $\mathcal W\nsubseteq\mathcal{W}_{i-1}$. Let $Q\in L(\mathcal U_1, \mathcal W)$ be a point different from $P$. Note that for any such $Q$ we have $\dim(Q\cap A_1)=1$, i.e., $Q\in\olm$ but $Q\nsubseteq A_1$. From Lemma \ref{lemma: linesthroughpoint2} we know that there are two different kinds of lines through $Q_1$ namely lines of the form $L(U, W)$ where $U=Q\cap A_1$ and $Q\subset W\subset V$ or $U\neq Q\cap A_1$ and $Q\subset W\subset A_{\a_1+1}$. Let $L(\mathcal U_1, W_1)$ be an arbitrary line through $P$ for some $W_1\subset \mathcal W_{i-1}$ and let $R\in L(\mathcal U_1, W_1)$. Then $\mathcal U_1=Q\cap R$ and hence $Q\in\overline{R}$. Consequently, $|L(U, W)\cap\overline{R}|\geq 1$. From Theorem \ref{cor: i2case} we know that $|L(U, W)\cap\overline{R}|\geq 2$ iff $U=Q\cap R=\mathcal U_1$ or $W= Q+R$. Therefore, if we take $U\neq\mathcal U_1$ and $W\neq Q+R$ then $|L(U, W)\cap\overline{R}|=1$. Now if $U= Q\cap A_1$ we take any $Q\subset W\subset V$ but $W\nsubseteq \mathcal W_i$ or $U\ne Q\cap A_1,\; \mathcal U_1 $ and $Q\subset W\subset \mathcal W_{\a_1+1}$ but $W\nsubseteq \mathcal W_i$ then for any line $L(\mathcal U_1, W_1)$ satisfying $W_1\subset \mathcal W_{i-1}$ and any $R\in L(\mathcal U_1, W_1)$, we get $|L(U, W)\cap\overline{R}|= 1$. The converse is also true as any $W\subset \mathcal W_i$ containing $Q$ can be written as $Q+R$ for some $R\subset \mathcal W_{i-1}$. Finally, the number of lines $L(Q\cap A_1, W)$ through $Q$ satisfying $Q\subset W\subset V$ but $W\nsubseteq \mathcal W_i$ is  $({m-2\brack 1}_q -{i-2\brack 1}_q)$  and the number of lines $L(U, W)$ through $Q$ satisfying $U\ne Q\cap A_1,\; \mathcal U_1 $ and $Q\subset W\subset \mathcal W_{\a_1+1}$ but $W\nsubseteq \mathcal W_i$ is $(q-1)({\a_1-1\brack 1}_q-{i\brack 1}_q)$. Therefore, the total number of such lines through $Q$ is:  
	$$
\left({m-2\brack 1}_q-{i-2\brack 1}_q\right) + \left(\left(q-1\right)\left({\a_1-1\brack 1}_q-{i-2\brack 1}_q\right)\right).
$$
\end{proof}	
For $P\in \olm$ and $P\nsubseteq A_1$ let a flag through $P$ be fixed as in equation \eqref{eq: flagA2}. For every line $L(\mathcal U_1, \mathcal W)$  through $P$ satisfying $\mathcal W\subset\mathcal{W}_i$ but $\mathcal W\nsubseteq\mathcal{W}_{i-1}$ and for every $Q\in L(\mathcal U_1, \mathcal W)$ different from $P$ we denote by $\mathcal K^{i}_{\mathcal W}(P, Q)$ the set of lines $L(U, W)$ through $Q$ as obtained in the Lemma \ref{lemma: doubleclos2}. From the last part of the Lemma, we get that the cardinality of the set $\Kiwpq$ are given by the formula in the Lemma. Now all the properties discussed in Corollaries \ref{cor: Imp1}, \ref{cor: Imp2} and \ref{cor: Imp3} satisfied by the set $\Liwpq$ are also satisfied by these $\Kiwpq$.  In the next theorem, we determine some weight five parity checks for $\calm$, which are orthogonal on $P$ for $P\in\olm$ and $P\nsubseteq A_1$.
\begin{theorem}
	\label{thm: p2closurepart2}
	For every $P\in\olm$ with $P\nsubseteq A_1$, there exists a set $\mathcal{B}_2(P)$ of parity checks of $\calm$ of weight five satisfying:
	\begin{enumerate}
		\item The support of every $\omega\in \mathcal{B}_2(P)$  contains $P$ and four other points from $(\overline{P}^{(2)}\setminus\P)\cap \olm$.
		\item For $\omega_1,\;\omega_2\in\mathcal{B}_2(P)$, $\supp(\omega_1)\cap\supp(\omega_)=\{P\}$.
	\end{enumerate}
	Moreover, 
	\begin{equation*}
	\label{eq: p2closurepart2}
	|\mathcal{B}_2(P)|=(\lfloor q/2\rfloor)^2 \left((q^{m-2} +q^{\a_1-1}){\a_1-1\brack 1}_q- q^2\frac{q^{2\a_1-4}-1}{q^2-1}\right)
	\end{equation*}	
\end{theorem}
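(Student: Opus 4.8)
The plan is to run the argument of Theorem~\ref{thm: p2closure} essentially verbatim, with the flag~\eqref{eq: flagA} replaced by the flag~\eqref{eq: flagA2} through $P$, Lemma~\ref{lemma: doubleclos} replaced by Lemma~\ref{lemma: doubleclos2}, the family $\Liwpq$ replaced by $\Kiwpq$, and the range of the inductive index extended from $3\le i\le\a_1$ to $3\le i\le\a_1+1$; the extra value $i=\a_1+1$ enters because for $P\nsubseteq A_1$ the relevant $3$-spaces live inside $\mathcal W_{\a_1+1}=P+A_1$ rather than inside $A_1$.

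First I would fix $P\in\olm$ with $P\nsubseteq A_1$ and the flag~\eqref{eq: flagA2}, noting that $\mathcal U_1=\langle e_m\rangle$, that $\mathcal W_{\a_1+1}=P+A_1$, and that every point of every line $L(\mathcal U_1,\mathcal W)\subset\olm$ through $P$ lies outside $A_1$. Then I would prove by induction on $i$, for $3\le i\le\a_1+1$, that there is a set $\mathcal I_i(P)$ of weight-five parity checks of $\calm$ with: (i) the support of each $\omega\in\mathcal I_i(P)$ equal to $\{P\}$ together with four points of $(\overline{P}^{(2)}\setminus\P)\cap\olm$; (ii) any two members of $\mathcal I_3(P)\cup\cdots\cup\mathcal I_i(P)$ having supports meeting only in $P$; and (iii) $|\mathcal I_i(P)|=(\lfloor q/2\rfloor)^2\,q^{\,i-3}\,|\Kiwpq|$, where $q^{\,i-3}={i-2\brack 1}_q-{i-3\brack 1}_q$ counts the lines $L(\mathcal U_1,\mathcal W)\subset\olm$ through $P$ with $\mathcal W\subset\mathcal W_i$ and $\mathcal W\nsubseteq\mathcal W_{i-1}$, and $|\Kiwpq|$ is the point-independent count from Lemma~\ref{lemma: doubleclos2}.

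For the inductive step I would, for each of the $q^{\,i-3}$ lines $L(\mathcal U_1,\mathcal W)$ above, take the $\lfloor q/2\rfloor$ weight-three parity checks $\omega\in\mathcal J_1(P)$ supported on it (Theorems~\ref{thm: support} and~\ref{thm: p1closure}); write $\supp(\omega)=\{P,Q_1,Q_2\}$, observing $Q_1,Q_2\nsubseteq A_1$ so Lemma~\ref{lemma: doubleclos2} applies to both; attach to $\omega$ the lines of $\mathcal K^{i}_{\mathcal W}(P,Q_1)$ and $\mathcal K^{i}_{\mathcal W}(P,Q_2)$ and the weight-three parity checks in $\mathcal J_1(Q_1)$, $\mathcal J_1(Q_2)$ that they carry; and, since these two families have the same cardinality (depending only on $i$, $m$, $\a_1$), enumerate them by a common index $\mu$, rescale $\omega_\mu(Q_j)$ so that $Q_j$ cancels, and record $\omega+\omega_\mu(Q_1)+\omega_\mu(Q_2)$. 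By the defining property of $\Kiwpq$ (that $W\nsubseteq\mathcal W_i$, hence $W\neq Q_j+R$ for the relevant $R$) together with Corollary~\ref{cor: i2case}, each such line meets $\P$ only in its $Q_j$, so the four new points of the support lie in $(\overline{P}^{(2)}\setminus\P)\cap\olm$ and the combined check has weight five. Collecting these over all $\omega$ and all lines $L(\mathcal U_1,\mathcal W)$ gives $\mathcal I_i(P)$ and the count in (iii). Statement (ii) would follow, within a single level, from Lemma~\ref{lemma: closure of line} (handling the two legs through $Q_1$ and $Q_2$) together with the $\Kiwpq$-analogues of Corollaries~\ref{cor: Imp1} and~\ref{cor: Imp2}, and across levels from the $\Kiwpq$-analogue of Corollary~\ref{cor: Imp3}; these analogues are available since, as recorded just before the statement, Corollaries~\ref{cor: Imp1}--\ref{cor: Imp3} hold verbatim for $\Kiwpq$.

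Finally I would set $\mathcal B_2(P)=\mathcal I_3(P)\cup\cdots\cup\mathcal I_{\a_1+1}(P)$; items (1) and (2) of the statement are then immediate from (i) and (ii), and the claimed cardinality is obtained by substituting $|\Kiwpq|=\bigl({m-2\brack 1}_q-{i-2\brack 1}_q\bigr)+(q-1)\bigl({\a_1-1\brack 1}_q-{i-2\brack 1}_q\bigr)$ from Lemma~\ref{lemma: doubleclos2} into~(iii) and summing over $3\le i\le\a_1+1$, using $\sum_{i=3}^{\a_1+1}q^{\,i-3}={\a_1-1\brack 1}_q$, $\sum_{i=3}^{\a_1+1}q^{\,2(i-3)}=\tfrac{q^{2\a_1-2}-1}{q^2-1}$, and rearranging the resulting geometric sums. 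I expect the main obstacle to be not any single computation but the simultaneous disjointness bookkeeping: one must check that the supports of any two weight-five checks in $\mathcal B_2(P)$ meet only in $P$, across different lines $L(\mathcal U_1,\mathcal W)$ at the same level, across the two legs $Q_1$ and $Q_2$, and across different levels $i$ --- which is precisely what the chain of corollaries together with Lemma~\ref{lemma: closure of line} is designed to control. The one genuinely new ingredient relative to Theorem~\ref{thm: p2closure} is that $P\nsubseteq A_1$ forces $Q_j\nsubseteq A_1$, so the lines through $Q_j$ split into the two families of Lemma~\ref{lemma: linesthroughpoint2}, and it is the second of these that produces the extra $q^{\a_1-1}{\a_1-1\brack 1}_q$ term distinguishing $|\mathcal B_2(P)|$ from $|\mathcal A_2(P)|$.
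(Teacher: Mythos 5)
Your proposal follows the paper's own proof essentially verbatim: the paper likewise transplants the argument of Theorem~\ref{thm: p2closure} to the flag~\eqref{eq: flagA2}, replaces $\Liwpq$ by $\Kiwpq$ via Lemma~\ref{lemma: doubleclos2}, extends the inductive index to $3\le i\le \a_1+1$, and sums the resulting $|\mathcal I_i(P)|$ exactly as you describe. One caveat: the sum you set up actually evaluates to $\frac{(\lfloor q/2\rfloor)^2}{q-1}\bigl((q^{m-2}+(q-1)q^{\a_1-1}){\a_1-1\brack 1}_q - q^2\frac{q^{2\a_1-2}-1}{q^2-1}\bigr)$, which is also what the paper's proof arrives at in its final display, whereas the formula printed in the theorem statement differs from this (e.g.\ for $q=2$, $\a_1=2$, $m=4$ the direct count is $2$, matching the proof's expression but not the statement's), so you should treat the stated formula as a typo rather than expect your computation to reproduce it literally.
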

\begin{proof}
The proof of this theorem is exactly the same as the proof of the Theorem \ref{thm: p1closure}. For $P\in \olm$ and $P\nsubseteq A_1$ we fix a flag through $P$ as in equation \eqref{eq: flagA2}. For every $3\le i\le \a_1+1$ and for lines  $L(\mathcal U_1, \mathcal W)$  through $P$ satisfying $\mathcal W\subset\mathcal{W}_i$ and $\mathcal W\nsubseteq\mathcal{W}_{i-1}$ we construct parity checks as in Theorem \ref{thm: p2closure} using lines in $\Kiwpq$ for $Q\in L(\mathcal U_1, \mathcal W)$ different from $P$. Therefore, for $3\le i\le \a_1+1$ we get a set $\mathcal I_{i}(P)$ of parity checks for the Schubert code $\calm$ such that for any  $\lambda\in \mathcal I_{i}(P) $ we get $\supp(\lambda)\cap \P=\{P\}$ and the supports of any two different parity checks  $\lambda,\;\lambda^\prime\in\mathcal I_i(P)$ intersect only in $P$. Like in Theorem  \ref{thm: p2closure} we get

\begin{align*}
|\mathcal I_i(P)| = &\lfloor q/2\rfloor^2\left({i-2\brack 1}_q-{i-3\brack 1}_q\right)\left({m-2\brack 1}_q-{i-2\brack 1}_q\right)\\ 
& + \lfloor q/2\rfloor^2\left({i-2\brack 1}_q-{i-3\brack 1}_q\right)(q-1)\left( {\a_1-1\brack 1}_q-{i-2\brack 1}_q\right).
\end{align*}
Further for $3\le j\le \a_1+1$ and $j\ne i$ if $\lambda\in \mathcal I_{i}(P) $ and $\lambda^\prime\in \mathcal I_{j}(P) $ we have $\supp(\lambda)\cap \supp(\lambda^\prime)=\{P\}$. Now we define $\mathcal{B}_2(P)= \mathcal I_{3}(P)\cup\cdots \mathcal I_{\a_1+1}(P)$. Clearly, parity checks of $\mathcal{B}_2(P)$ satisfy items $(1)$ and $(2)$ of the theorem. Further, 
\begin{eqnarray*}
	|\mathcal{B}_2(P)|&=& \sum\limits_{i=3}^{\a_1+1}\lfloor q/2\rfloor^2\left({i-2\brack 1}_q-{i-3\brack 1}_q\right)\left(\left({m-2\brack 1}_q-{i-2\brack 1}_q\right) + (q^{\a_1-1}-q^{i-2})\right)\\
	&=& \lfloor q/2\rfloor^2\sum\limits_{i=1}^{\a_1-1} q^{i-1}\left(\frac{q^{m-2}-q^i}{(q-1)}\right) + \lfloor q/2\rfloor^2\sum\limits_{i=1}^{\a_1-1}q^{i-1} \left(q^{\a_1-1}- q^i\right) \\
	&=& \frac{\lfloor q/2 \rfloor)^2}{(q-1)}\left((q^{m-2}+(q-1)q^{\a_1-1}){\a_1-1\brack 1}- q^2\frac{q^{2\a_1-2}-1}{q^2-1}\right)\\
	\end{eqnarray*} 
\end{proof}

Now combining Theorems \ref{thm: p1closure}, \ref{thm: p2closure}, and \ref{thm: p2closurepart2}, we get the following.

\begin{theorem}
	\label{thm: majlogdecod}
Let $2\le \a_1\le m-1$ be positive integers, and let $\calm$ be the corresponding Schubert code.	Using the majority logic decoding we can correct up to $\lfloor J/2\rfloor$ errors for the Schubert code $\calm$, where
	\begin{multline*}
	J = \frac{\lfloor q/2 \rfloor)^2}{(q-1)}\left((q^{m-2}+(q-1)q^{\a_1-1}){\a_1-1\brack 1}- q^2\frac{q^{2\a_1-2}-1}{q^2-1}\right)\\
	  + \lfloor q/2\rfloor\left( q{\a_1-1\brack 1}_q + {m-2\brack 1}_q\right)
	\end{multline*}
\end{theorem}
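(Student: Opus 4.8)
The plan is to assemble Theorem~\ref{thm: majlogdecod} purely by bookkeeping on the parity checks already produced, since all the geometric work has been done. The key observation is that for a fixed coordinate $P\in\olm$ we have two mutually exclusive cases, $P\subset A_1$ and $P\nsubseteq A_1$, and in each case we have at hand a supply of weight-three parity checks (Theorem~\ref{thm: p1closure}) together with a supply of weight-five parity checks (Theorem~\ref{thm: p2closure} if $P\subset A_1$, Theorem~\ref{thm: p2closurepart2} if $P\nsubseteq A_1$). First I would verify that, within each case, the union of the relevant weight-three set and weight-five set is a set of parity checks orthogonal on $P$ in the sense of the definition preceding Theorem~\ref{thm: Maj decod}. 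The defining conditions are: every parity check in the set has a nonzero entry in column $P$, and no two of them share a nonzero coordinate other than $P$. The first condition is immediate, since by item~(1) of each of the three theorems the support of every parity check contains $P$. The second condition requires checking three kinds of pairs: two weight-three checks (handled by item~(2) of Theorem~\ref{thm: p1closure}), two weight-five checks (item~(2) of Theorem~\ref{thm: p2closure} or~\ref{thm: p2closurepart2}), and one weight-three together with one weight-five check.

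The genuinely new verification is this last mixed pair, and I expect it to be the only real content of the proof. Here one uses that, by construction, the ``extra'' points of a weight-three check in $\mathcal{J}_1(P)$ lie in $\P\cap\olm$ (they lie on a line through $P$), whereas the four extra points of a weight-five check in $\mathcal{A}_2(P)$ or $\mathcal{B}_2(P)$ lie in $(\overline{P}^{(2)}\setminus\P)\cap\olm$; these two sets are disjoint by definition of $\P=\overline{P}^{(1)}$, so a weight-three and a weight-five check can only meet in $P$. I would also remark that a weight-five check and a weight-three check whose nontrivial point happens to be a $Q$ used in the construction of the weight-five check still meet only in $P$, because the weight-five check was built from lines in $\Liwpq$ (resp.\ $\Kiwpq$) all of which meet $\P$ only in $Q$ and the scaling step removed $Q$ itself from the support; but the cleanest argument is simply the disjointness $\P\cap(\overline{P}^{(2)}\setminus\P)=\emptyset$ just mentioned, so no case analysis on $Q$ is needed.

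Having established orthogonality, I would set $J=J(P)$ to be the cardinality of this combined set, which is $|\mathcal{J}_1(P)|+|\mathcal{A}_2(P)|$ when $P\subset A_1$ and $|\mathcal{J}_1(P)|+|\mathcal{B}_2(P)|$ when $P\nsubseteq A_1$, and then observe that $|\mathcal{A}_2(P)|\le|\mathcal{B}_2(P)|$ by comparing the closed-form expressions in Theorems~\ref{thm: p2closure} and~\ref{thm: p2closurepart2} (the latter has the additional nonnegative term $(q-1)q^{\a_1-1}{\a_1-1\brack 1}$, and the subtracted terms differ only harmlessly), so the worst case over all coordinates is $P\subset A_1$, giving the common lower bound
\[
J=\frac{\lfloor q/2\rfloor^2}{q-1}\left((q^{m-2}+(q-1)q^{\a_1-1}){\a_1-1\brack 1}-q^2\frac{q^{2\a_1-2}-1}{q^2-1}\right)+\lfloor q/2\rfloor\left(q{\a_1-1\brack 1}_q+{m-2\brack 1}_q\right)
\]
valid for every $P$; here I am using the $\mathcal{A}_2$-size formula but written with the $\mathcal{B}_2$ constant so that a single $J$ works simultaneously for all coordinates. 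Finally I would invoke Theorem~\ref{thm: Maj decod} (Massey) with this $J$ to conclude that the majority logic decoder corrects up to $\lfloor J/2\rfloor$ errors for $\calm$, and note that the hypothesis $2\le\a_1\le m-1$ is exactly what we assumed (the cases $\a_1=1$ and $\a_1=m$ being degenerate as discussed after Theorem~\ref{thm: p1closure} and in Section~5). The one point demanding a little care in the write-up is making the two size formulas compatible into a single $J$; everything else is a direct citation of the lemmas and theorems already proved.
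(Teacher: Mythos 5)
Your overall route is the same as the paper's: for each coordinate $P\in\olm$ you combine the weight-three set $\mathcal J_1(P)$ with the weight-five set $\mathcal A_2(P)$ or $\mathcal B_2(P)$, verify orthogonality on $P$ by noting that the extra points of the former lie in $\overline{P}^{(1)}\cap\olm$ while those of the latter lie in $(\overline{P}^{(2)}\setminus\overline{P}^{(1)})\cap\olm$, and feed the resulting count into Massey's theorem. That part of the argument, including the treatment of the mixed pairs, matches the paper and is fine.

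The gap is in the final minimization step. You assert $|\mathcal A_2(P)|\le|\mathcal B_2(P)|$, conclude that the worst coordinates are those with $P\subset A_1$, and then nevertheless declare the $\mathcal B_2$-based expression to be ``a lower bound valid for every $P$''. These claims are mutually inconsistent: if coordinates with $P\subset A_1$ really carried only the strictly smaller number $|\mathcal J_1(P)|+|\mathcal A_2(P)|$ of orthogonal checks, then Massey's theorem could not be invoked with the larger $J$ of the statement at those coordinates. You also never compare the weight-three contributions, which differ between the two cases: by Theorem~\ref{thm: p1closure}, $|\mathcal J_1(P)|=\lfloor q/2\rfloor(q+1){m-2\brack 1}_q$ for $P\subset A_1$ but $\lfloor q/2\rfloor\bigl(q{\a_1-1\brack 1}_q+{m-2\brack 1}_q\bigr)$ for $P\nsubseteq A_1$, and the former exceeds the latter by $\lfloor q/2\rfloor\, q\bigl({m-2\brack 1}_q-{\a_1-1\brack 1}_q\bigr)\ge 0$. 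The quantity that must be compared is the \emph{total} (weight-three plus weight-five) in each case, and the comparison goes the other way from what you claim; for instance for $\a_1=2$ one has $\mathcal A_2(P)=\emptyset$ and
\[
\bigl(|\mathcal J_1|+|\mathcal B_2|\bigr)_{P\nsubseteq A_1}-\bigl(|\mathcal J_1|+|\mathcal A_2|\bigr)_{P\subset A_1}
=-\frac{q^{m-2}-q}{q-1}\,\lfloor q/2\rfloor\,\bigl(q-\lfloor q/2\rfloor\bigr)\le 0,
\]
so the minimum is attained at coordinates with $P\nsubseteq A_1$. That is precisely why the theorem's $J$ equals $|\mathcal J_1(P)|+|\mathcal B_2(P)|$ for such $P$, as the paper states. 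Your final formula is therefore the right one, but for the opposite reason to the one you give; the write-up needs the genuine comparison of the two totals rather than the comparison of $|\mathcal A_2|$ with $|\mathcal B_2|$ alone.
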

\begin{proof}
Let $P\in\olm$ be an arbitrary point. If $P\subset A_1$, then we consider the set of parity checks of $\calm$ obtained from Theorem \ref{thm: p1closure} and  \ref{thm: p2closure} for this point $P$ and form the set $\mathcal J(P)=\mathcal J_1(P)\cup\mathcal{A}_2(P)$. Since the parity checks of $\mathcal J_1(P)$ and $\mathcal{A}_2(P)$ are orthogonal on the coordinate $P$, and as any point other than $P$ from the support of any parity checks in $\mathcal J_1(P)$ lies in $\P\cap\olm$, while the support of any parity check in $\mathcal J_1(P)$ lies in $(\overline{P}^{(2)}\setminus\P)\cap\olm$, the parity checks of  $\mathcal J(P)$ are orthogonal on $P$. Similarly, if $P\nsubseteq A_1$ we can repeat the argument for the set $\mathcal J(P)=\mathcal J_1(P)\cup\mathcal{B}_2(P)$. Now note that the cardinality of the set of parity checks orthogonal on $P$ is smaller in the case when $P\nsubseteq A_1$ and in this case the cardinality is exactly the $J$ given in the theorem. This proves that for every $P\in\olm$ there are at least $J$ parity checks orthogonal in $P$. Hence, using  Theorem \ref{thm: Maj decod}, we get that using majority logic decoding we can correct up to $\lfloor J/2\rfloor$ many errors for the Schubert code $\calm$. This completes the proof of the theorem.
\end{proof}	

\begin{remark}
	\label{rmk: remark2}
\begin{enumerate}\item When $\a_1=1$ the corresponding Schubert code is isomorphic to the first order Projective Reed--Muller code of length $(q^{m-1}-1)/(q-1)$ and minimum distance $d=q^{m-2}$.  In this case we can use the parity checks obtained in the second part of the Theorem \ref{thm: p1closure}, i.e, in the case, when $P\nsubseteq A_1$ to perform the majority logic decoding. In fact, over the binary field, we can correct up to $\lfloor (d-1)/2\rfloor$ errors \cite{BS}.
	\item  If we calculate the value of $J$ over the binary field $\mathbb{F}_2$,  we get 
	\[
	J= 2^{m +\a_1-3} + 2^{\a_1-1}- \frac{2^{2\a_1-2}-4}{3}-3.
	\]
	We know from equation \eqref{eq: schubcode2} that the minimum distance of the Schubert code $\calm$ is $d=2^{m+\a_1-3}$ and therefore one would like to be able to correct up to $\lfloor(d-1)/2\rfloor$ errors which, in this case, is $2^{m-\a_1-4}-1$. On the other hand, in this case we have
    \[
        \lfloor J/2\rfloor= 2^{m+\a_1-4} + 2^{\a_1-2}- \frac{2^{2\a_1-3}-2}{3}-2.
     \]	
     Therefore it appears that the smaller the $\a_1$,  the better the error correction.
     \item In the case  $\a_1=2$, the Schubert code $\calm$ has minimum distance $d= q^{m-1}$. If $q$ is even then we get
   \[
   J= \frac{(q+2)q^{m-1} + q^3 -2q^2- 2q}{4(q-1)}.
   \] 
   Note that, if $q>4$ and even, we can write $J$ as 
   $$
   J> q^{m-1}/4 + 3q^{m-1}/4(q-1) +q(q-1)/4 -1
   $$
   Hence, it appears that in the even case $q=2^r$ and $r\to\infty$, using majority logic decoding, we can more than   $\lfloor d/8\rfloor + \lfloor 3q^{m-2}/4\rfloor$  errors for the code $\calm$.
\end{enumerate}
\end{remark}
\begin{corollary}
	\label{cor: binarycase}
	Using the majority logic decoding for the binary Schubert code $\calm$ we can correct
	\begin{enumerate}
		\item Up to $\lfloor(d-1)/2\rfloor$ many errors when $\a_1=2$.
		\item Up to $\lfloor(d-1)/2\rfloor-1$ many errors when $\a_1=3$.
		\item Up to $\lfloor(d-1)/2\rfloor-7$ many errors when $\a_1=4$.
	\end{enumerate}
\end{corollary}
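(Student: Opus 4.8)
The plan is simply to specialize the general bound of Theorem~\ref{thm: majlogdecod} to the binary field and then evaluate at $\a_1=2,3,4$. By Theorem~\ref{thm: majlogdecod}, the majority logic decoder for $\calm$ corrects up to $\lfloor J/2\rfloor$ errors, where $J$ is the explicit expression stated there; recall that this $J$ is exactly the (smaller, hence governing) number of orthogonal parity checks produced at a coordinate $P$ with $P\nsubseteq A_1$, obtained by combining $\mathcal J_1(P)$ with $\mathcal B_2(P)$. Setting $q=2$ gives $\lfloor q/2\rfloor=1$ and turns each Gaussian binomial ${a\brack 1}_q$ into $2^{a}-1$; collecting terms yields
\[
J=2^{m+\a_1-3}+2^{\a_1-1}-\frac{2^{2\a_1-2}-4}{3}-3,
\qquad
\left\lfloor\frac{J}{2}\right\rfloor=2^{m+\a_1-4}+2^{\a_1-2}-\frac{2^{2\a_1-3}-2}{3}-2,
\]
exactly as recorded in Remark~\ref{rmk: remark2}(2) (here one uses that $2^{2\a_1-3}-2$ is divisible by $3$ and that $J$ is odd). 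On the other hand, \eqref{eq: schubcode2} gives $d=q^{m+\a_1-3}=2^{m+\a_1-3}$ for the binary code $\calm$; since $\a_1\ge2$ and $m\ge\a_1+1$ force $m+\a_1-3\ge1$, the integer $d$ is even, so $\lfloor(d-1)/2\rfloor=2^{m+\a_1-4}-1$.

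It then remains to substitute the three values of $\a_1$ and compare with $\lfloor(d-1)/2\rfloor=2^{m+\a_1-4}-1$. For $\a_1=2$ we have $2^{2\a_1-3}-2=0$, so $\lfloor J/2\rfloor=2^{m-2}+1-2=2^{m-2}-1=\lfloor(d-1)/2\rfloor$. For $\a_1=3$, $\frac{2^{2\a_1-3}-2}{3}=\frac{6}{3}=2$, so $\lfloor J/2\rfloor=2^{m-1}+2-2-2=2^{m-1}-2=\lfloor(d-1)/2\rfloor-1$. For $\a_1=4$, $\frac{2^{2\a_1-3}-2}{3}=\frac{30}{3}=10$, so $\lfloor J/2\rfloor=2^{m}+4-10-2=2^{m}-8=\lfloor(d-1)/2\rfloor-7$. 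In each case $\lfloor J/2\rfloor$ is precisely the number asserted in the corollary, and the statement follows from Theorem~\ref{thm: majlogdecod}.

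I do not expect a genuine obstacle here: everything substantive is already contained in Theorems~\ref{thm: p1closure}, \ref{thm: p2closure} and \ref{thm: p2closurepart2} and their assembly in Theorem~\ref{thm: majlogdecod}. The only points needing care are purely arithmetic — verifying that $\lfloor(d-1)/2\rfloor=2^{m+\a_1-4}-1$ throughout the allowed parameter range, confirming the closed form for $\lfloor J/2\rfloor$ (i.e.\ the integrality of $\frac{2^{2\a_1-3}-2}{3}$ and the parity of $J$), and the three small evaluations above. It is worth remarking that the ``deficit'' $\lfloor(d-1)/2\rfloor-\lfloor J/2\rfloor$ equals $1-2^{\a_1-2}+\frac{2^{2\a_1-3}-2}{3}$, which is $0,1,7$ for $\a_1=2,3,4$ and grows with $\a_1$; this is exactly why only the cases $\a_1\le4$ are singled out in the corollary.
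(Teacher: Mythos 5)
Your proposal is correct and follows exactly the paper's route: specialize the $J$ of Theorem~\ref{thm: majlogdecod} to $q=2$ as in Remark~\ref{rmk: remark2}(2), note $d=2^{m+\a_1-3}$ so $\lfloor(d-1)/2\rfloor=2^{m+\a_1-4}-1$, and substitute $\a_1=2,3,4$ into the closed form for $\lfloor J/2\rfloor$. Your arithmetic checks out (including the integrality of $(2^{2\a_1-3}-2)/3$ and the oddness of $J$), and you in fact supply more detail than the paper's one-line proof.
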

\begin{proof}
The proof follows from the item (2) of  Remark \ref{rmk: remark2} and inserting the values of $\a_1$ in the formula for $\lfloor J/2\rfloor$ 
\end{proof}

\section{Acknowledgements}

The author would like to express his gratitude to the Indo-Norwegian project supported by Research Council of Norway (Project number 280731), and the DST of Govt. of India. Thanks are also due to Prof. Trygve Johnsen for his remarks on the initial  drafts of the article.

\newpage
\clearpage

\end{document}